\newcommand{\ubar}[1]{\text{\b{$#1$}}}
\newtheorem{theorem}{Theorem}
\newtheorem{assumption}{Assumption}
\newtheorem{corollary}{Corollary}
\newtheorem{lemma}{Lemma}
\newtheorem{proposition}{Proposition}
\newenvironment{proof}[1][Proof]{\textbf{#1.} }{\ \rule{0.5em}{0.5em}}
\title{Maximal Procurement under a Budget\thanks{Random author order verifiable with record locator qJFy0RKB-tdM at \\
\texttt{https://www.aeaweb.org/journals/policies/random-author-order/search}. We would like to thank Dirk Bergemann, Peter Crampton, Laura Doval, Tan Gan, Johannes H\"orner, Yingkai Li, Bart Lipman, Michael Ostrovsky, Marzena Rostek, Kai Hao Yang, Jidong Zhou, and participants at the NBER Market Design working group and Yale microeconomic theory lunch seminar and breakfasts for helpful comments and discussions.  A previous version of this paper was titled ``Maximizing the Effect of Altruism.''}}
\author{  Nicole Immorlica\thanks{Microsoft Research} \and \textcircled{r}\and Nicholas Wu \thanks{Department of Economics, Yale University, New Haven, CT 06511,
nick.wu@yale.edu } \and \textcircled{r}\and Brendan Lucier \footnotemark[2]}
\begin{document}
\begin{titlepage}

\maketitle

\begin{abstract}
We study the problem of a principal who wants to influence an agent's observable action, subject to an ex-post budget. The agent has a private type determining their cost function. This paper endogenizes the value of the resource driving incentives, which holds no inherent value but is restricted by finite availability. We characterize the optimal mechanism, showing the emergence of a pooling region where the budget constraint binds for low-cost types. We then introduce a linear value for the transferable resource; as the principal's value increases, the mechanism demands more from agents with binding budget constraint but less from others. 
\end{abstract}

\end{titlepage}

\section{Introduction}


In many natural scenarios, an ex-post budget constraint arises. In many institutional settings, the allocation of funds is separate from their usage; one government body might determine the amount of funding to go towards a certain goal, while a different government agency oversees the usage of those funds. In other situations, the resource being used to incentivize an agent can be non-monetary; with prizes or prestige, the agent naturally has a bounded value for the prize, and the principal's value for the prize is endogenous and determined by the mechanism.


We study the strategic interaction between a principal, endowed with a limited amount of a transferable resource, who attempts to influence the action of an agent with private information about its own costs. The scenario differs from the standard mechanism design literature in two important ways. Firstly, the principal is budget-constrained; that is, the principal cannot spend more than an upper bound regardless of the private information of the firm. Secondly, the value of the resource is endogenous to its ability to influence the agent; the principal has no value for the budget themselves, and only cares about its impact through the agent. This model is sufficiently general to also characterize many related design problems; we provide additional examples in Section \ref{sec:model}.

In this paper, we extend the mechanism design and delegation literature by proposing a model for providing incentives under a budget constraint. We show that the optimal solution is implementable as a subsidy schedule, where the principal compensates the agent based on the observed change in production. We show that the schedule partitions the agent types into more and less efficient types, offering the entire budget for types that meet a threshold and discriminating on less efficient types. 

Intuitively, the primary concern for output induces a fully \textit{endogenous} value of money for the principal. The principal's schedule can be understood as selecting some threshold type and splitting the agent types. All the more efficient agents are pooled; they are required to satisfy a single output threshold and given the entire budget. The more efficient agent types determine the shadow value of money; the less efficient agents face a schedule that optimizes with respect to this shadow value. The threshold trades off a welfare loss from withholding the budget from the less efficient agents with information rents it must pay the more efficient types. 
A positive measure of the most efficient types are always pooled, even as the budget increases.  To illustrate this point starkly, we consider agent cost functions that satisfy a separability condition. We show that under this separability condition, the shape of the schedule is invariant to the size of the budget. In other words, no matter how large the budget is, the threshold for the pooling region never changes.

We extend the model to allow a competitive ``outside option'' for the principal; that is, the unused residual budget has a linear value. 
We show that provided the value of the budget is not high, the same pooling region of the optimal schedule persists. The linear value case highlights the importance of the threshold type; as the principal's value for the resource increases, the principal asks more from types more efficient than the threshold, and asks less from types less efficient than the threshold. We can extend our methodology to a broader class allowing nonlinear, convex budget costs as well. However, the ex post budget constraint feature is critical to the results; the pooling region does not arise if the budget only needs to be satisfied ex ante. 

We also extend the model to allow competition between $N$ agents vying to contract with the principal.  The principal can choose at most one of the agents to contract with based on their reported types.  This competition between agents relaxes the incentive constraints on the principal's schedule, as the selection rule gives agents an additional incentive to declare low costs.  We characterize the optimal transfer schedule for a principal who contracts with the best declared agent type.  We show that, unlike in the single-agent case, the optimal schedule may not be monotone under our ex post budget constraint.

The paper is structured as follows. In the next subsection, we preview the related literature in mechanism design and delegation and distinguish this particular situation from similar work. Section~\ref{sec:model}  details the model and assumption. Section~\ref{sec:general} formalizes the general mechanism design problem and highlights the important features of the model. Section~\ref{sec:linear} discusses the model with a linear value for the resource. Section~\ref{sec:extensions} characterizes other variants of the model including our extension to multiple agents, and Section~\ref{sec:conclusion} concludes.

\subsection{Literature}
Our model is related to the literature on contract design in settings with moral hazard and private actions going back to \cite{gh83} and \cite{h79}. In this line of work, a principal contracts with an agent to take a costly action that is not directly observable, but that results in an observable outcome that is payoff-relevant to the principal and upon which payments can be contingent. 
In contrast to this line of work, in our model the principal can observe the outcome (i.e., production) but has uncertainty over the cost function of the agent. Our work also introduces an ex-post constraint on transfers.

Additionally, our work is related to literature on monopoly regulation. Closely related work to ours is that of \citet{bm82}, which studies a social planner problem for regulating a monopolist, and \cite{lt86}, which introduces noisy cost observations. However, our work is distinct in multiple ways. Firstly, our environment considers a budget-constrained principal, whereas there are no constraints on transfers in the other models in the literature. Additionally, the only objective for the principal in our scenario is to maximize production, where as the principal in the monopoly regulation environment typically maximizes a weighted sum of consumer surplus and producer profits. There are other minor distinctions in our model; we allow for a broad class of cost functions for the agent, whereas the literature typically assumes particular functional forms.

Given that our work involves nonstandard value for transfers, we complement the literature on delegation. The original delegation framework was formulated to \cite{holmstrom1980theory}, which introduced the general class of delegation problems and provides conditions for existence of a solution. \cite{holmstrom1980theory} and follow-up delegation models \citep{am08,ab13,abf18,hy20} typically assume that contingent transfers are unavailable to the principal, and instead the principal retains the right to take a state-dependent action that impacts utilities of both the principal and agent. In contrast to this line of literature, our model does allow for contingent transfers between the principal and agent, but in a special way; the transfers do not factor into the principal objective and face a budget constraint. 

Given that the principal has no intrinsic value for the budget herself, our problem relates to the literature on budget-feasible mechanism design, initiated by \citet{singer2010budget}.  That paper studies a procurement problem where the designer wishes to purchase a utility-maximizing set items from sellers subject to a budget constraint.  A large literature followed, exploring the approximately optimal design under different combinatorial valuation functions for the principal and varying information structures and solution concepts (see, e.g., \cite{amanatidis2019budget,balkanski2022deterministic,bei2012budget,gravin2020optimal,leonardi2021budget} and references therein). Our work is differentiated from this literature in that the principal is maximizing the variable action of a single agent rather than maximizing over a combinatorial set of items, and that agent has a convex cost of procurement. 

Also related is the literature on contest design.  This literature studies a setting in which a designer with a fixed budget wishes to solicit submissions to a contest.  The designer can split the budget among potentially multiple prizes.  The objective can be to optimize the aggregate quality of submissions \citep{archak2009optimal,moldovanu2001optimal}, or the quality of the best submission \citep{chawla2019optimal,che2003optimal,ghosh2016optimal,moldovanu2006contest}.  Contests with multiple agents is reminiscent of the extension of our problem to multiple firms, discussed in Section~\ref{sec:extensions}.  In the contest design literature, agents work to potentially receive a prize from a set of prizes, with fixed utility values. In contrast, our work can be interpreted in the lens of contest design but providing incentives using a single perfectly divisible prize.

Methodologically, our paper uses optimal control techniques and arguments similar to \cite{lt86} and \cite{l03}. Our design solution shares similar qualitative features as the hidden-information case of \cite{l03}, but the problems are different in a substantial way; because the control problem in \cite{l03} arises from a dynamic relationship, the constraint on the agent's action schedule in the design problem is bounded above by the value of the design problem and a term depending on the discount factor, whereas in our problem, the constraint on the agent's action schedule is instead given by an exogenous budget. This difference yields important distinctions in the outcome; in \cite{l03}, the first-best is sometimes possible, whereas it is never possible for a non-degenerate distribution of types in our case. 

\section{Model}
\label{sec:model}
We first lay out the formalism for the model, and then discuss applications at the end of this section. Consider a principal-agent design problem where the principal would like to provide incentives for an agent to take a costly action. The action space is $X = \mathbb{R}_+$; an action $x \in X$ can be interpreted as an amount of a good produced by the agent. Without loss, we normalize the space $X$ so that the principal derives utility $x$ from the agent taking action $x$. The agent has a private cost function $\Psi: X \times \Theta \to \mathbb{R}_+$, which determines the cost the agent incurs from producing $x \in X$ given the agent's type $\theta \in \Theta = [\ubar{\theta}, \bar{\theta}] \subset \mathbb{R}_+$. Without loss, we normalize the cost function $\Psi(0, \theta) = 0$ for all $\theta$.

The agent's type is drawn prior to the start of the game by Nature from a full-support distribution $\mu \in \Delta(\Theta)$. The type is privately observed only by the agent, but the principal has belief $\mu$ over the agent type. We assume the belief $\mu$ admits a probability density function $f$, with corresponding cumulative distribution function $F$.

The principal has a budget $T$ of a transferable resource that the agent values. The agent's utility is quasilinear in the cost and the transfer: that is, if the agent of type $\theta$ produces $x$ and receives a transfer $t$, his utility is $u_A(x, \theta) = t -\Psi(x, \theta)$. To be clear, we suppose the principal can observe the action $x$, as opposed to the moral hazard problem in \cite{h79} and \cite{gh83}; instead, the principal must decide how to award funds given the observed action, without precisely knowing the agent's cost \textit{function}. 

To highlight the key economic forces, we will present the baseline model where the principal has no value for the transferable resource. In Section \ref{sec:linear} we consider allow the principal to have a constant value the resource, and consider a generalized value function in the extensions in Section \ref{sec:extensions}.

\begin{assumption}
We assume the cost function $\Psi$ is nonnegative, twice continuously differentiable, supermodular, and strictly increasing in both arguments. Additionally, we assume that $\Psi$ is convex for all $\theta$ and its convexity $\Psi_{xx}$ is increasing in $\theta$. \footnote{Such assumptions appear in the cost-function considered by the hidden-information in \cite{l03}. However, in \cite{l03}, there is an additional assumption that $\Psi_{\theta \theta x}$ is nonnegative, but we do not require this.}
\end{assumption}

A couple of remarks about the assumptions and game structure are in order. The supermodularity assumption is indespensible, and intuitively requires that higher types suffer steeper costs. The $C^2$ differentiability assumption is technical and allows us to use cross-derivatives. The convexity assumptions ensure validity and sufficiency of first-order optimization conditions, and intuitively require that agent types are ranked in how convex their costs are.
For instance, in an example where the principal offers money to a firm, the assumption requires that the higher $\theta$ firms are less scalable (i.e. their production costs grow faster). In this setting, the type $\theta$ can be interpreted as an inefficiency parameter.

By standard arguments, we invoke the revelation principle, so we consider direct mechanisms where the agent reports his type to the principal subject to incentive compatibility and individual rationality constraints. (We will see later that all feasible direct mechanisms will be implementable by a subsidy schedule.)

More precisely, the mechanism design problem requires the principal to select functions $x: \Theta \to X$ and $t: \Theta \to \mathbb{R}_+$, in the following optimization problem:
\begin{align}
& \max_{x, t} && \int_{\theta \in \Theta} x(\theta) f(\theta) \ d\theta \label{prblm:general} \\
& \textnormal{subject to} && t(\theta) - \Psi(x(\theta),\theta) \ge  t(\theta') - \Psi(x(\theta'),\theta) && \forall \theta, \theta'\in \Theta && \textnormal{(IC)} \notag \\
&&& t(\theta) - \Psi(x(\theta),\theta) \ge 0 && \forall \theta \in \Theta  && \textnormal{(IR)} \notag \\ 
&&& t(\theta) \le T && \forall \theta \in \Theta &&\textnormal{(B)} \notag 
\end{align}

To understand the problem, we provide three example applications of the model. 

\paragraph{Example (Emissions Reduction)} Consider a policymaker trying to reduce emissions from a large population. Individuals in the population have private emissions abatement costs $\theta \sim F$ which determine how severely emissions reductions impact their utility; that is, $\Psi(x, \theta)$ is how costly an individual of type $\theta$ finds it to reduce their emissions by $x$. 
The policymaker cannot pay any single individual more than an upper bound $T$, but would like to induce the largest possible collective emissions reduction. The optimal mechanism can be implemented by a subsidy schedule, which determines how much compensation an individual receives as a function of the individual's emissions reduction.

\paragraph{Example (Altruism)} Suppose an altruist or a government has pledged a budgeted sum $T$ to spur production of a product (vaccine, baby formula) during a shortage. The altruist is the principal, and the monopoly producing the product is the agent, whose private type characterizes their production costs. The action $x$ corresponds to the amount of the product produced, and the resource is the budgeted money. The altruist seeks to maximize production conditional on respecting the budget constraint.

\paragraph{Example (Non-monetary Incentives)} Consider an academic institution interested in increasing its research output by hiring a new researcher. The institution can grant the researcher a non-monetary award (title, prize, tenure) which has no intrinsic value to the institution itself. The researcher values the award at $T$, and has a private cost of producing output $\Psi(\cdot, \theta)$. The institution can assess the output of the researcher and decide with what probability to give the award. The mechanism in this example corresponds to the institution committing to a schedule that maps research output into a probability of receiving the award.

\section{Optimal Mechanisms}
\label{sec:general}
In this section, we first present the mathematical characterization of the optimal mechanism. We then discuss some of the qualitative features of the model, notably the emergence of a pooling region. We provide some economic intuition for the optimal mechanism as splitting the type space into ``efficient'' and ``inefficient'' types and optimizing distinct shadow cost functions for the two. Finally, we show that the optimal mechanism admits a closed-form under a separability assumption on $\Psi$, and in this case, the size of the pooling region is invariant to $T$. 

Solving our mechanism relies on a lemma which characterizes the feasible schedules, or the set of allocations $x: \Theta \to \mathbb{R}_+$ for which there exists a transfer schedule $t: \Theta \to [0, T]$ that satisfies (IC), (IR), and (B). The lemma follows the standard mechanism design techniques to derive the feasible set.

\begin{lemma}
\label{lem:general_feasible}
    $x: \Theta \to X$ is a feasible schedule if and only if the following two conditions hold:
    \begin{enumerate}
        \item $x$ is nonincreasing.
        \item $\Psi(x(\ubar{\theta}),\ubar{\theta}) + \int_{\ubar{\theta}}^{\bar{\theta}} \Psi_\theta(x(s), s) \ ds \le T $
    \end{enumerate}
    Further, given a feasible schedule $x$, a transfer function that supports the schedule is given by 
    \begin{equation}\label{eqn:gen_transfers}
    t(\theta) = \Psi(x(\theta), \theta) +  \int_{\theta}^{\bar{\theta}} \Psi_\theta(x(s), s) \ ds 
    \end{equation}
\end{lemma}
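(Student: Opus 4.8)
The plan is to run the standard screening argument: translate (IC) into a monotonicity condition together with an integral (``envelope'') representation of the agent's truthful payoff $U(\theta):=t(\theta)-\Psi(x(\theta),\theta)$, use this to identify the cheapest incentive-compatible transfer, and then read off what (IR) and the budget constraint (B) require. For \emph{monotonicity}, add the two (IC) inequalities for a pair $\theta<\theta'$: the transfers cancel, leaving $\Psi(x(\theta'),\theta)-\Psi(x(\theta),\theta)\ge\Psi(x(\theta'),\theta')-\Psi(x(\theta),\theta')$; writing each side as an integral of $\Psi_\theta$ in the type argument and invoking supermodularity ($\Psi_{x\theta}\ge0$) forces $x(\theta')\le x(\theta)$, which is condition 1.

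For the \emph{envelope representation and the necessity of condition 2}, take $\theta_1<\theta_2$, substitute $t(\theta_i)=U(\theta_i)+\Psi(x(\theta_i),\theta_i)$ into the two relevant (IC) constraints, and rewrite $\Psi(x(\theta_i),\theta_2)-\Psi(x(\theta_i),\theta_1)=\int_{\theta_1}^{\theta_2}\Psi_\theta(x(\theta_i),s)\,ds$ to obtain the sandwich $\int_{\theta_1}^{\theta_2}\Psi_\theta(x(\theta_2),s)\,ds\le U(\theta_1)-U(\theta_2)\le\int_{\theta_1}^{\theta_2}\Psi_\theta(x(\theta_1),s)\,ds$. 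Summing over a partition of $[\ubar{\theta},\bar\theta]$ and refining, both ends converge to $\int_{\ubar{\theta}}^{\bar\theta}\Psi_\theta(x(s),s)\,ds$, so $U(\theta)=U(\bar\theta)+\int_\theta^{\bar\theta}\Psi_\theta(x(s),s)\,ds$; since $\Psi_\theta>0$, $U$ is nonincreasing, hence (IR) requires only $U(\bar\theta)\ge0$. Then $T\ge t(\ubar{\theta})=U(\ubar{\theta})+\Psi(x(\ubar{\theta}),\ubar{\theta})\ge\Psi(x(\ubar{\theta}),\ubar{\theta})+\int_{\ubar{\theta}}^{\bar\theta}\Psi_\theta(x(s),s)\,ds$, which is condition 2.

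For \emph{sufficiency}, given a nonincreasing $x$ satisfying condition 2, define $t$ by \eqref{eqn:gen_transfers} (equivalently, set $U(\bar\theta)=0$). Then $U(\theta)=\int_\theta^{\bar\theta}\Psi_\theta(x(s),s)\,ds\ge0$ gives (IR). For (IC) the required inequality collapses, after the same integral rewriting, to $\int_\theta^{\theta'}\bigl[\Psi_\theta(x(s),s)-\Psi_\theta(x(\theta'),s)\bigr]\,ds\ge0$ for $\theta<\theta'$ (and the mirror-image inequality for $\theta>\theta'$), which holds since $x(s)\ge x(\theta')$ on $[\theta,\theta']$ and $\Psi_\theta$ is increasing in $x$. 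For (B), note $t$ in \eqref{eqn:gen_transfers} is nonincreasing: for $\theta_1<\theta_2$, $\Psi(x(\theta_2),\theta_2)-\Psi(x(\theta_2),\theta_1)=\int_{\theta_1}^{\theta_2}\Psi_\theta(x(\theta_2),s)\,ds\le\int_{\theta_1}^{\theta_2}\Psi_\theta(x(s),s)\,ds$, which together with $\Psi(x(\theta_1),\theta_1)\ge\Psi(x(\theta_2),\theta_1)$ yields $t(\theta_1)\ge t(\theta_2)$; hence $\max_\theta t(\theta)=t(\ubar{\theta})\le T$ by condition 2, and $t(\theta)\ge\Psi(x(\theta),\theta)\ge0$, so $t$ maps into $[0,T]$.

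The one step that needs care is the passage in the second paragraph from the finite-difference (IC) inequalities to the exact identity for $U$: one must verify that the two Riemann-type sums squeeze to the same integral. This is routine once one observes that a feasible $x$ is bounded --- (IR) and (B) force $\Psi(x(\theta),\theta)\le T$, and $\Psi$, being convex and strictly increasing in $x$, is unbounded in $x$ --- so $\Psi_{\theta x}$ is bounded on the relevant compact rectangle and the gap between the two sums is controlled by (mesh)$\times$(total variation of $x$)$\to0$; alternatively one may simply invoke the Milgrom--Segal envelope theorem. Everything else is bookkeeping with the monotonicity and supermodularity inequalities.
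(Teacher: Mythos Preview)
Your proof is correct and follows essentially the same standard screening argument as the paper: supermodularity plus the two (IC) inequalities give monotonicity, the envelope representation of $U$ yields the transfer formula and condition 2, and sufficiency is checked by verifying (IR), (IC), and (B) for the transfer in \eqref{eqn:gen_transfers}. The only substantive difference is that where the paper simply invokes the envelope theorem to get $U(\theta)=U(\bar\theta)+\int_\theta^{\bar\theta}\Psi_\theta(x(s),s)\,ds$, you supply the direct sandwich/squeeze argument (and correctly flag the need for bounded $x$ to make the limit go through); this is just the proof of Milgrom--Segal in the present setting, so the two routes coincide.
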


The first condition is a monotonicity constraint, and the second condition constrains the magnitude of $x$. Note that the normalization constraint (2) comes from the transfer bound, the IC constraints, and the envelope theorem, and so the feasible set characterization does not depend on the prior. Note the first part of the Lemma implies that these two conditions are necessary and sufficient for a feasible schedule, and the second part explicitly determines the transfer schedule that implements any feasible schedule.

The proof of Lemma \ref{lem:general_feasible} is standard; we use supermodularity to show that $x$ must be nonincreasing, and then apply the envelope theorem to the IC constraints to derive the transfer schedule. The transfer schedule in Lemma~\ref{lem:general_feasible} is analogous to the Myersonian transfer schedule, since the transfer schedule is also derived from using the envelope theorem to integrate out interim utility of the firm from the incentive compatibility conditions. 

The integral expression in the transfer expression is the information rent that type $\theta$ receives; since $\Psi(x(\theta), \theta)$ is exactly the loss the firm experiences from producing $x(\theta)$, the transfer compensates the firm for the loss and provides additional rent, which is larger for more efficient types (lower $\theta$). 

As a corollary, since $\Psi$ and $\Psi_\theta$ are convex, conditions (1) and (2) jointly determine a strictly convex subset of the space of functions from $\Theta \to \mathbb{R}_+$.
Thus, as a corollary of our lemma, since the mechanism design problem maximizes a linear functional over a strictly convex set, the optimal mechanism is unique. 
\begin{corollary}\label{corr:unique_soln}
    If a solution exists, the solution to the generalized mechanism design problem in (\ref{prblm:general}) is unique.
\end{corollary}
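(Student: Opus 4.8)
The objective in~(\ref{prblm:general}) is a linear functional of the allocation $x$, and by Lemma~\ref{lem:general_feasible} together with the remark preceding the statement, the set of feasible allocations is convex, with condition~(2) of the lemma carving out a strictly convex region. A linear functional has at most one maximizer over such a set; the plan is to turn this into a proof via a one-parameter perturbation rather than via topological interiors, since the monotonicity constraint has empty interior in the natural function-space topologies.

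Write $g(x):=\Psi(x(\ubar{\theta}),\ubar{\theta})+\int_{\ubar{\theta}}^{\bar{\theta}}\Psi_\theta(x(s),s)\,ds$ for the left-hand side of condition~(2). First I would show that any optimal allocation $x^{\star}$ must make the budget bind, $g(x^{\star})=T$: if instead $g(x^{\star})<T$, then for small $\varepsilon>0$ the shifted allocation $x^{\star}+\varepsilon$ is still nonincreasing, is still feasible (using that $x^{\star}$ is bounded, being real-valued and nonincreasing, and that $\Psi\in C^2$, so $g$ is continuous in the shift by dominated convergence), and increases the objective by $\varepsilon\int_{\Theta}f=\varepsilon>0$, a contradiction. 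As a byproduct, since (B) binds at the optimum (specifically at $\ubar{\theta}$), the only rent profile consistent with (IC), (IR) and (B) is the one leaving the highest type zero rent, so the supporting transfer is forced to equal~(\ref{eqn:gen_transfers}); hence uniqueness of the optimal allocation delivers uniqueness of the whole mechanism.

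Next, suppose toward a contradiction that $x_0\ne x_1$ are both optimal, with common value $V$, and set $x_{1/2}=\tfrac12 x_0+\tfrac12 x_1$. It is nonincreasing and, by linearity of the objective, also attains $V$, so it is optimal and hence $g(x_{1/2})=T$ by the previous step. But $g$ is strictly convex on the feasible set: two distinct feasible allocations differ either at $\ubar{\theta}$ or on a set of positive measure, and in either case the strict convexity in $x$ of $\Psi(\cdot,\ubar{\theta})$, respectively $\Psi_\theta(\cdot,s)$, forces $g(x_{1/2})<\tfrac12 g(x_0)+\tfrac12 g(x_1)\le T$, contradicting $g(x_{1/2})=T$. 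Therefore the optimal allocation, and with it the optimal mechanism, is unique.

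The only genuine obstacle is the strict convexity of $g$ on the feasible set: the monotonicity constraint supplies only ``flat'' directions, so all the curvature must come from $g$, and one has to check that any direction in which two distinct feasible allocations differ is actually registered by the strictly convex integrand — this is exactly where Assumption~1, through the convexity of $\Psi$ in $x$ and the monotonicity of $\Psi_{xx}$ in $\theta$, enters. The remaining ingredients — continuity of $g$, measurability, and order-preservation of the constant shift — are routine.
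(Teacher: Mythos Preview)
Your proposal is correct and follows the same idea the paper uses: the objective is linear and the feasible set carved out by Lemma~\ref{lem:general_feasible} is strictly convex, so the maximizer is unique. The paper states this in one line (the sentence immediately preceding the corollary) without further argument; you flesh it out by first proving that the budget binds at any optimum via the shift $x^\star+\varepsilon$, and then running the midpoint contradiction through the strictly convex functional $g$ rather than appealing to an abstract ``linear over strictly convex'' principle. That choice is sound and, as you note, sidesteps the awkwardness that the monotone cone has empty interior, which would make the textbook version of the principle delicate here. Your added observation that binding of (B) at $\ubar\theta$ forces $U(\bar\theta)=0$ and hence pins down $t$ via~(\ref{eqn:gen_transfers}) is also correct and gives uniqueness of the full mechanism, not just of $x$; the paper leaves this implicit.

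One small caveat worth flagging: your step~4 invokes \emph{strict} convexity of $\Psi(\cdot,\ubar\theta)$ and of $\Psi_\theta(\cdot,s)$, while Assumption~1 literally only asserts (weak) convexity of $\Psi$ and that $\Psi_{xx}$ is increasing in $\theta$. The paper makes the same leap when it calls the feasible set ``strictly convex,'' so you are not introducing a new gap, but your closing remark that ``this is exactly where Assumption~1 \ldots\ enters'' slightly oversells what the assumption delivers. If you want the argument airtight under the stated hypotheses, you would need to read the convexity in Assumption~1 as strict (as the paper implicitly does, and as the separable case with $\Gamma$ strictly convex confirms), or else note that uniqueness is up to a.e.\ equivalence on any flat region of $\Psi$.
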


Further, Lemma \ref{lem:general_feasible} implies that the mechanism is implementable by a subsidy schedule or action-contingent transfers; that is, the optimal mechanism $x(\theta), t(\theta)$ parametrizes a well-defined function from $X \to \mathbb{R}_+$. 

\begin{corollary}\label{corr:subsidy_schedule}
    Suppose $\theta < \theta'$, and let $x, t$ be a feasible mechanism. If $x(\theta) = x(\theta')$, then $t(\theta) = t(\theta')$. 
\end{corollary}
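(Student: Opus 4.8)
The plan is to argue directly from the two incentive-compatibility constraints relating the types $\theta$ and $\theta'$, without invoking the canonical transfer formula (\ref{eqn:gen_transfers}): that formula records only one particular supporting transfer, whereas the claim is about an \emph{arbitrary} feasible mechanism $(x,t)$, so we should only use properties that every such $(x,t)$ satisfies — namely (IC).

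First I would write the (IC) inequality that deters type $\theta$ from reporting $\theta'$,
\[
t(\theta) - \Psi(x(\theta),\theta) \;\ge\; t(\theta') - \Psi(x(\theta'),\theta),
\]
and substitute $x(\theta) = x(\theta')$; the $\Psi$ terms are then identical and cancel, leaving $t(\theta) \ge t(\theta')$. Next I would write the (IC) inequality that deters type $\theta'$ from reporting $\theta$,
\[
t(\theta') - \Psi(x(\theta'),\theta') \;\ge\; t(\theta) - \Psi(x(\theta),\theta'),
\]
and again substitute $x(\theta) = x(\theta')$ to obtain $t(\theta') \ge t(\theta)$. Combining the two inequalities gives $t(\theta) = t(\theta')$, which is the claim. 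Note this uses neither monotonicity of $x$ nor the prior $f$; it is purely a consequence of the pairwise (IC) constraints.

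There is essentially no obstacle here — the only point requiring care is the one flagged above, that the argument must run off the abstract (IC) constraints of the given feasible mechanism rather than off the explicit transfers in Lemma~\ref{lem:general_feasible}. I would close by noting the intended payoff: the corollary says that whenever two types pool on the same action they are paid the same amount, so for any feasible mechanism the assignment $x(\theta) \mapsto t(\theta)$ is single-valued on the image $x(\Theta)$ and hence defines a genuine function $X \to \mathbb{R}_+$ (extended arbitrarily off the image), i.e., every feasible mechanism — in particular the optimal one — is implementable by an action-contingent subsidy schedule.
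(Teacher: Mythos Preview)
Your proof is correct and takes a different, more elementary route than the paper. The paper argues via Lemma~\ref{lem:general_feasible}: it first uses monotonicity to conclude that $x$ is constant on $[\theta,\theta']$, and then plugs into the explicit transfer expression~\eqref{eqn:gen_transfers} to compute $t(\theta)$ and $t(\theta')$ and show they coincide. Your argument bypasses all of this and reads the conclusion off the two pairwise (IC) constraints directly.

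The advantage of your route is precisely the one you flag: it applies to \emph{any} feasible $(x,t)$, not just the canonical transfer in~\eqref{eqn:gen_transfers}. The statement of Lemma~\ref{lem:general_feasible} only asserts that~\eqref{eqn:gen_transfers} is \emph{a} supporting transfer; the fact that every supporting transfer differs from it by at most the additive constant $U(\bar\theta)$ is established only in the appendix proof. So the paper's argument implicitly relies on that necessity (the constant cancels in $t(\theta)-t(\theta')$), whereas yours needs nothing beyond the raw (IC) inequalities and does not even use $\theta<\theta'$ or monotonicity of $x$. The paper's approach, on the other hand, makes the connection to the integral transfer formula explicit, which is what is used downstream when the corollary is invoked (e.g., in Proposition~\ref{prop:pooling_at_top}).
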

\begin{proof}
    By Lemma \ref{lem:general_feasible}, since $x$ must be nonincreasing and $x(\theta') = x(\theta)$, $x$ must be constant on $(\theta, \theta')$. As a minor abuse of notation, let $x$ be that value on $(\theta, \theta')$. Then 
    \begin{align*}
         t(\theta) &= \Psi(x, \theta) + \int_\theta^{\bar{\theta}} \Psi_\theta(x(t), t) \ dt \\
         &= \Psi(x, \theta) + \int_\theta^{\theta'} \Psi_\theta(x, t) \ dt + \int_{\theta'}^{\bar{\theta}} \Psi_\theta(x(t), t) \ dt \\
         &= \Psi_\theta(x, \theta') + \int_{\theta'}^{\bar{\theta}} \Psi_\theta(x(t), t) \ dt = t(\theta')
    \end{align*} 
\end{proof}

Since $x,t$ are both nonincreasing, the corollary implies that any feasible mechanism $x(\cdot), t(\cdot)$ can be implemented as a subsidy schedule $\hat{t}: X \to \mathbb{R}_+$. 

With some more work, we can show the existence of a solution and characterize it with a differential equation system.

\begin{theorem}\label{thm:general_mech}
    An optimal mechanism $(x, t)$ for \eqref{prblm:general} exists and is unique. The optimal mechanism, together with a nonnegative Lagrange multiplier $\lambda$ and a nonnegative, absolutely continuous costate function $\rho$, jointly solve the following system of differential equations with boundary constraints: 
     \begin{gather} 
        \rho(\theta) > 0 \implies \dot{x}(\theta) = 0 \label{cond:comp_slackness} \\
        \dot{\rho}(\theta) =  \lambda \Psi_{x\theta}(x(\theta), \theta) - f(\theta)\label{cond:costate_evol} \\
        \rho(\bar{\theta}) = 0 \label{cond:upper_bound}\\
        \rho(\ubar{\theta}) = \lambda \Psi_x(x(\ubar{\theta}),\ubar{\theta})  \label{cond:lower_bound} \\
        t(\ubar{\theta}) = \Psi(x(\ubar{\theta}),\ubar{\theta}) + \int_{\ubar{\theta}}^{\bar{\theta}} \Psi_\theta(x(s), s) \ ds  = T \label{cond:budget_bind}
    \end{gather}
    where \eqref{cond:costate_evol} holds wherever $\rho$ is differentiable.
\end{theorem}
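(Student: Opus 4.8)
The plan is to set up the problem as an optimal control problem with $\theta$ as the "time" variable, $x(\theta)$ as the state, and then derive the stated conditions as the Pontryagin necessary conditions, using the convexity from Lemma~\ref{lem:general_feasible} for sufficiency and existence. First I would use Lemma~\ref{lem:general_feasible} to replace the full mechanism design problem \eqref{prblm:general} with the reduced problem of maximizing $\int_\Theta x(\theta) f(\theta)\,d\theta$ over nonincreasing $x$ subject to the single budget inequality $\Psi(x(\ubar\theta),\ubar\theta) + \int_{\ubar\theta}^{\bar\theta}\Psi_\theta(x(s),s)\,ds \le T$. To handle monotonicity I would introduce the control $u(\theta) = \dot x(\theta) \le 0$ (working in the space of absolutely continuous $x$, with the monotone-rearrangement/ironing argument or a direct approximation argument to justify restricting to this class), and to handle the isoperimetric budget constraint I would introduce an auxiliary state $y(\theta) = \Psi(x(\ubar\theta),\ubar\theta) + \int_{\ubar\theta}^{\theta}\Psi_\theta(x(s),s)\,ds$ with $y(\ubar\theta) = \Psi(x(\ubar\theta),\ubar\theta)$, $\dot y(\theta) = \Psi_\theta(x(\theta),\theta)$, and terminal constraint $y(\bar\theta)\le T$.

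Next I would establish existence: the feasible set is nonempty (e.g. $x\equiv 0$), and by the corollary to Lemma~\ref{lem:general_feasible} it is a strictly convex, and I would argue weak-* compact (bounded in an appropriate sense because condition (2) plus monotonicity bounds $x(\ubar\theta)$ and hence all of $x$), subset of a suitable function space; the objective is linear and weak-* continuous, so a maximizer exists, and strict convexity gives uniqueness as already recorded in Corollary~\ref{corr:unique_soln}. Then I would write the Hamiltonian $H = f(\theta)x(\theta) + \rho(\theta)u(\theta) + \pi(\theta)\Psi_\theta(x(\theta),\theta)$ with costate $\rho$ for the state $x$ and costate $\pi$ for $y$, plus the constraint $u\le 0$ handled by a multiplier. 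Since the budget constraint enters only through the terminal condition $y(\bar\theta)\le T$, the costate $\pi$ is constant, equal to $-\lambda$ for the terminal multiplier $\lambda\ge 0$, with complementary slackness $\lambda(T - y(\bar\theta)) = 0$. The costate equation for $\rho$ is $\dot\rho(\theta) = -\partial H/\partial x = -f(\theta) + \lambda \Psi_{x\theta}(x(\theta),\theta)$, which is exactly \eqref{cond:costate_evol}; maximizing $H$ over $u\le 0$ forces $\rho(\theta)\le 0$ in general, but here I need to track signs carefully so that $\rho$ as stated is nonnegative — this amounts to a sign convention, and the pooling condition $\rho(\theta) > 0 \implies \dot x(\theta) = 0$ in \eqref{cond:comp_slackness} is the statement that when the monotonicity multiplier is slack the control is at its boundary $u=0$, i.e. complementary slackness between $\rho$ and $\dot x$.

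For the boundary conditions: the transversality condition at $\bar\theta$ for the free state $x(\bar\theta)$ gives $\rho(\bar\theta) = 0$, which is \eqref{cond:upper_bound}. The condition at $\ubar\theta$ is the subtle one, because $x(\ubar\theta)$ is free but appears both in the initial value of $y$ (through $\Psi(x(\ubar\theta),\ubar\theta)$) and as the initial state; the transversality/stationarity condition in $x(\ubar\theta)$ reads $\rho(\ubar\theta) = \lambda \Psi_x(x(\ubar\theta),\ubar\theta)$, matching \eqref{cond:lower_bound}. I would also need to show $\lambda > 0$ (equivalently that the budget always binds), which gives \eqref{cond:budget_bind}: this follows because if $\lambda = 0$ then \eqref{cond:costate_evol} gives $\dot\rho = -f < 0$ with $\rho(\bar\theta)=0$, so $\rho > 0$ on the interior, forcing $\dot x \equiv 0$ by \eqref{cond:comp_slackness}, i.e. $x$ constant; but then $\rho(\ubar\theta) = 0 = \lambda\Psi_x$ is consistent only if that constant can be pushed up, and since the objective strictly increases in the constant while feasibility only requires the budget bound, the unconstrained optimum would be $x$ arbitrarily large — contradiction unless the budget binds. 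Finally, sufficiency: because $\Psi$ and $\Psi_\theta$ are convex in $x$ (Assumption~1) and the objective is linear, the maximized Hamiltonian is concave in $(x,y)$, so the Pontryagin conditions are sufficient as well as necessary, pinning down the unique optimum.

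The main obstacle I expect is the careful treatment of the monotonicity constraint $\dot x \le 0$ within the optimal control framework — specifically justifying that the optimum is absolutely continuous (so that the "ironing" multiplier $\rho$ is itself absolutely continuous and \eqref{cond:costate_evol} holds a.e.), handling the possibility of boundary/pooling intervals cleanly, and getting the sign conventions on $\rho$ consistent with the nonnegativity asserted in the theorem. The second delicate point is the non-standard endpoint condition \eqref{cond:lower_bound} at $\ubar\theta$, which arises precisely because the lower endpoint value $x(\ubar\theta)$ enters the budget functional in a pointwise (non-integral) way; deriving it rigorously requires either the auxiliary-state reformulation above or a direct variational perturbation of $x(\ubar\theta)$.
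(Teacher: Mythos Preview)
Your proposal is correct and follows essentially the same route as the paper: reduce via Lemma~\ref{lem:general_feasible}, cast as an optimal control problem with $u=\dot x\le 0$, apply Pontryagin for necessity and Arrow concavity for sufficiency, and argue $\lambda>0$ by showing $\lambda=0$ makes the boundary conditions on $\rho$ inconsistent with $\dot\rho=-f<0$. The only notable differences are cosmetic or technical: the paper attaches a Lagrange multiplier $\lambda$ directly to the budget inequality rather than introducing your auxiliary state $y$ (these are equivalent formulations of the same isoperimetric constraint and yield identical costate equations and endpoint conditions), and for existence the paper does not argue weak-$*$ compactness of the feasible set but instead shows, for each fixed $\lambda>0$, that the Pontryagin system \eqref{cond:comp_slackness}--\eqref{cond:lower_bound} has a solution $x_\lambda$ (Carath\'eodory), proves $t_\lambda(\ubar\theta)$ is continuous in $\lambda$ with limits $0$ and $\infty$ as $\lambda\to\infty,0$, and invokes the intermediate value theorem to hit $T$; your compactness argument is a legitimate alternative and arguably cleaner.
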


To understand the result, note that the feasible set characterization in Lemma \ref{lem:general_feasible} allows us to focus solely on characterizing $x$, subject to a nonincreasing condition and a rewritten budget constraint. We handle the budget constraint by writing the Lagrangian, and deal with the nonincreasing condition by taking an optimal control approach\footnote{See \cite{lt86}} where `time' corresponds to the type interval, the control state variable is the required action $x$, and control variable is the derivative of $x$. We can interpret the conditions in Theorem \ref{thm:general_mech} through this control framework. The costate $\rho$ denotes the shadow cost of the monotonicity constraint. The first condition \eqref{cond:comp_slackness} is the complementary slackness condition; wherever the constraint's shadow cost $\rho^*$ is positive, the constraint is binding (i.e. $x$ is constant). The second condition \eqref{cond:costate_evol} dictates the costate evolution. The constraints \eqref{cond:upper_bound} and \eqref{cond:lower_bound} establish boundary conditions. The final constraint \eqref{cond:budget_bind} establishes that the budget binds at the most efficient / least costly type. 

The proof intuition follows three steps. First, we write out the Lagrangian relaxation of the problem using the feasible set characterization from Lemma \ref{lem:general_feasible}. We formulate the optimization as an optimal control problem, where the control variable governs the derivative of $x$ and is constrained to be nonpositive wherever $x$ is differentiable. 

Second, we show that the conditions in Theorem \ref{thm:general_mech} are necessary and sufficient for an optimal mechanism. Fixing $\lambda$, the first four conditions \eqref{cond:comp_slackness}-\eqref{cond:lower_bound} are necessary and sufficient for a solution to the Lagrangian relaxation control problem; necessity follows by invoking the Pontryagin maximum principle, and sufficiency follows from Arrow's sufficiency condition as proved in \cite{ks71}. Sufficiency of $\lambda > 0$ and \eqref{cond:budget_bind} follows from Lagrangian duality; we show necessity by proving no solution to \eqref{cond:comp_slackness}-\eqref{cond:lower_bound} exists for $\lambda \le 0$. 

Finally, we prove existence. We use the theory of ordinary differential equations to show that \eqref{cond:comp_slackness}-\eqref{cond:lower_bound} has a solution for $\lambda > 0$. We then use the intermediate value theorem to show that there must exist a $\lambda$ such that \eqref{cond:budget_bind} holds. Since we proved in the second step that these conditions were necessary and sufficient for a solution, we thus showed that an optimal mechanism exists. Corollary \ref{corr:unique_soln} implies that this solution must also be unique.

To understand how the characterization in Theorem \ref{thm:general_mech} can be used to determine the optimal mechanism, suppose $\lambda$ were exogenously fixed. Consider the illustration in Figure \ref{fig:general_mech_illustration}. Complementary slackness implies that when $\rho$ is positive, $x$ must be constant, and whenever $x$ is decreasing, $\rho$ must be constant at zero, and hence $x$ must follow the trajectory implied by the implicit equation derived from $\dot{\rho} = 0$. The boundary conditions dictate the start and end values of the costate $\rho$. Note that altering $\lambda$ either uniformly increases or decreases $x$; intuitively, at the solution value of the Lagrange multiplier, the budget constraint exactly binds.

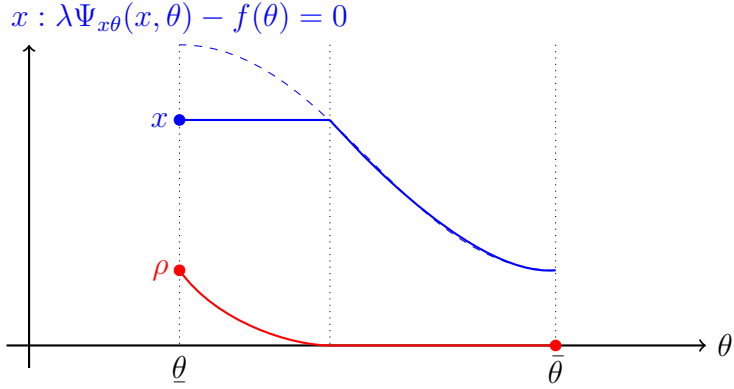
\begin{figure}
    \centering
    \begin{tikzpicture}

    \draw[->, thick] (-0.3,0) -- (9,0) node[anchor=west]{$\theta$};
    \draw[->, thick] (0,-0.3) -- (0,4);
    \draw[dotted] (2,4) -- (2,0) node[anchor=north]{$\ubar{\theta}$};
    \draw[dotted] (7,4) -- (7,0) node[anchor=north]{$\bar\theta$};

    \draw[dashed, blue] (7,1) .. controls (5,1) and (4,4) .. (2,4) node[anchor=south]{$x: \lambda \Psi_{x\theta}(x, \theta) - f(\theta) = 0$};

    \filldraw[red] (2, 1) circle (2pt) node[anchor=east]{$\rho$};
    \filldraw[blue] (2, 3) circle (2pt) node[anchor=east]{$x$};

    \draw[blue, thick] (2,3) -- (4,3);
    \draw[red, thick] (2,1) .. controls (2.5, 0.3) and (3.5,0) .. (4,0);
    \draw[dotted] (4,4) -- (4,0);

    \draw[blue, thick] (4,3) .. controls (4,3) and (5.8,0.9) .. (7,1);
    \draw[red, thick] (4,0) -- (7,0);
    \filldraw[red] (7, 0) circle (2pt);
    
\end{tikzpicture}
    \caption{Illustration of the conditions in Theorem \ref{thm:general_mech}. The dotted blue line satisfies the implicit equation. When $\rho$ is zero, $x$ must coincide with the dotted blue line.}
    \label{fig:general_mech_illustration}
\end{figure}

A remark on the convexity assumption on $\Psi_\theta$ is in order. The result in Theorem \ref{thm:general_mech} relies on the Pontryagin maximum principle providing a unique solution; this occurs given that the maximized Hamiltonian is concave. In the model, the Arrow sufficiency condition is that for all $\theta$,
\[ xf(\theta) - \lambda \Psi_\theta(x,\theta) \]
is concave in $x$. Assuming $\Psi_\theta$ is convex in $x$ is equivalent to this sufficiency condition. \footnote{There are other sufficiency conditions for Pontryagin's maximum principle with weaker assumptions on the Hamiltonian. Relaxing assumptions on $\Psi$ would require another sufficiency argument for the uniqueness of a solution characterized by the maximum principle.}

In the next sections, we discuss qualitative properties of the optimal mechanism and economic insights derived from the solution characterization in Theorem \ref{thm:general_mech}.

\subsection{Pooling}
We first discuss when the principal awards the maximum $T$ to the agent (i.e., for what types $t(\theta) = T$). From the budget binding constraint, we know that the most efficient type $\ubar{\theta}$ always receives the full transfer. In this section, we provide three insights. First, we show that a positive measure of types always receive the entire transfer; we say the optimal mechanism ``pools'' the most efficient types. Second, we characterize the threshold type; agents more efficient than the threshold are pooled and receive the full transfer $T$, and the principal optimally withholds the resource from agents less efficient than the threshold. Third, we characterize when the optimal mechanism pools all types; in this case, the optimal mechanism simply pays the entire $T$ to the agent whenever the agent meets a base action level. 

\begin{proposition}[Pooling At The Top]\label{prop:pooling_at_top}
    There exists a $\hat{\theta} > \ubar{\theta}$ such that the optimal mechanism $x^*$ is constant on $[\ubar{\theta}, \hat{\theta}]$. The transfer $t^*$ is equal to $T$ for these types.
\end{proposition}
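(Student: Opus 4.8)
The plan is to argue by contradiction using the boundary conditions on the costate $\rho$ from Theorem~\ref{thm:general_mech}. Suppose the optimal $x^*$ is strictly decreasing immediately to the right of $\ubar\theta$, i.e., there is no nondegenerate interval $[\ubar\theta,\hat\theta]$ on which $x^*$ is constant. By complementary slackness \eqref{cond:comp_slackness}, $\rho(\theta) = 0$ on a right-neighborhood of $\ubar\theta$ (wherever $\dot x^* < 0$), and since $\rho$ is absolutely continuous this forces $\rho(\ubar\theta) = 0$. Combined with the lower boundary condition \eqref{cond:lower_bound}, $\rho(\ubar\theta) = \lambda \Psi_x(x^*(\ubar\theta),\ubar\theta)$, we get $\lambda \Psi_x(x^*(\ubar\theta),\ubar\theta) = 0$. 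Since $\Psi$ is strictly increasing in $x$, we have $\Psi_x > 0$, so this forces $\lambda = 0$.

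The next step is to rule out $\lambda = 0$. This is precisely the content of the ``necessity'' part of the second step of the proof of Theorem~\ref{thm:general_mech} — that no solution to \eqref{cond:comp_slackness}--\eqref{cond:lower_bound} exists for $\lambda \le 0$. Intuitively, if $\lambda = 0$, the costate evolution \eqref{cond:costate_evol} reads $\dot\rho(\theta) = -f(\theta) < 0$, so $\rho$ is strictly decreasing wherever it is differentiable; together with $\rho(\bar\theta) = 0$ this makes $\rho$ strictly positive on the interior, which by \eqref{cond:comp_slackness} forces $x^*$ to be constant on all of $\Theta$ — but then the left boundary condition \eqref{cond:lower_bound} gives $\rho(\ubar\theta) = 0$, contradicting $\rho(\ubar\theta) > 0$. (Alternatively: a constant schedule cannot be optimal because one could profitably increase $x$ on a lower-cost segment while respecting the budget, so $\lambda = 0$ cannot correspond to the optimum; I would cite whichever version is established in the proof of Theorem~\ref{thm:general_mech}.) Hence $\lambda > 0$, contradicting $\lambda = 0$, so the supposed behavior is impossible and $x^*$ must be constant on some $[\ubar\theta,\hat\theta]$ with $\hat\theta > \ubar\theta$.

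For the transfer claim, once $x^*$ is constant on $[\ubar\theta,\hat\theta]$, Corollary~\ref{corr:subsidy_schedule} gives $t^*(\theta) = t^*(\ubar\theta)$ for all $\theta \in [\ubar\theta,\hat\theta]$, and \eqref{cond:budget_bind} gives $t^*(\ubar\theta) = T$; hence $t^*(\theta) = T$ on the pooling interval.

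I expect the main obstacle to be making rigorous the step ``$\dot x^* < 0$ on a right-neighborhood of $\ubar\theta$ implies $\rho(\ubar\theta) = 0$'': one must handle the possibility that $x^*$ decreases on a set that accumulates at $\ubar\theta$ without containing an interval, or is constant on a sequence of shrinking intervals. The clean way is to note that $\rho \ge 0$ always, $\rho(\ubar\theta) > 0$ would (by continuity) force $\rho > 0$ on a right-neighborhood of $\ubar\theta$, hence $\dot x^* = 0$ there by \eqref{cond:comp_slackness}, hence $x^*$ constant on that neighborhood — which is exactly the conclusion we want. So in fact the contrapositive is cleaner: either $\rho(\ubar\theta) > 0$, giving the pooling interval directly, or $\rho(\ubar\theta) = 0$, which we showed forces $\lambda = 0$ and a contradiction. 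I would organize the final write-up around this dichotomy rather than a single proof by contradiction.
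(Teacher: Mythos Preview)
Your proposal is correct and, especially in the ``cleaner contrapositive'' you sketch at the end, is essentially the paper's own argument: the paper simply observes that $\rho(\ubar\theta) = \lambda\Psi_x(x^*(\ubar\theta),\ubar\theta) > 0$ (taking $\lambda > 0$ as already established in the proof of Theorem~\ref{thm:general_mech}), then uses continuity of $\rho$ plus \eqref{cond:comp_slackness} to get the pooling interval, and finishes with Corollary~\ref{corr:subsidy_schedule} and \eqref{cond:budget_bind} exactly as you do. Your initial contradiction framing and the re-derivation of $\lambda > 0$ are redundant once you cite Theorem~\ref{thm:general_mech}, so the final write-up should follow the direct dichotomy you identified.
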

\begin{proof}
Consider the conditions in Theorem \ref{thm:general_mech}. Since $\rho(\ubar{\theta}) > 0$, complementary slackness \eqref{cond:comp_slackness} implies that $x^*$ must be constant until $\rho$ reaches zero. Corollary \ref{corr:subsidy_schedule} implies that the optimal $t^*$ is also constant on this set of types, and the budget binding condition \eqref{cond:budget_bind} implies that the transfer is $T$.
\end{proof}

To intuitively understand why this pooling region always occurs, consider the loss in principal utility that occurs in the optimal mechanism relative to perfect information. With perfect information, the principal always gives away the entirety of the budget $T$ and obtains the maximum action $x$ from the agent such that $\Psi(x,\theta) = T$. The optimal mechanism, relative to perfect information, differs in two ways. Either the principal cannot obtain the highest action because the principal deliberately withholds the resource, or the principal must concede information rent to the agent. Hence, the trade-off between the relative loss from these two effects dictates the size of the pooling region of the most efficient types.

\begin{proposition}[Pooling Threshold]\label{prop:cutoff_type}
    Suppose the optimal $x^*$ is not constant on the entire interval of types. Let $\ubar{x} \equiv x^*(\ubar{\theta})$. Then the largest type $\hat{\theta}$ that receives the full transfer satisfies the following:
    \begin{align}\label{eqn:threshold}
    \frac{\Psi_{x\theta}(\ubar{x}, \hat{\theta})}{\Psi_x(\ubar{x}, \hat{\theta})} &=  \frac{f(\hat{\theta})}{F(\hat{\theta})}
    \end{align}
\end{proposition}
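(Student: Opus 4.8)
The plan is to characterize the threshold $\hat\theta$ as the unique point at which the costate $\rho$ transitions from positive to zero, and then derive the stated equation by combining the costate boundary/evolution conditions with the fact that $x^*$ is constant on the pooling region. First I would set $\ubar x \equiv x^*(\ubar\theta)$; since $x^*$ is constant on $[\ubar\theta,\hat\theta]$ by Proposition~\ref{prop:pooling_at_top}, we have $x^*(\theta) = \ubar x$ for all $\theta \in [\ubar\theta,\hat\theta]$. On this interval, the costate evolution \eqref{cond:costate_evol} reads $\dot\rho(\theta) = \lambda\Psi_{x\theta}(\ubar x,\theta) - f(\theta)$, which can be integrated directly because the argument $\ubar x$ is fixed. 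Using the lower boundary condition \eqref{cond:lower_bound}, $\rho(\ubar\theta) = \lambda\Psi_x(\ubar x,\ubar\theta)$, integration yields
\[
\rho(\theta) = \lambda\Psi_x(\ubar x,\ubar\theta) + \lambda\int_{\ubar\theta}^{\theta}\Psi_{x\theta}(\ubar x,s)\,ds - \int_{\ubar\theta}^{\theta} f(s)\,ds = \lambda\Psi_x(\ubar x,\theta) - F(\theta) + F(\ubar\theta),
\]
where the last equality uses the fundamental theorem of calculus, $\int_{\ubar\theta}^\theta \Psi_{x\theta}(\ubar x,s)\,ds = \Psi_x(\ubar x,\theta) - \Psi_x(\ubar x,\ubar\theta)$, and the convention $F(\ubar\theta)$; if $\mu$ has full support on $[\ubar\theta,\bar\theta]$ then $F(\ubar\theta) = 0$, giving $\rho(\theta) = \lambda\Psi_x(\ubar x,\theta) - F(\theta)$ on the pooling region.

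Next I would argue that $\hat\theta$, being the largest type that receives the full transfer, is exactly the point where $\rho$ first hits zero. Indeed, on $[\ubar\theta,\hat\theta]$ the allocation is constant, which by complementary slackness \eqref{cond:comp_slackness} is consistent with $\rho > 0$; and for $\theta$ slightly above $\hat\theta$ the allocation is strictly decreasing (this is where the hypothesis "$x^*$ is not constant on the entire interval" enters — it guarantees the pooling region is a proper subinterval, so such $\theta$ exist), which by \eqref{cond:comp_slackness} forces $\rho = 0$ there. Since $\rho$ is absolutely continuous, $\rho(\hat\theta) = 0$. Substituting into the displayed formula for $\rho$ gives $\lambda\Psi_x(\ubar x,\hat\theta) = F(\hat\theta)$. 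I would then compute $\dot\rho(\hat\theta^-) = \lambda\Psi_{x\theta}(\ubar x,\hat\theta) - f(\hat\theta)$; since $\rho \ge 0$ with $\rho(\hat\theta) = 0$ and $\rho > 0$ just to the left, we need $\dot\rho(\hat\theta^-) \le 0$, but in fact at the transition point the trajectory must join the $\dot\rho = 0$ locus smoothly (otherwise $\rho$ would go negative or the allocation couldn't decrease continuously), so $\dot\rho(\hat\theta) = 0$, i.e. $\lambda\Psi_{x\theta}(\ubar x,\hat\theta) = f(\hat\theta)$. Dividing this by the relation $\lambda\Psi_x(\ubar x,\hat\theta) = F(\hat\theta)$ eliminates $\lambda$ and yields exactly \eqref{eqn:threshold}.

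The main obstacle is justifying rigorously that $\dot\rho(\hat\theta) = 0$, i.e. that the costate leaves the boundary $\rho = 0$ tangentially rather than with a kink. The clean way to see this: for $\theta > \hat\theta$ (in a right-neighborhood), $x^*$ is strictly decreasing, so $\rho \equiv 0$ there, hence $\dot\rho \equiv 0$ there, which by \eqref{cond:costate_evol} forces $\lambda\Psi_{x\theta}(x^*(\theta),\theta) = f(\theta)$ along that stretch; taking $\theta \downarrow \hat\theta$ and using continuity of $x^*$ (with $x^*(\hat\theta) = \ubar x$) and of $\Psi_{x\theta}$ and $f$ gives $\lambda\Psi_{x\theta}(\ubar x,\hat\theta) = f(\hat\theta)$ directly, bypassing any one-sided-derivative subtlety. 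Combined with $\rho(\hat\theta) = 0 \Rightarrow \lambda\Psi_x(\ubar x,\hat\theta) = F(\hat\theta)$ from the left-hand formula, and noting $\lambda > 0$ and $\Psi_x > 0$, $F(\hat\theta) > 0$ so all quantities are nonzero and the division is legitimate, we obtain \eqref{eqn:threshold}. A secondary point worth a sentence is why $\rho$ actually becomes zero at some interior point under the stated hypothesis — this follows since if $\rho$ stayed positive on all of $(\ubar\theta,\bar\theta)$ then $x^*$ would be constant throughout, contradicting the hypothesis, and $\rho(\bar\theta) = 0$ by \eqref{cond:upper_bound} anyway pins down the other end.
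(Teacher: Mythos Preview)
Your proposal is correct and follows essentially the same approach as the paper's proof: integrate the costate evolution on the pooling region using the lower boundary condition to obtain $\lambda\Psi_x(\ubar x,\hat\theta)=F(\hat\theta)$ from $\rho(\hat\theta)=0$, then use that $\rho\equiv 0$ on a right-neighborhood (hence $\dot\rho=0$ there) together with continuity of $x^*$ to obtain $\lambda\Psi_{x\theta}(\ubar x,\hat\theta)=f(\hat\theta)$ via the limit $\theta\downarrow\hat\theta$, and divide. Your write-up is more explicit about the integration step, the $F(\ubar\theta)=0$ convention, and the legitimacy of the division, but the logical skeleton is identical to the paper's argument.
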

\begin{proof}
    By optimality, there exists some Lagrange multiplier and costate function $\lambda, \rho$ that satisfy the conditions of Theorem \ref{thm:general_mech}. At the cutoff, since $x^*$ is not constant on the entire interval, it must be that $x^*$ stops being constant at $\hat{\theta}$. Complementary slackness \eqref{cond:comp_slackness} implies that $\rho(\hat{\theta}) = 0$. Integrating out the costate evolution \eqref{cond:costate_evol} gives 
    \[ \lambda \Psi_x(\ubar{x},\hat{\theta}) = F(\hat{\theta}) \]
    Since we assumed $f$ was differentiable, $x^*$ is continuous at $\hat{\theta}$ but the right derivative must be negative; hence, $\dot{\rho}(\hat{\theta}) = 0$ since complementary slackness implies $\rho$ must be constant at zero for some interval above $\hat{\theta}$. Using the costate evolution \eqref{cond:costate_evol} and taking the upper limit $\theta \to \hat{\theta}$, we get 
    \[  \lambda \Psi_{x\theta}(\ubar{x},\hat{\theta}) = f(\hat{\theta}) \]
    Substituting out $\lambda$ gives \eqref{eqn:threshold}.
\end{proof}

To interpret this condition, 
rewrite \eqref{eqn:threshold} as \[ \Psi_x(\ubar{x}, \hat{\theta}) = \Psi_{x\theta}(\ubar{x}, \hat{\theta}) \frac{F(\hat{\theta})}{f(\hat{\theta})} \]
Under this rewriting, the condition is equivalent to the marginal benefit from paying the threshold type to be equal to the marginal information rent cost demanded.

It is possible that the principal always awards the entire $T$ to the agent, regardless of the type. The following result characterizes when this happens. Let $\bar{x}$ be such that $T = \Psi(\bar{x}, \bar{\theta}) $ (that is, the highest action that least efficient type $\bar{\theta}$ would be willing to take if given the entire budget $T$). Then
\begin{proposition}[Complete Pooling]\label{prop:all_pooling}
    The optimal mechanism pools all types (that is, the optimal $x$ is constant) if and only if for all $\theta$, 
    \[ \Psi_x(\bar{x}, \theta) \ge F(\theta)\Psi_x(\bar{x}, \bar{\theta}). \]
    Further, if the optimal mechanism pools all types, then $x(\theta) = \bar{x}$ and $t(\theta) = T$ for all $\theta$.
\end{proposition}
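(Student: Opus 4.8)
The plan is to reduce the claim to the characterization in Theorem~\ref{thm:general_mech}, exploiting that the optimal mechanism is unique (Corollary~\ref{corr:unique_soln}) and that the conditions listed there are both necessary and sufficient for optimality. First I would dispose of the ``further'' claim. If the optimal schedule $x^*$ is constant, say $x^* \equiv c$, then $x^*$ is trivially nonincreasing and condition (2) of Lemma~\ref{lem:general_feasible} telescopes to $\Psi(c,\bar\theta) \le T$, since $\Psi(c,\ubar{\theta}) + \int_{\ubar{\theta}}^{\bar{\theta}}\Psi_\theta(c,s)\,ds = \Psi(c,\bar\theta)$. As the objective equals $c$ on constant schedules, maximality forces $c=\bar x$, and then \eqref{cond:budget_bind} (equivalently, monotonicity of $t$ together with Corollary~\ref{corr:subsidy_schedule}) gives $t^*\equiv T$. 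Hence ``the optimal mechanism pools all types'' is equivalent to ``$x\equiv \bar x$ is the optimal schedule.''

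Next I would feed $x\equiv\bar x$ into Theorem~\ref{thm:general_mech}. This schedule is feasible ($\Psi(\bar x,\bar\theta)=T$ by definition of $\bar x$) and satisfies \eqref{cond:budget_bind} automatically, while $\dot x\equiv 0$ renders the complementary-slackness condition \eqref{cond:comp_slackness} vacuous. So, by the necessity-and-sufficiency of the theorem's conditions, $\bar x$ is optimal if and only if there exist $\lambda\ge 0$ and a nonnegative, absolutely continuous $\rho$ with $\rho(\bar\theta)=0$, $\rho(\ubar{\theta})=\lambda\Psi_x(\bar x,\ubar{\theta})$, and $\dot\rho(\theta)=\lambda\Psi_{x\theta}(\bar x,\theta)-f(\theta)$ a.e. Integrating this ODE from $\theta$ to $\bar\theta$ and using $\rho(\bar\theta)=0$ determines $\rho$ uniquely:
\[ \rho(\theta) = \lambda\bigl(\Psi_x(\bar x,\theta) - \Psi_x(\bar x,\bar\theta)\bigr) + 1 - F(\theta). \]

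Then I would impose the remaining boundary condition. Evaluating this formula at $\ubar{\theta}$, where $F(\ubar{\theta})=0$, and equating with $\lambda\Psi_x(\bar x,\ubar{\theta})$ forces $\lambda = 1/\Psi_x(\bar x,\bar\theta)$, which is well defined and strictly positive because $\Psi$ is strictly increasing in $x$. Substituting this value of $\lambda$ back, nonnegativity of $\rho$ on all of $\Theta$ becomes exactly $\Psi_x(\bar x,\theta)\ge F(\theta)\,\Psi_x(\bar x,\bar\theta)$ for all $\theta$. Conversely, if this inequality holds, then the above $\lambda$ and $\rho$ (which is $C^1$, hence absolutely continuous, with the required derivative by construction) satisfy every condition of Theorem~\ref{thm:general_mech}, so $\bar x$ is the optimal schedule by the sufficiency half of the theorem. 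Combined with the first paragraph, this establishes the equivalence and the ``further'' statement.

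The computations here are routine; the only points needing care are the bookkeeping of the integration — recognizing that $\rho(\bar\theta)=0$ fixes the additive constant while the $\ubar{\theta}$ boundary condition is what pins down $\lambda$ — and checking the sign so that ``$\rho\ge 0$'' lines up with the stated inequality rather than its reverse. One should also be explicit about invoking the sufficiency direction of Theorem~\ref{thm:general_mech}, since that is what permits concluding optimality of $\bar x$ from the mere existence of a suitable pair $(\lambda,\rho)$.
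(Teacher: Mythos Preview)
Your proof is correct and takes essentially the same approach as the paper: both invoke the necessary-and-sufficient conditions of Theorem~\ref{thm:general_mech} for the constant schedule $x\equiv\bar x$, integrate the costate evolution to obtain $\rho(\theta)=\lambda\Psi_x(\bar x,\theta)-F(\theta)$, pin down $\lambda=1/\Psi_x(\bar x,\bar\theta)$ from a boundary condition, and read off the stated inequality as equivalent to $\rho\ge 0$. Your handling of the ``further'' claim via direct maximality over constant schedules is a slight rearrangement of the paper's appeal to \eqref{cond:budget_bind}, but the substance is identical.
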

\begin{proof}
First, suppose that $x^*(\theta) = x^*$ is constant. Let $\lambda, \rho$ be the Lagrange multiplier and costate variable satisfying Theorem \ref{thm:general_mech}. By the budget binding condition \eqref{cond:budget_bind}, 
\begin{align*} 
T &= \Psi(x^*, \ubar{\theta}) + \int_{\ubar{\theta}}^{\bar{\theta}} \Psi_\theta(x^*, s) \ ds \\
&=  \Psi(x^*, \ubar{\theta}) + \Psi(x^*, \bar{\theta}) - \Psi(x^*, \ubar{\theta})  \\
&= \Psi(x^*, \bar{\theta})
\end{align*}
Hence, $\bar{x} = x^*$. By the upper bound condition \eqref{cond:upper_bound}, integrating the costate evolution equation \eqref{cond:costate_evol}, we must have 
\[\rho(\bar{\theta}) = \lambda \Psi_{x}(\bar{x}, \bar{\theta}) - 1 = 0 \]
So $\lambda = 1/\Psi_x(\bar{x}, \bar{\theta})$. Then since $\rho$ satisfies the costate evolution condition \eqref{cond:costate_evol},
\[ \rho(\theta) = \lambda \Psi_x(\bar{x}, \theta) - F(\theta) \ge 0 \]
and so the result follows.

For the converse, suppose that $\bar{\lambda} \Psi_x(\bar{x}, \theta) \ge F(\theta)$ for all $\theta$. Then construct $\rho(\theta) = \lambda \Psi_x(\bar{x}, \theta) - F(\theta)$, and define $x^*(\theta) = \bar{x}$ and $\lambda = 1/\Psi_x(\bar{x}, \bar{\theta})$. It is easy to check that these satisfy the condition of Theorem \ref{thm:general_mech}. Hence the optimal mechanism is $x(\theta) = \bar{x}$ and $t(\theta) = T$ for all $\theta$.
\end{proof}

\paragraph{Example Revisited (Emissions Reduction)} Revisit the emissions reduction example from before. As Corollary \ref{corr:subsidy_schedule} implies, the optimal mechanism for the regulator is to offer a subsidy schedule, that pays some amount to each firm depending on their emissions reduction. Proposition \ref{prop:pooling_at_top} implies that the regulator pays the maximum amount $T$ for a sufficiently large reduction, which is taken by a \textit{positive measure} of firms. Proposition \ref{prop:all_pooling} then provides conditions for which the optimal regulator's policy has a ``bang-bang'' feature; that is, the regulator never gives any partial transfer, but only pays $T$ or zero. This latter case is equivalent to providing an unconditional transfer and mandating a fixed level of emissions reduction.

\subsection{Shadow Costs}
Having established that the mechanism always admits a pooling region up to some $\hat{\theta}$, we provide some economic intuition for the optimal $x$ given $\hat{\theta}$. That is, throughout this section, suppose the optimal mechanism has a cutoff type $\hat{\theta} < \bar{\theta}$ determined by Proposition \ref{prop:cutoff_type}. We characterize the schedule above and below the cutoff, using the shadow value interpretation of the Lagrange multiplier $\lambda$.

First, define $\bar{x} \equiv x(\theta)$ for $\theta \in [\ubar{\theta}, \hat{\theta}]$. The following result intuitively shows that the optimal $\bar{x}$ solves a maximization problem for the principal, where the marginal benefit is given by the probability that the agent meets the cutoff, and the marginal cost is $\lambda \Psi(\cdot,\hat{\theta})$, the shadow cost of the transfer that must be paid to the threshold type.
\begin{corollary}\label{corr:highest_action}
    Given the optimal shadow value of money $\lambda$, the highest action $\bar{x}$ solves the following maximization:
    \[ \bar{x} = \arg \max_x \left\{ x F(\hat{\theta}) - \lambda \Psi(x,\hat{\theta}) \right \}. \]
\end{corollary}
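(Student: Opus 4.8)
### Proof Proposal for Corollary \ref{corr:highest_action}

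The plan is to derive the stated optimization from the first-order and boundary conditions of Theorem \ref{thm:general_mech}, specialized to the pooling region $[\ubar\theta, \hat\theta]$. The key observation is that the function $g(x) \equiv x F(\hat\theta) - \lambda \Psi(x, \hat\theta)$ is concave in $x$ (since $\Psi$ is convex in $x$ and $\lambda \ge 0$), so it suffices to verify that $\bar x$ satisfies the first-order stationarity condition $F(\hat\theta) - \lambda \Psi_x(\bar x, \hat\theta) = 0$, i.e. $\lambda \Psi_x(\bar x, \hat\theta) = F(\hat\theta)$.

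First I would recall from the proof of Proposition \ref{prop:cutoff_type} that at the cutoff $\hat\theta$, complementary slackness \eqref{cond:comp_slackness} forces $\rho(\hat\theta) = 0$, because $x^*$ stops being constant there. Then I would integrate the costate evolution equation \eqref{cond:costate_evol} from $\ubar\theta$ to $\hat\theta$: using the lower boundary condition \eqref{cond:lower_bound}, $\rho(\ubar\theta) = \lambda \Psi_x(x^*(\ubar\theta), \ubar\theta) = \lambda \Psi_x(\bar x, \ubar\theta)$, and since $x^* \equiv \bar x$ is constant on $[\ubar\theta, \hat\theta]$, we get
\[
0 = \rho(\hat\theta) = \rho(\ubar\theta) + \int_{\ubar\theta}^{\hat\theta} \dot\rho(s)\,ds = \lambda \Psi_x(\bar x, \ubar\theta) + \lambda\int_{\ubar\theta}^{\hat\theta} \Psi_{x\theta}(\bar x, s)\,ds - \int_{\ubar\theta}^{\hat\theta} f(s)\,ds.
\]
The middle integral telescopes to $\Psi_x(\bar x, \hat\theta) - \Psi_x(\bar x, \ubar\theta)$ by the fundamental theorem of calculus, and the last integral is $F(\hat\theta) - F(\ubar\theta) = F(\hat\theta)$ since $F(\ubar\theta) = 0$. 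This yields exactly $\lambda \Psi_x(\bar x, \hat\theta) = F(\hat\theta)$, which is the stationarity condition for $g$.

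Finally I would note that since $g$ is strictly concave in $x$ (strictly, because $\Psi_{xx} > 0$ and $\lambda > 0$ by Theorem \ref{thm:general_mech}), the stationary point is the unique global maximizer, so $\bar x = \arg\max_x \{x F(\hat\theta) - \lambda \Psi(x, \hat\theta)\}$. I do not anticipate a serious obstacle here; the only point requiring care is confirming that $\lambda > 0$ (so that the objective is genuinely strictly concave and the first-order condition is not vacuous), but this is guaranteed by Theorem \ref{thm:general_mech}, which asserts $\lambda$ is a nonnegative multiplier and whose proof rules out $\lambda \le 0$. A secondary point of care is that the argument presumes $\hat\theta < \bar\theta$ so that the cutoff is interior and $\rho(\hat\theta) = 0$ genuinely holds — but this is exactly the standing assumption of the subsection.
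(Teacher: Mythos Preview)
Your proposal is correct and follows essentially the same approach as the paper: integrate the costate evolution \eqref{cond:costate_evol} over the pooling region (using $\rho(\hat\theta)=0$ and the boundary condition \eqref{cond:lower_bound}) to obtain $\lambda\Psi_x(\bar x,\hat\theta)=F(\hat\theta)$, then observe this is the first-order condition of the stated maximization, which suffices by convexity of $\Psi$ in $x$. Your write-up is simply more explicit about the integration and about why $\lambda>0$ and $\hat\theta<\bar\theta$ are needed.
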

\begin{proof}
    Integrating out the costate evolution from Theorem \ref{thm:general_mech}, we find that 
    \[ \lambda \Psi_x(\bar{x},\hat{\theta}) - F(\hat{\theta}) = 0 \]
    which is exactly the FOC for the maximization. The FOC exactly characterizes the solution since $\Psi$ is convex in $x$. 
\end{proof}

Corollary \ref{corr:highest_action} characterizes the tradeoff for the highest action $\bar{x}$; intuitively, by increasing the highest action, the principal gets more production from $F(\hat{\theta})$ types of agents, but pays the shadow cost $\lambda \Psi(\cdot, \hat{\theta})$. As mentioned before, the value of the Lagrange multiplier $\lambda$ can be interpreted as the shadow value of the resource to the principal.

Now, we show that the types from $(\hat{\theta}, \bar{\theta}]$ intuitively face a \textit{different} shadow cost to the principal. This insight helps provide us an intuition for what the optimal mechanism looks like for types $\theta > \hat{\theta}$. 

\begin{proposition}\label{prop:virtual_budget_cost}
    Suppose $\Psi_{x\theta}(x, \theta) / f(\theta)$ is weakly increasing in $\theta$ and $\hat{\theta} < \bar{\theta}$. Then on $[\hat{\theta}, \bar{\theta}]$, the optimal $x$ satisfies the point-wise maximization:
    \begin{equation}\label{eqn:shadow_optimization}
        x(\theta) = \max_{x} \left\{ x f(\theta) - \lambda \Psi_{\theta}(x, \theta) \right\} 
    \end{equation} 
\end{proposition}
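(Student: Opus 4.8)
The whole statement reduces to a single claim: the costate $\rho$ vanishes identically on $[\hat\theta,\bar\theta]$. Granting this, \eqref{cond:costate_evol} (which holds on any interval where $\rho$ is differentiable, and a constant function is) forces $\lambda\Psi_{x\theta}(x(\theta),\theta)=f(\theta)$ for every $\theta\in[\hat\theta,\bar\theta]$. Since the maintained assumption that $\Psi_{xx}$ is increasing in $\theta$ is exactly the statement that $\Psi_\theta$ is convex in $x$, the map $x\mapsto xf(\theta)-\lambda\Psi_\theta(x,\theta)$ in \eqref{eqn:shadow_optimization} is concave in $x$, so its first-order condition $f(\theta)-\lambda\Psi_{x\theta}(x,\theta)=0$ characterizes its maximizer; hence $x(\theta)$ is that maximizer, which is the assertion. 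So all the work is in showing $\rho\equiv 0$ on $[\hat\theta,\bar\theta]$.

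First I would record the endpoint facts. We have $\rho(\bar\theta)=0$ by \eqref{cond:upper_bound}, and $\rho(\hat\theta)=0$ by the argument already used in the proof of Proposition~\ref{prop:cutoff_type}: because $\hat\theta<\bar\theta$, the optimal $x^*$ stops being constant at $\hat\theta$, so complementary slackness \eqref{cond:comp_slackness} gives $\rho=0$ on a right-neighborhood of $\hat\theta$, hence $\rho(\hat\theta)=0$ by continuity. Since $\rho$ is nonnegative and absolutely continuous, it then suffices to rule out $\rho$ attaining a positive value somewhere in $(\hat\theta,\bar\theta)$.

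The heart of the proof is the following contradiction. Suppose $\rho(\theta_0)>0$ for some $\theta_0\in(\hat\theta,\bar\theta)$, and let $(\theta_1,\theta_2)$ be the connected component of $\{\theta\in(\hat\theta,\bar\theta):\rho(\theta)>0\}$ containing $\theta_0$; by continuity of $\rho$ together with $\rho(\hat\theta)=\rho(\bar\theta)=0$, we get $\rho(\theta_1)=\rho(\theta_2)=0$. On $(\theta_1,\theta_2)$ complementary slackness \eqref{cond:comp_slackness} makes $x$ constant, say $x\equiv x_1$, so by \eqref{cond:costate_evol} we have $\dot\rho(\theta)=\lambda\Psi_{x\theta}(x_1,\theta)-f(\theta)$ there, an expression that extends continuously to $[\theta_1,\theta_2]$. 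Because $\rho$ leaves $0$ at $\theta_1$ and returns to $0$ at $\theta_2$ while remaining nonnegative, a standard sign argument gives $\dot\rho(\theta_1)\ge 0\ge\dot\rho(\theta_2)$, i.e. $\lambda\Psi_{x\theta}(x_1,\theta_1)/f(\theta_1)\ge 1\ge\lambda\Psi_{x\theta}(x_1,\theta_2)/f(\theta_2)$. But the hypothesis that $\Psi_{x\theta}(x,\theta)/f(\theta)$ is weakly increasing in $\theta$, applied at the fixed level $x=x_1$, yields the opposite comparison between $\theta_1$ and $\theta_2$. Hence $\theta\mapsto\lambda\Psi_{x\theta}(x_1,\theta)/f(\theta)$ is weakly increasing and equals $1$ at both endpoints, so it equals $1$ on all of $[\theta_1,\theta_2]$; therefore $\dot\rho\equiv 0$ on $(\theta_1,\theta_2)$, so $\rho$ is constant there and equal to its endpoint value $0$, contradicting $\rho(\theta_0)>0$. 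This establishes $\rho\equiv 0$ on $[\hat\theta,\bar\theta]$, and the reduction in the first paragraph finishes the proof.

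I expect the squeeze in the third paragraph — playing the monotone–hazard-rate-type hypothesis against the sign constraints on $\dot\rho$ at the ends of a maximal positivity interval of $\rho$ — to be the one genuinely substantive step; everything else is bookkeeping with the conditions of Theorem~\ref{thm:general_mech}, together with the observation $\rho(\hat\theta)=0$ which is already contained in the proof of Proposition~\ref{prop:cutoff_type}.
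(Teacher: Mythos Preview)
Your proof is correct and takes a genuinely different route from the paper's. The paper argues as follows: by the Bellman principle of optimality, the restriction of the optimal $x$ to $[\hat\theta,\bar\theta]$ must solve the control subproblem of maximizing $\int_{\hat\theta}^{\bar\theta}\bigl(x(\theta)f(\theta)-\lambda\Psi_\theta(x(\theta),\theta)\bigr)\,d\theta$ subject to monotonicity; the pointwise maximizer in \eqref{eqn:shadow_optimization} trivially maximizes the integrand, so it solves the subproblem provided it is feasible; and the hypothesis that $\Psi_{x\theta}(x,\theta)/f(\theta)$ is increasing in $\theta$ is used exactly to verify that this pointwise maximizer is nonincreasing in $\theta$. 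You instead work directly with the costate conditions of Theorem~\ref{thm:general_mech}: you show $\rho\equiv 0$ on $[\hat\theta,\bar\theta]$ by taking a maximal positivity interval, reading off the sign constraints on $\dot\rho$ at its endpoints, and playing these against the monotonicity hypothesis to force $\dot\rho\equiv 0$ there. The paper's argument is shorter and more conceptual, but it tacitly relies on uniqueness of the subproblem's solution to identify the optimal $x$ with the pointwise maximizer; your argument is more hands-on but avoids that appeal, deriving the first-order condition for $x(\theta)$ directly from $\dot\rho=0$. Both use the hypothesis on $\Psi_{x\theta}/f$ in the same spirit, just at different points in the logic.
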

\begin{proof}
    Recall that by definition, since pooling ends at $\hat{\theta}$, $\rho(\hat{\theta}) = 0$. By the Bellman principle of optimality, the restriction of $x$ to the domain $[\hat{\theta}, \bar{\theta}]$, must be optimal for the following control subproblem:
    \begin{align*}
& \max_{x} && \int_{ \hat{\theta}}^{\bar{\theta}} \left( x(\theta) f(\theta) - \lambda \Psi_\theta(x(\theta), \theta) \right)  \ d\theta  \\
& \textnormal{subject to} && \dot{x} = u \le 0 &&&&\textnormal{(Monotonicity)} 
\end{align*}
Note that the solution to \eqref{eqn:shadow_optimization} point-wise maximizes the objective of the subproblem. To confirm that this is optimal, it suffices to show that $x(\theta)$ satisfies the monotonicity constraint. Note that since $\Psi_\theta$ is convex, the first-order condition for $x$ gives 
\[ 1 = \lambda \frac{\Psi_{x\theta}(x, \theta)}{f(\theta)} \]
Note that the RHS is an increasing function of $x$. Since the RHS is assumed to be increasing in $\theta$ and increasing in $x$, it follows that $x(\theta)$ must be decreasing in $\theta$.
\end{proof}

To interpret the maximization problem in \eqref{eqn:shadow_optimization}, note that by increasing $x(\theta)$, the principal gains a marginal benefit of $f(\theta)$ but must pay $\Psi_\theta(x(\theta),\theta)$ more \textit{information rent} to the best types (or equivalently, must procure $\Psi_\theta(x(\theta),\theta)$ of the resource). Note that the shadow resource cost is $\lambda$, so the cost is exactly the shadow information rent cost that must be paid. 

As a remark, note that the main obstacle to point-wise maximization is the monotonicity constraint; hence, one can obtain a more general variant of Proposition \ref{prop:virtual_budget_cost} by ironing the shadow cost function $\Phi_\theta(x,\theta)/f(\theta)$ (and in fact, this is what the costate variable does in Theorem \ref{thm:general_mech}). 

In sum, the optimal mechanism can be interpreted as the principal splitting the agent types into two populations, and maximizing $x$ against different shadow cost functions. For the efficient types, the principal maximizes against the shadow production cost, $\lambda \Psi$; for the less efficient types, the principal maximizes against the shadow information rent cost, $\lambda \Psi_\theta$. 

\subsection{Separable Cost Function}
In this section, we show that for multiplicatively separable $\Psi$ the mechanism admits a closed form solution, where the shape of the solution, including the pooling region from Proposition \ref{prop:pooling_at_top}, is determined by the belief. In particular, the multiplicatively separable case is remarkable in that the size of the pooling region is \textit{independent} of $T$, which we show in Proposition \ref{prop:T_invariance}.

Firstly, we say $\Psi$ is separable in $x$ and $\theta$ if $\Psi(x,\theta) = B(\theta) \Gamma(x)$, where $\Gamma$ is convex by assumption. Without loss, since $\Psi$ is assumed increasing in $\theta$, we can renormalize the type space and redefine $\Theta$ so that $\Psi(x,\theta) = \theta \Gamma(x)$ where $\Gamma$ is a strictly convex function. In this case, we can actually provide a closed-form characterization of the shape of the schedule. Let $\tilde{f}$ denote the left derivative of $\textnormal{cav }F$, the concavification of the cumulative distribution function $F$ on $[0, \infty)$, and denote by $(\Gamma')^{-1}$ the inverse function associated with $\Gamma'$ (which is increasing by assumption).
\begin{proposition}\label{prop:separable}
    Suppose that $\Psi(x,\theta) = \theta \Gamma(x)$, where $\Gamma$ is convex. Then the optimal mechanism $(x^*, t^*)$ induces a production schedule
    \[ x^*(\theta)= (\Gamma')^{-1}\left(\frac{\tilde{f}(\theta)}{\lambda}\right) \]
    where $\lambda$ is the nonnegative Lagrange multiplier chosen so that
    \[ \ubar{\theta}\Gamma(x^*(\ubar{\theta})) + \int_{\ubar{\theta}}^{\bar{\theta}} \Gamma(x^*(\theta)) \ d\theta = T \]
    The transfer schedule is still given by Lemma \ref{lem:general_feasible}.
\end{proposition}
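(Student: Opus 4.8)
The plan is to specialize the differential-equation system of Theorem~\ref{thm:general_mech} to $\Psi(x,\theta)=\theta\Gamma(x)$, guess the pair $(x^*,\rho^*)$ suggested by the claimed closed form, verify that it satisfies all five conditions of the theorem, and then pin down $\lambda$ via the budget equation.

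First I would substitute the separable form. Since $\Psi_x=\theta\Gamma'(x)$, $\Psi_{x\theta}=\Gamma'(x)$, and $\Psi_\theta=\Gamma(x)$, conditions \eqref{cond:comp_slackness}--\eqref{cond:budget_bind} become: $\rho>0\Rightarrow\dot x=0$; $\dot\rho(\theta)=\lambda\Gamma'(x(\theta))-f(\theta)$; $\rho(\bar\theta)=0$; $\rho(\ubar{\theta})=\lambda\ubar{\theta}\,\Gamma'(x(\ubar{\theta}))$; and $\ubar{\theta}\,\Gamma(x(\ubar{\theta}))+\int_{\ubar{\theta}}^{\bar\theta}\Gamma(x(s))\,ds=T$. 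Integrating the costate equation backward from $\rho(\bar\theta)=0$ gives $\rho(\theta)=\int_\theta^{\bar\theta}\big(f(s)-\lambda\Gamma'(x(s))\big)\,ds$, and complementary slackness forces $\lambda\Gamma'(x(\theta))=f(\theta)$ on the (relatively open) set where $\rho=0$, while $\lambda\Gamma'(x(\theta))$ is constant on each interval where $\rho>0$. This is exactly the ironing structure whose solution is the concavification, so I would take the candidate $x^*(\theta)=(\Gamma')^{-1}\!\big(\tilde f(\theta)/\lambda\big)$ together with the costate $\rho^*(\theta)=\cav F(\theta)-F(\theta)$ (restricted to $[\ubar{\theta},\bar\theta]$); this $\rho^*$ is nonnegative because $\cav F\ge F$ and absolutely continuous because $\cav F$, being concave, is locally Lipschitz while $F$ is $C^1$. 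Since $\tilde f=(\cav F)'$ is nonincreasing and $(\Gamma')^{-1}$ is increasing, $x^*$ is automatically nonincreasing, hence a valid schedule.

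Then I would check the five conditions in turn. On a connected component of $\{\cav F>F\}$ the function $\cav F$ is affine, so $\tilde f$ is constant there and $\dot x^*=0$, giving \eqref{cond:comp_slackness}. Wherever $\rho^*$ is differentiable, $\dot\rho^*=\tilde f-f=\lambda\Gamma'(x^*(\theta))-f(\theta)$ by the definition of $x^*$, giving \eqref{cond:costate_evol}. The condition $\rho^*(\bar\theta)=0$ is the statement $\cav F(\bar\theta)=F(\bar\theta)=1$, which holds because $F$ is $C^1$ with $f>0$ near $\bar\theta$, so a sufficiently steep line through $(\bar\theta,1)$ capped at $1$ is a concave majorant of the extended $F$ touching it at $\bar\theta$. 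The delicate condition is the lower-boundary one \eqref{cond:lower_bound}: it requires $\cav F(\ubar{\theta})=\ubar{\theta}\,\tilde f(\ubar{\theta})$, i.e.\ that the left slope of $\cav F$ at $\ubar{\theta}$ equals the slope of the chord from the origin to $(\ubar{\theta},\cav F(\ubar{\theta}))$. This is precisely why the concavification is taken on $[0,\infty)$ rather than on $[\ubar{\theta},\bar\theta]$: since $F\equiv0$ on $[0,\ubar{\theta}]$ while $\cav F(0)=0$ and $\cav F>0$ on $(0,\bar\theta]$ (a short concavity argument), the whole interval $[0,\ubar{\theta}]$ sits in one component of $\{\cav F>F\}$, on which $\cav F$ is affine and passes through the origin, so $\cav F(\theta)=\tilde f(\theta)\,\theta$ there and evaluation at $\ubar{\theta}$ yields the identity. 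I expect this translation of the endpoint condition into the ``line through the origin'' geometry of $\cav F$ — together with the clean justification of $\cav F(0)=0$ and $\cav F(\bar\theta)=1$ for the extended $F$ — to be the main obstacle.

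Finally I would pin down $\lambda$: as $\lambda$ increases, $\tilde f(\theta)/\lambda$ decreases, so $x^*(\theta)$ decreases pointwise and the left-hand side $\ubar{\theta}\,\Gamma(x^*(\ubar{\theta}))+\int_{\ubar{\theta}}^{\bar\theta}\Gamma(x^*(s))\,ds$ is continuous and strictly decreasing in $\lambda$, so some $\lambda$ makes it equal $T$; alternatively, one can simply invoke the existence half of Theorem~\ref{thm:general_mech}, which already guarantees an admissible multiplier. For that $\lambda$, the pair $(x^*,\rho^*)$ satisfies \eqref{cond:comp_slackness}--\eqref{cond:budget_bind}, so by Theorem~\ref{thm:general_mech} and the uniqueness in Corollary~\ref{corr:unique_soln}, $x^*$ is the optimal production schedule, and the transfer schedule is the one given by Lemma~\ref{lem:general_feasible}.
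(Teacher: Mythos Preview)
Your proof is correct and takes a genuinely different route from the paper's. The paper rewrites the Lagrangian by extending $f$ to be zero on $[0,\ubar{\theta})$, which absorbs the scrap term $\lambda\ubar{\theta}\Gamma(x(\ubar{\theta}))$ into the integral over $[0,\bar\theta]$, and then invokes an external generalized ironing result (Theorem~4.4 of Toikka, 2011) to conclude that the pointwise maximizer uses the left derivative of $\cav F$. You instead stay inside the paper's own machinery: you propose the explicit costate $\rho^*=\cav F-F$ and verify the five conditions of Theorem~\ref{thm:general_mech} directly, with the key step being the geometric translation of the lower boundary condition \eqref{cond:lower_bound} into the statement that $\cav F$ is linear through the origin on $[0,\ubar{\theta}]$. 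Both approaches use the same domain-extension idea; the paper's version is shorter because it outsources the ironing argument, while yours is more self-contained and makes transparent exactly how the boundary condition $\rho(\ubar{\theta})=\lambda\ubar{\theta}\Gamma'(x(\ubar{\theta}))$ forces the concavification to start at the origin rather than at $\ubar{\theta}$. One minor wrinkle: your claim that ``$[0,\ubar{\theta}]$ sits in one component of $\{\cav F>F\}$'' is slightly off at the endpoint $0$, where $\cav F(0)=F(0)=0$; the clean statement is that $(0,\ubar{\theta}]$ lies in a component on which $\cav F$ is affine, and right-continuity of the concave $\cav F$ at the boundary point $0$ forces the affine piece through the origin.
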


Intuitively, this holds because the linearity of $\Psi$ in $\theta$ allows the separation of the monotonicity constraint into a standard ironing problem. Thus, for this specific case, the optimal schedule depends on the density of the concavified cumulative distribution function of the belief, $\tilde{f}$. 
Since Proposition \ref{prop:separable} shows that $x^*$ is an increasing transformation of $\tilde{f}$, the content of Proposition \ref{prop:separable} is that for separable cases, the shape of the optimal mechanism is entirely determined by the \textit{belief} of the principal and not by the agent's cost function. Indeed, the pooling behavior of the optimal mechanism is completely identical for $\Psi^A(x,\theta) = \theta \Gamma^A(x)$ and $\Psi^B(x,\theta) = \theta \Gamma^B(x)$ for any two convex $\Gamma^A, \Gamma^B$.

We illustrate how this insight applies to a discrete, two-type example. Suppose $\theta$ is either $1$ or $2$, and the cost function takes the form $\Psi(x,\theta) = \theta x^2$. The principal believes that $\theta = 1$ with probability $\mu$. Figure \ref{fig:separable_illustration} illustrates $\tilde{f}$; for the two-type case, whether $\mu$ is larger than or less than $1/2$ determines whether $\tilde{f}(1) = \tilde{f}(2)$ or not. By Proposition \ref{prop:separable}, $x^*$ is an increasing transformation of $\tilde{f}$, so the shape of $\tilde{f}$ also determines the shape of $x^*$. The optimal mechanism separates types 1 and 2 if and only if $\mu > 1/2$. \footnote{In our modeling assumptions, we took $F$ as continuous and differentiable. We can extend this to scenarios where $F$ is discontinuous (equivalently, $f$ admits point masses) by taking the limits of a sequence of continuous $F_n$ converging to $F$ and observing that that for the separable case, the optimizer function is continuous in $\tilde{f}$, which is continuous in $f$ in the sup norm.}

\begin{figure}
\begin{subfigure}[b]{0.48\textwidth}
\centering
\begin{tikzpicture}
\draw[thick, <->] (-3,0) -- (3,0) node[anchor=south]{$\theta$};
\draw[dotted] (-2, 3) -- (-2, 0) node[anchor=north]{$0$};
\draw[dotted] (0, 3) -- (0, 0) node[anchor=north]{$1$};
\draw[dotted] (2, 3) -- (2, 0) node[anchor=north]{$2$};
\filldraw[blue] (-2, 0) circle (2pt);
\draw[blue, ultra thick] (-2,0) -- (0,0);
\filldraw[blue] (0, 1) circle (2pt);
\draw[blue] (0, 0) circle (2pt);
\draw[blue, ultra thick] (0,1) -- (2,1);
\filldraw[blue] (2, 3) circle (2pt);
\draw[blue] (2,1) circle (2pt);
\draw[blue, ultra thick, ->] (2,3) -- (3,3);
\draw[red, ultra thick, dashed] (-2,0) -- (2,3);
\end{tikzpicture}
\caption{ $\mu \le 1/2$  } 
\label{fig:pooling_twotype}
\end{subfigure}
\begin{subfigure}[b]{0.48\textwidth}
\centering
\begin{tikzpicture}
\draw[thick, <->] (-3,0) -- (3,0) node[anchor=south]{$\theta$};
\draw[dotted] (-2, 3) -- (-2, 0) node[anchor=north]{$0$};
\draw[dotted] (0, 3) -- (0, 0) node[anchor=north]{$1$};
\draw[dotted] (2, 3) -- (2, 0) node[anchor=north]{$2$};
\filldraw[blue] (-2, 0) circle (2pt);
\draw[blue, ultra thick] (-2,0) -- (0,0);
\filldraw[blue] (0, 2.3) circle (2pt);
\draw[blue] (0, 0) circle (2pt);
\draw[blue, ultra thick] (0,2.3) -- (2,2.3);
\filldraw[blue] (2, 3) circle (2pt);
\draw[blue] (2,2.3) circle (2pt);
\draw[blue, ultra thick, ->] (2,3) -- (3,3);
\draw[red, ultra thick, dashed] (-2,0) -- (0,2.3);
\draw[red, ultra thick, dashed] (0,2.3) -- (2,3);
\end{tikzpicture}
\caption{ $\mu > 1/2$  } 
\label{fig:separating_twotype}
\end{subfigure}
\caption{The distribution function corresponding to the belief of the principal is plotted in solid lines, and the dashed line denotes the concavification. In the first case, the concavified distribution function is constant, and so the optimal mechanism pools the types. In the second case, the left-derivative of the concavified distribution function takes different values at $1$ and $2$, so the optimal mechanism separates types.}\label{fig:separable_illustration}
\end{figure}
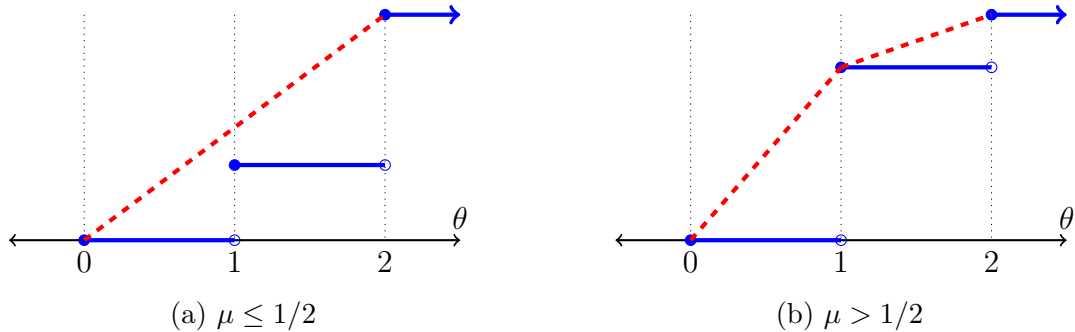

Note that for the separable case, Propositions \ref{prop:cutoff_type} and \ref{prop:all_pooling} simplify considerably:

\begin{corollary}\label{corr:sep_cutoff_type}
    Suppose $\Psi(x,\theta) = \theta \Gamma(x)$, and suppose the optimal $x^*$ is not constant on the entire interval of types. Let $\ubar{x} \equiv x^*(\ubar{\theta})$. Then the largest type $\hat{\theta}$ that receives the full transfer satisfies the following:
    \begin{align}\label{eqn:sep_threshold}
    \frac{1}{\hat{\theta}} &=  \frac{f(\hat{\theta})}{F(\hat{\theta})}
    \end{align}
\end{corollary}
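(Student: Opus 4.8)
The plan is to obtain \eqref{eqn:sep_threshold} as a direct specialization of the general threshold condition \eqref{eqn:threshold} in Proposition \ref{prop:cutoff_type}. First I would record the relevant partial derivatives of the separable cost function $\Psi(x,\theta) = \theta\Gamma(x)$: differentiating in $x$ gives $\Psi_x(x,\theta) = \theta\Gamma'(x)$, and differentiating once more in $\theta$ gives $\Psi_{x\theta}(x,\theta) = \Gamma'(x)$. Since the hypothesis of Corollary \ref{corr:sep_cutoff_type}---that the optimal $x^*$ is not constant on all of $\Theta$---is exactly the hypothesis of Proposition \ref{prop:cutoff_type}, that proposition applies, so the largest type $\hat{\theta}$ receiving the full transfer satisfies $\Psi_{x\theta}(\ubar{x},\hat{\theta})/\Psi_x(\ubar{x},\hat{\theta}) = f(\hat{\theta})/F(\hat{\theta})$, where $\ubar{x} = x^*(\ubar{\theta})$.

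Next I would evaluate the left-hand ratio using the derivatives just computed: $\Psi_{x\theta}(\ubar{x},\hat{\theta})/\Psi_x(\ubar{x},\hat{\theta}) = \Gamma'(\ubar{x})/(\hat{\theta}\,\Gamma'(\ubar{x}))$. The only point needing care is that this cancellation is legitimate, i.e. $\Gamma'(\ubar{x}) \neq 0$. This holds because $\Psi$ is strictly increasing in $x$ and, after the renormalization of the type space to positive reals, $\ubar{\theta} > 0$; hence $\Gamma$ is strictly increasing and $\Gamma' > 0$ wherever defined. Therefore the ratio equals $1/\hat{\theta}$, and equating it with $f(\hat{\theta})/F(\hat{\theta})$ yields \eqref{eqn:sep_threshold}.

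There is essentially no obstacle here beyond bookkeeping: the corollary is a one-line substitution into Proposition \ref{prop:cutoff_type}, and the only subtlety is confirming $\Gamma'(\ubar{x})$ is nonzero so that it cancels. For completeness I would also remark that \eqref{eqn:sep_threshold} shows the cutoff type is pinned down purely by a (inverse) hazard-rate condition on the belief $F$ and does not depend on the cost curvature $\Gamma$, in keeping with the theme of Proposition \ref{prop:separable} that in the separable case the shape of the optimal mechanism is governed entirely by the principal's belief.
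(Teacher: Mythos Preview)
Your proposal is correct and is exactly the approach the paper intends: the corollary is stated without proof in the paper, immediately after noting that Propositions \ref{prop:cutoff_type} and \ref{prop:all_pooling} ``simplify considerably'' in the separable case, so the intended argument is precisely the direct substitution of $\Psi_x=\theta\Gamma'(x)$ and $\Psi_{x\theta}=\Gamma'(x)$ into \eqref{eqn:threshold}. Your check that $\Gamma'(\ubar{x})>0$ (using strict monotonicity, convexity, and $\ubar{x}>0$ from the binding budget) is the only nontrivial bookkeeping, and it is handled correctly.
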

\begin{corollary}\label{corr:sep_all_pooling}
    Suppose $\Psi(x,\theta) = \theta \Gamma(x)$. Then the optimal mechanism pools all types if and only if for all $\theta$, 
    \[ \theta / \bar{\theta} \ge  F(\theta) \]
\end{corollary}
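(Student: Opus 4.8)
The plan is to deduce this directly from Proposition~\ref{prop:all_pooling} by specializing the general condition to the multiplicatively separable cost function. Proposition~\ref{prop:all_pooling} states that the optimal mechanism pools all types if and only if $\Psi_x(\bar{x}, \theta) \ge F(\theta)\Psi_x(\bar{x}, \bar{\theta})$ for all $\theta$, where $\bar{x}$ is defined by $T = \Psi(\bar{x}, \bar{\theta})$. So the only work is to rewrite this inequality when $\Psi(x,\theta) = \theta\Gamma(x)$.

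First I would compute $\Psi_x(x,\theta) = \theta\Gamma'(x)$, so that the condition of Proposition~\ref{prop:all_pooling} reads $\theta\Gamma'(\bar{x}) \ge F(\theta)\,\bar{\theta}\,\Gamma'(\bar{x})$ for all $\theta$. Next I would observe that $\Gamma'(\bar{x}) > 0$: since $\Psi$ is strictly increasing in $x$ we have $\Psi_x(\bar x,\bar\theta) = \bar\theta\,\Gamma'(\bar x) > 0$, and $\bar\theta > 0$, hence $\Gamma'(\bar x) > 0$. Dividing both sides of the inequality by $\Gamma'(\bar{x})$ (and by $\bar\theta > 0$) yields the equivalent condition $\theta/\bar{\theta} \ge F(\theta)$ for all $\theta$, which is exactly the claim. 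The ``further'' clause of Proposition~\ref{prop:all_pooling} already gives $x(\theta) = \bar{x}$, $t(\theta) = T$ in the pooling case, so nothing more is needed there.

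There is essentially no obstacle here — the result is a one-line corollary of Proposition~\ref{prop:all_pooling}. The only point requiring a moment's care is confirming $\Gamma'(\bar{x}) \neq 0$ so that the division is legitimate; this follows from the maintained assumption that $\Psi$ is strictly increasing in its first argument together with $\bar\theta > 0$, as noted above. One could alternatively note that $\bar x > 0$ whenever $T > 0$ and invoke strict convexity of $\Gamma$, but the monotonicity argument is cleaner and needs no case distinction on $T$.
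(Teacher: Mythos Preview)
Your proposal is correct and matches the paper's approach: the paper presents this corollary as an immediate simplification of Proposition~\ref{prop:all_pooling} in the separable case, without a separate proof, and your argument carries out exactly that specialization. Your extra care in justifying $\Gamma'(\bar{x}) > 0$ before dividing is appropriate and the argument goes through as written.
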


In some sense, the upper bound on the schedule implied by Proposition \ref{prop:pooling_at_top} follows from the existence of an upper bound on transfers. An initial intuition might be that the pooling region occurs because the constraint is binding, and that the binding region shrinks if $T$ is large. However, this is \textit{not} the case. The pooling arises because of the \textit{relative} informational cost of separating low and high $\theta$ to \textit{relative} production ability. In fact, for the separable case, we have the following result:

\begin{proposition}\label{prop:T_invariance}
    Suppose $\Psi(x,\theta) = \theta \Gamma(x)$. Let $\hat{\theta}$ be the largest type that receives the full transfer. Then $\hat{\theta}$ is invariant to $T$ and $\Gamma$.
\end{proposition}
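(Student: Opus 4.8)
The plan is to read the invariance straight off the closed form in Proposition~\ref{prop:separable}. Under separability the optimal schedule is $x^*(\theta) = (\Gamma')^{-1}\big(\tilde f(\theta)/\lambda\big)$, where $\tilde f$ is the left derivative of $\cav F$ and $\lambda>0$ is the multiplier pinned down by the budget-binding condition \eqref{cond:budget_bind}. The crucial point is that $\tilde f$ is a function of the prior $F$ alone: neither $T$ nor $\Gamma$ enters its definition. Since $(\Gamma')^{-1}$ is increasing and dividing its argument by $\lambda>0$ preserves order, the map $\theta\mapsto x^*(\theta)$ is a strictly increasing, hence injective, transformation of $\theta\mapsto\tilde f(\theta)$; consequently $x^*$ takes equal values at two types if and only if $\tilde f$ does, and in particular $x^*$ is constant on a subinterval of $\Theta$ if and only if $\tilde f$ is. Changing $T$ merely rescales the level of $x^*$ through $\lambda$, and changing $\Gamma$ merely reparametrizes its level through $(\Gamma')^{-1}$, but neither can move the location of the flat stretches of $x^*$.

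Next I would identify the set of types receiving the full transfer with the initial flat stretch of $x^*$. By Lemma~\ref{lem:general_feasible}, $t^*(\theta)=\Psi(x^*(\theta),\theta)+\int_\theta^{\bar\theta}\Psi_\theta(x^*(s),s)\,ds$; this is nonincreasing, bounded above by $T$, and equals $T$ at $\ubar\theta$ by \eqref{cond:budget_bind}, so $\{\theta:t^*(\theta)=T\}$ is an initial interval $[\ubar\theta,\hat\theta]$. Wherever $t^*$ is constant the term $\Psi(x^*(\cdot),\cdot)$ is constant too (the integral term is continuous in $\theta$), and since $\Psi$ is strictly increasing in $x$ this forces $x^*$ to be constant there; conversely Corollary~\ref{corr:subsidy_schedule} together with \eqref{cond:budget_bind} shows that $x^*$ being constant on an initial interval keeps $t^*$ equal to $T$ there. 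Hence $\hat\theta$ is exactly the right endpoint of the maximal initial stretch on which $x^*$, equivalently $\tilde f$, is constant: $\hat\theta=\sup\{\theta\in\Theta:\tilde f\text{ is constant on }[\ubar\theta,\theta]\}$, with the convention $\hat\theta=\bar\theta$ in the complete-pooling case. This expression involves only $\cav F$, i.e.\ only $F$, so $\hat\theta$ is independent of $T$ and $\Gamma$. As a cross-check, the same conclusion follows from Corollaries~\ref{corr:sep_cutoff_type} and~\ref{corr:sep_all_pooling}: in the non-degenerate case $\hat\theta$ solves $1/\hat\theta=f(\hat\theta)/F(\hat\theta)$, and whether the complete-pooling case obtains is governed by ``$\theta/\bar\theta\ge F(\theta)$ for all $\theta$'', and both conditions are manifestly free of $T$ and $\Gamma$.

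The only real subtlety — the step I expect to need the most care — is the bookkeeping that the ``$\hat\theta$'' named in the statement (the largest full-transfer type) is the \emph{same} object as the right endpoint of the flat region of $\tilde f$, including the degenerate complete-pooling case where that region is all of $\Theta$; one must also flag explicitly that although $\lambda$ (and hence $x^*$ itself) does depend on $T$, this dependence is only a vertical rescaling and so is harmless for the location of the flat region. Everything else is a direct appeal to Proposition~\ref{prop:separable}, Lemma~\ref{lem:general_feasible}, and Corollary~\ref{corr:subsidy_schedule}.
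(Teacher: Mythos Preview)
Your approach is the paper's: invoke Proposition~\ref{prop:separable} to see that $x^*$ is an increasing transformation of $\tilde f$, note that $\tilde f$ depends only on $F$, and conclude that the flat stretches of $x^*$ (hence $\hat\theta$) are unaffected by $T$ or $\Gamma$. You are actually more thorough than the paper's two-line proof, which only argues invariance in $T$ and silently relies on Propositions~\ref{prop:pooling_at_top}--\ref{prop:cutoff_type} for the identification of $\hat\theta$ with the right endpoint of the initial flat region of $x^*$.

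One slip to correct in your second paragraph: the claim ``wherever $t^*$ is constant the term $\Psi(x^*(\cdot),\cdot)$ is constant too'' is false. When $x^*\equiv\bar x$ on an interval, $\Psi(\bar x,\theta)$ is strictly \emph{increasing} in $\theta$ and the integral term is strictly decreasing; they offset exactly. The clean argument is to differentiate: $\tfrac{d}{d\theta}\int_\theta^{\bar\theta}\Psi_\theta(x^*(s),s)\,ds=-\Psi_\theta(x^*(\theta),\theta)$, so constancy of $t^*$ gives $\tfrac{d}{d\theta}\Psi(x^*(\theta),\theta)=\Psi_\theta(x^*(\theta),\theta)$, i.e.\ $\Psi_x\,\dot x^*=0$, hence $\dot x^*=0$. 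Alternatively, just cite Propositions~\ref{prop:pooling_at_top} and~\ref{prop:cutoff_type} (as the paper implicitly does) for the identification, and your cross-check via Corollaries~\ref{corr:sep_cutoff_type}--\ref{corr:sep_all_pooling} already suffices on its own.
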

\begin{proof}
    By Proposition \ref{prop:separable}, consider $x_T$ and $x_{T'}$, the optimal schedules for budgets $T$ and $T'$. Since $\tilde{f}$ is unchanged, and $x^*$ is an increasing transformation of $\tilde{f}$, the pooling region for $x_T$ and $x_{T'}$ are the same, and hence $\hat{\theta}$ is invariant to altering the size of the budget.
\end{proof}

That is, the size of the pooling region does not change at all as $T$ changes. This implies that the \textit{shape} of the contract is determined primarily by the belief of the principal, and the \textit{scale} of the contract is determined by the size of the budget. Hence, the existence of the pooling region at the top stems from the principal's relative tradeoff between giving up information rents and withholding a non-valuable resource. 

\subsection{Linear Subsidy Schedule}

Suppose the principal only offered some linear subsidy, paying for $x$ at a constant rate $r$. By setting such a subsidy schedule, very efficient types exhaust the budget and take $x = T/r$ and less efficient types choose some $x$ that equalizes their marginal cost with the rate: $\Psi_x(x, \theta) = r$. We will show that the mechanism design approach can always do better; that is, some feasible schedule Pareto dominates the schedule resulting from a linear subsidy.

\begin{proposition} \label{prop:linear_schedule_bad}
Let $x_r$ be the outcome schedule from a linear subsidy at rate $r$. There exists a feasible schedule $x^*$ in the mechanism design problem such that for all $\theta$, $x^*(\theta) \ge x_r(\theta)$ with inequality holding strictly on a positive measure of $\theta$.
\end{proposition}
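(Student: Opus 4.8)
The plan is to show that the linear-subsidy outcome $x_r$ is itself a feasible schedule in the mechanism design problem, and then argue that it cannot be the optimum, so the unique optimal $x^*$ from Theorem~\ref{thm:general_mech} must dominate it. First I would check feasibility of $x_r$ via Lemma~\ref{lem:general_feasible}. Monotonicity: for types with $\Psi_x(0,\theta) \ge r$ we have $x_r(\theta) = 0$, and for the others $x_r(\theta)$ solves $\Psi_x(x,\theta) = r$ (or is capped at $T/r$ when the budget binds); since $\Psi_{xx} > 0$ and $\Psi_{x\theta} > 0$ by supermodularity and convexity, the implicit solution is decreasing in $\theta$, and the cap only reinforces this, so $x_r$ is nonincreasing. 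The budget condition (2) of Lemma~\ref{lem:general_feasible} holds because under a linear subsidy each type's transfer is literally $r \cdot x_r(\theta) \le T$, and by construction the transfer formula \eqref{eqn:gen_transfers} reconstructs exactly the payment the agent would receive, which is $\le T$ for every type (the most efficient types hit $T$ exactly). So $x_r$ is feasible.

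Next I would invoke optimality and uniqueness. Since $x_r$ is feasible, the optimal value of \eqref{prblm:general} is at least $\int x_r(\theta) f(\theta)\,d\theta$, and the optimal $x^*$ achieves weakly more expected output. The key upgrade is to get \emph{pointwise} domination with strict inequality on a positive measure set, not just domination in expectation. For this I would argue that $x_r$ is \emph{not} the optimal schedule — e.g., because it fails the characterization in Theorem~\ref{thm:general_mech}: a linear subsidy generically does not produce the pooling region at the top guaranteed by Proposition~\ref{prop:pooling_at_top} (under a linear subsidy the efficient types separate smoothly via $\Psi_x = r$ unless they are all budget-capped, and if they are all capped then $x_r$ is constant, which is a different, also generically suboptimal, special case handled by Proposition~\ref{prop:all_pooling}). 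Since the optimal mechanism is unique (Corollary~\ref{corr:unique_soln}) and $x_r \ne x^*$, and both are feasible, $x^*$ strictly outperforms $x_r$ in expectation.

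To turn the expectation gap into pointwise domination, I would use a single-crossing / rearrangement argument on the convex feasible set. Consider the schedule $\tilde{x}(\theta) = \max\{x^*(\theta), x_r(\theta)\}$. I would like to show $\tilde{x}$ is still feasible — monotonicity of a pointwise max of two nonincreasing functions is immediate, and the budget constraint (2) must be checked: on the set where $\tilde x = x_r$, the integrand $\Psi_\theta(\tilde x, s)$ is at most the corresponding integrand under the all-$x_r$ schedule, and similarly where $\tilde x = x^*$, so one shows the budget integral for $\tilde x$ is dominated by a convex combination bound — here I would actually work with the transfer formula directly and use that both $x^*$ and $x_r$ satisfy (2) with the $x^*$ one binding, concluding $\tilde x$ satisfies (2) too, possibly after a uniform downward scaling that still keeps $\tilde x \ge x_r$. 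If $\tilde x$ is feasible and $\tilde x \ge x^*$ pointwise with the inequality strict on the set where $x_r > x^*$, then $\tilde x$ weakly beats $x^*$ in the objective; by uniqueness of the optimum this forces that set to have measure zero, i.e. $x^* \ge x_r$ pointwise a.e. Combined with the strict expectation gap from the previous paragraph, $x^*(\theta) > x_r(\theta)$ on a positive measure set. I expect the main obstacle to be exactly this last step: verifying that $\tilde x = \max\{x^*, x_r\}$ (or a suitably rescaled version) remains budget-feasible, since the budget constraint \eqref{cond:budget_bind}/(2) is an integral condition and taking a pointwise max could in principle inflate the integral $\int \Psi_\theta(\tilde x, s)\,ds$; the resolution should come from the fact that $x^*$'s budget binds and exploiting convexity/monotonicity of $\Psi_\theta$ together with the structure that $x_r$'s own transfers never exceed $T$ type-by-type.
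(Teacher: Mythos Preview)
Your proposal has a real gap, and it stems from a mistaken claim in Step 1 that happens to hide the direct route to the result. You assert that the transfer formula \eqref{eqn:gen_transfers} ``reconstructs exactly the payment the agent would receive'' under the linear subsidy, with the most efficient types hitting $T$. This is false: the linear subsidy and the Lemma~\ref{lem:general_feasible} transfers are two different IC implementations of the same allocation $x_r$, differing by the constant $U(\bar\theta) = r\,x_r(\bar\theta) - \Psi(x_r(\bar\theta),\bar\theta)$, which is strictly positive whenever $x_r(\bar\theta) > 0$ (by strict convexity of $\Psi(\cdot,\bar\theta)$ and $\Psi(0,\bar\theta)=0$). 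Hence the Lemma~\ref{lem:general_feasible} transfer at $\ubar\theta$ is \emph{strictly below} $T$. That strict slack is the entire content of the proposition: once condition (2) of Lemma~\ref{lem:general_feasible} holds with strict inequality for $x_r$, you can raise $x_r$ on any positive-measure set (e.g.\ uniformly, or on an interval near $\ubar\theta$) and stay feasible, giving the desired $x^*$ with $x^* \ge x_r$ pointwise and strict on a positive measure set. The paper does exactly this: it computes $t(\ubar\theta)$ from \eqref{eqn:gen_transfers} for the schedule $x_r$ and shows, via convexity of $\Psi$ in $x$, that it equals $T + \Psi(x_r(\bar\theta),\bar\theta) - \Psi_x(x_r(\bar\theta),\bar\theta)\,x_r(\bar\theta) < T$ (with a separate, easier computation when every type is budget-capped).

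Because you missed the strict slack, you detour through the optimal mechanism and the construction $\tilde x = \max\{x^*, x_r\}$. This route is not only unnecessary but genuinely incomplete: the budget integral $\Psi(\tilde x(\ubar\theta),\ubar\theta) + \int \Psi_\theta(\tilde x(s),s)\,ds$ is \emph{not} bounded by a convex combination of the two separate budget integrals, since $\Psi_\theta$ is increasing in $x$ and the max can strictly exceed both arguments' contributions simultaneously on different regions. Your ``uniform downward scaling'' fix would destroy the pointwise inequality $\tilde x \ge x_r$ that you need. The fix is to abandon the $\max$ argument entirely and prove the strict budget slack of $x_r$ directly, as above.
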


From Proposition \ref{prop:linear_schedule_bad}, we thus see that for any outcome implementable by a linear subsidy schedule, there is a feasible schedule implementable in the mechanism design problem that yields more output. So the linear subsidy schedule is dominated by the optimal outcomes that are implementable in the mechanism design problem.

\section{Linear Value}\label{sec:linear}
In the baseline model, we supposed that the principal's objective was independent of $t$, which we interpret as the principal having no value for the motivational resource. In certain applications, it may be reasonable to suppose the principal does have some value for the resource (or alternatively, obtaining the resource incurs a cost to the principal per unit of resource). 

To motivate the model, we revisit our primary application, the emissions reduction example from before.

\paragraph{Example (Emissions Reduction)} Let us return to the policymaker trying to induce a large number of firms to take reduce emissions. Now, in addition to providing a subsidies for emissions reductions, the policymaker could also use some of the funds to purchase emissions offsetting credits in a competitive market, where the market price for offsetting a unit of emissions is some constant $k$. The policymaker would like to maximize the net emissions reduction across both channels, via offsetting or reducing emissions.

Formally, consider a variation of the design problem \eqref{prblm:general}:
\begin{align}
& \max_{x, t} && \int_{\theta \in \Theta} \left[  x(\theta) - k t(\theta) \right]f(\theta) \ d\theta \label{prblm:linear} \\
& \textnormal{subject to} && t(\theta) - \Psi(x(\theta),\theta) \ge  t(\theta') - \Psi(x(\theta'),\theta) && \forall \theta, \theta'\in \Theta && \textnormal{(IC)} \notag \\
&&& t(\theta) - \Psi(x(\theta),\theta) \ge 0 && \forall \theta \in \Theta  && \textnormal{(IR)} \notag \\ 
&&& t(\theta) \le T && \forall \theta \in \Theta &&\textnormal{(B)} \notag 
\end{align}
where $k \ge 0$ is some constant. That is, the important distinction is in the objective, where there is an additional $-k t(\theta)$ term that appears. This term could correspond to the principal obtaining some value from the unused $T-t$ measure of resource, or the principal incurring some explicit cost to obtain the resource, with this problem degenerating to the original problem when $k=0$. We can characterize the optimal solution as follows:
\begin{theorem}\label{thm:linear_cost}
    An optimal mechanism $(x, t)$ for design problem \eqref{prblm:linear} exists, and it satisfies Lemma \ref{lem:general_feasible}. There exists a nonnegative Lagrange multiplier $\lambda$, a nonnegative, absolutely continuous costate function $\rho$, and a nonnegative, absolutely continuous costate $w$, such that $(x,t,\lambda,\rho,w)$ uniquely solve complementary slackness \eqref{cond:comp_slackness}, border conditions \eqref{cond:upper_bound} and \eqref{cond:lower_bound}, a modified costate evolution given by 
     \begin{equation} \label{cond:costate_linear}
         \dot{\rho}(\theta) =  \lambda \Psi_{x\theta}(x^*(\theta), \theta) - f(\theta)\left( 1 - k\left( \Psi_{x}(x^*(\theta),\theta) + \Psi_{x\theta}(x^*(\theta),\theta)\frac{F(\theta)}{f(\theta)} \right) \right) - w(\theta)
     \end{equation}   
    that holds wherever $\rho$ is differentiable, and an additional complementary slackness constraint
    \begin{equation}\label{cond:comp_slackness_x}
        w(\theta) > 0 \implies x(\theta) = 0. 
    \end{equation}
    Further, the budget binding condition \eqref{cond:budget_bind} must hold if $\lambda > 0$.
\end{theorem}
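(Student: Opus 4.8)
The plan is to mirror the three-step argument behind Theorem~\ref{thm:general_mech}: (i) use Lemma~\ref{lem:general_feasible} to reduce \eqref{prblm:linear} to an optimal control problem in $x$ alone; (ii) show the listed conditions are necessary and sufficient for the $\lambda$-relaxed control problem, via the Pontryagin maximum principle and Arrow sufficiency; and (iii) recover the multiplier $\lambda$ and conclude existence and uniqueness. The two new features relative to Theorem~\ref{thm:general_mech} are the extra $-kt(\theta)$ term in the objective and the pure state constraint $x(\theta)\ge 0$, which is exactly what the extra costate $w$ accounts for. For step (i): by Lemma~\ref{lem:general_feasible} any feasible $x$ is supported by the transfer \eqref{eqn:gen_transfers}, and since $k\ge 0$ the $-kt$ term makes the principal prefer the pointwise-smallest supporting transfer, which is \eqref{eqn:gen_transfers}; so we substitute it into the objective. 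Applying Fubini, $\int_{\ubar\theta}^{\bar\theta} f(\theta)\int_\theta^{\bar\theta}\Psi_\theta(x(s),s)\,ds\,d\theta = \int_{\ubar\theta}^{\bar\theta}\Psi_\theta(x(s),s)F(s)\,ds$, so the objective becomes $\int_{\ubar\theta}^{\bar\theta}\big[x(\theta)f(\theta)-k\Psi(x(\theta),\theta)f(\theta)-k\Psi_\theta(x(\theta),\theta)F(\theta)\big]\,d\theta$. Dualizing the budget constraint (in the rewritten form of Lemma~\ref{lem:general_feasible}(2)) with multiplier $\lambda\ge 0$, writing monotonicity as $\dot x=u\le 0$ and admissibility as the state constraint $x\ge 0$, we obtain a control problem with running payoff $g(x,\theta)=xf(\theta)-k\Psi(x,\theta)f(\theta)-k\Psi_\theta(x,\theta)F(\theta)-\lambda\Psi_\theta(x,\theta)$, initial payoff $-\lambda\Psi(x(\ubar\theta),\ubar\theta)$, and both endpoints free.

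For step (ii), with Hamiltonian $H=g(x,\theta)+\rho u$ and the state constraint adjoined through $w\ge 0$, maximizing $H$ over $u\le 0$ forces $\rho\ge 0$ and yields complementary slackness \eqref{cond:comp_slackness}; the adjoint equation $\dot\rho=-g_x-w$ is, after computing $g_x=f(\theta)-k\Psi_x f(\theta)-k\Psi_{x\theta}F(\theta)-\lambda\Psi_{x\theta}$, exactly \eqref{cond:costate_linear}; the transversality conditions at the free endpoints give $\rho(\bar\theta)=0$ (equation~\eqref{cond:upper_bound}) and $\rho(\ubar\theta)=\lambda\Psi_x(x(\ubar\theta),\ubar\theta)$ (equation~\eqref{cond:lower_bound}, the extra term coming from the initial payoff); and the state-constraint complementary slackness is \eqref{cond:comp_slackness_x}. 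For sufficiency, on $\{\rho\ge 0\}$ the maximized Hamiltonian equals $g(x,\theta)$, which is concave in $x$ because $\Psi$ and $\Psi_\theta$ are convex in $x$ and $k,f,F,\lambda\ge 0$ — the exact analogue of the Arrow condition used for Theorem~\ref{thm:general_mech} — and the admissible set $\{x\ge 0\}$ and the initial payoff $-\lambda\Psi(\cdot,\ubar\theta)$ are also concave, so the state-constrained version of Arrow's sufficiency theorem (as in \cite{ks71}) shows any admissible trajectory meeting these conditions solves the $\lambda$-relaxed problem.

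For step (iii), for fixed $\lambda\ge 0$ existence of a solution to \eqref{cond:comp_slackness}, \eqref{cond:upper_bound}, \eqref{cond:lower_bound}, \eqref{cond:costate_linear}, \eqref{cond:comp_slackness_x} follows as in Theorem~\ref{thm:general_mech} — the ODE system admits a solution with additional pooling intervals wherever the unconstrained trajectory would violate monotonicity and a terminal interval of zeros wherever it would go negative — and by step (ii) this $x_\lambda$ is the unique optimum of the $\lambda$-relaxed problem (uniqueness arguing as for Corollary~\ref{corr:unique_soln}). Let $\beta(\lambda)=\Psi(x_\lambda(\ubar\theta),\ubar\theta)+\int_{\ubar\theta}^{\bar\theta}\Psi_\theta(x_\lambda(s),s)\,ds$ be the budget used; since the relaxed value $V(\lambda)$ is convex in $\lambda$ (a supremum of affine functions) with $V'(\lambda)=T-\beta(\lambda)$, $\beta$ is continuous and nonincreasing, and $\beta(\lambda)\to 0$ as $\lambda\to\infty$ (any schedule with $\beta(x)>0$ has relaxed payoff $\to-\infty$ while $x\equiv 0$ gives payoff $0$, forcing $\beta(x_\lambda)\to 0$, using $\Psi(0,\cdot)=\Psi_\theta(0,\cdot)=0$). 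Hence either $\beta(0)\le T$, so $\lambda^*=0$ is valid and the budget need not bind, or $\beta(0)>T$, and the intermediate value theorem gives $\lambda^*>0$ with $\beta(\lambda^*)=T$, which is \eqref{cond:budget_bind}; Lagrangian duality then lifts optimality in the relaxed problem at $\lambda^*$ to optimality in \eqref{prblm:linear}, and $x$ being the unique optimal mechanism pins down $\lambda^*,\rho,w$. The main obstacle is the rigorous handling of the pure state constraint $x\ge 0$: checking the constraint qualification, ensuring $w$ and $\rho$ remain absolutely continuous rather than acquiring atoms (which should hold since, $x$ being nonincreasing, the constraint binds only on an interval $[\theta_0,\bar\theta]$ at the top), and invoking the state-constrained Pontryagin and Arrow theorems in exactly the right form; establishing continuity and monotonicity of $\beta$ also leans on the uniqueness of $x_\lambda$.
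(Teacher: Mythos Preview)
Your proposal is correct and follows essentially the same approach as the paper's (very terse) proof: substitute the transfers from Lemma~\ref{lem:general_feasible}, integrate by parts to obtain the Hamiltonian $\mathcal H = xf(\theta)-\lambda\Psi_\theta(x,\theta)-k\Psi(x,\theta)f(\theta)-k\Psi_\theta(x,\theta)F(\theta)+\rho u$, and apply Pontryagin with the state constraint $x\ge 0$ adjoined via $w$. Your write-up is considerably more detailed than the paper's three-sentence sketch --- in particular your justification that the minimal supporting transfer is the one substituted (since $k\ge 0$), your treatment of the pure state constraint, and your intermediate-value argument for $\lambda$ are all left implicit in the paper.
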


A couple of comments on the modifications are in order. Note that the altered costate evolution features two new terms. First, 
\begin{equation}
   \psi(x,\theta)=  \Psi_x(x,\theta) + \Psi_{x\theta}(x,\theta) \frac{F(\theta)}{f(\theta)}
\end{equation}
is the \textit{virtual} marginal cost of the agent, similarly to the Myersonian virtual value. The second, $w$, is a complementary slackness variable that arises because the lower bound on $x$ can bind and types of agents can be excluded in this case.

\subsection{Pooling}
Given the additional term in the objective of the problem, we show that some features of the baseline solution still hold. Firstly, we have the following analogues of Proposition \ref{prop:pooling_at_top} and \ref{prop:cutoff_type}:
\begin{proposition}\label{prop:linear}
    If $\lambda > 0$, there exists a $\hat{\theta} > \ubar{\theta}$ such that the mechanism is constant on $[\ubar{\theta},\hat{\theta}]$. The transfer $t^*$ is equal to $T$ on this interval. Further, if $\hat{\theta} < \bar{\theta}$, then $\hat{\theta}$ satisfies the following implicit equation:
    \begin{equation}\label{eqn:threshold_linear}
        \frac{\Psi_{x\theta}(\ubar{x}, \hat{\theta})}{\Psi_x(\ubar{x},\hat{\theta})\left( 1 - k \Psi_{x}(\ubar{x},\hat{\theta}) \right)} = \frac{f(\hat{\theta})}{F(\hat{\theta})}
    \end{equation}
    where $\ubar{x} = x(\ubar{\theta})$.
\end{proposition}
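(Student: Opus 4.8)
The plan is to mimic the baseline proofs of Propositions~\ref{prop:pooling_at_top} and~\ref{prop:cutoff_type}, but reading off the characterization in Theorem~\ref{thm:linear_cost} in place of Theorem~\ref{thm:general_mech}: complementary slackness \eqref{cond:comp_slackness}, boundary conditions \eqref{cond:upper_bound}--\eqref{cond:lower_bound}, the modified costate law \eqref{cond:costate_linear}, the new slackness condition \eqref{cond:comp_slackness_x}, and budget binding \eqref{cond:budget_bind} (which is available since we assume $\lambda>0$). The genuinely new work relative to the baseline is (a) checking that the new multiplier $w$ vanishes on the pooling region, and (b) carrying the extra $k$-dependent terms of \eqref{cond:costate_linear} through the integration step and the limiting argument at the right edge of the pooling region.

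\textbf{Pooling at the top.} Because $\lambda>0$, \eqref{cond:budget_bind} gives $t^*(\ubar{\theta})=T>0$; since $\Psi(0,\theta)\equiv 0$ (hence $\Psi_\theta(0,\theta)\equiv 0$) and $x^*$ is nonincreasing, this forces $x^*(\ubar{\theta})>0$, and then strict monotonicity together with convexity of $\Psi$ in $x$ gives $\Psi_x(x^*(\ubar{\theta}),\ubar{\theta})>0$, so \eqref{cond:lower_bound} yields $\rho(\ubar{\theta})>0$. Continuity of $\rho$ and \eqref{cond:comp_slackness} then make $x^*$ constant on a right-neighborhood of $\ubar{\theta}$, and Corollary~\ref{corr:subsidy_schedule} with \eqref{cond:budget_bind} makes $t^*\equiv T$ there. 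Let $\hat\theta$ be the largest type with $t^*(\hat\theta)=T$; since $t^*$ is continuous and nonincreasing (as $\dot{t}^*(\theta)=\Psi_x(x^*(\theta),\theta)\dot{x}^*(\theta)\le 0$ by \eqref{eqn:gen_transfers} and monotonicity of $x^*$), $\{t^*=T\}=[\ubar{\theta},\hat\theta]$ with $\hat\theta>\ubar{\theta}$; on this interval $x^*>0$ (else $t^*$ would equal zero there), so $\dot{t}^*=0$ forces $\dot{x}^*=0$, i.e.\ $x^*\equiv\ubar{x}:=x^*(\ubar{\theta})$ on $[\ubar{\theta},\hat\theta]$; and $\ubar{x}>0$ with \eqref{cond:comp_slackness_x} gives $w\equiv 0$ on $[\ubar{\theta},\hat\theta]$. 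This proves the first two claims.

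\textbf{The threshold equation (case $\hat\theta<\bar\theta$).} First record two boundary facts: $\rho(\hat\theta)=0$ (otherwise $\rho>0$ just above $\hat\theta$ makes $x^*$, hence $t^*$, constant there by \eqref{cond:comp_slackness} and Corollary~\ref{corr:subsidy_schedule}, contradicting maximality of $\hat\theta$); and $\{\rho=0\}\cap(\hat\theta,\hat\theta+\varepsilon)$ has positive measure for all small $\varepsilon>0$ (otherwise $\rho>0$ a.e.\ there, again making $x^*$ constant on $[\hat\theta,\hat\theta+\varepsilon]$, a contradiction, using absolute continuity of $x^*$). Integrating \eqref{cond:costate_linear} over $[\ubar{\theta},\hat\theta]$, where $x^*\equiv\ubar{x}$ and $w\equiv 0$, and using $\rho(\hat\theta)=0$, $\rho(\ubar{\theta})=\lambda\Psi_x(\ubar{x},\ubar{\theta})$, $F(\ubar{\theta})=0$, the identity $\int_{\ubar{\theta}}^{\hat\theta}\Psi_{x\theta}(\ubar{x},\theta)\,d\theta=\Psi_x(\ubar{x},\hat\theta)-\Psi_x(\ubar{x},\ubar{\theta})$, and the product-rule identity $\int_{\ubar{\theta}}^{\hat\theta}\bigl(f\,\Psi_x+F\,\Psi_{x\theta}\bigr)(\ubar{x},\theta)\,d\theta=F(\hat\theta)\Psi_x(\ubar{x},\hat\theta)$, the $\Psi_x(\ubar{x},\ubar{\theta})$ terms cancel and one obtains
\[ \lambda\,\Psi_x(\ubar{x},\hat\theta)=F(\hat\theta)\bigl(1-k\,\Psi_x(\ubar{x},\hat\theta)\bigr) \]
(so in particular $1-k\Psi_x(\ubar{x},\hat\theta)>0$). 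Next, on the positive-measure set above $\hat\theta$ where $\rho=0$, we have $\dot\rho=0$ a.e.\ and $w=0$ (since $x^*$ is near $\ubar{x}>0$), so \eqref{cond:costate_linear} reads $0=\lambda\Psi_{x\theta}(x^*(\theta),\theta)-f(\theta)\bigl(1-k\psi(x^*(\theta),\theta)\bigr)$ there; letting $\theta\downarrow\hat\theta$ and using continuity of $x^*$ and of the relevant derivatives of $\Psi$, $f$, $F$, and expanding $\psi$, this limits to
\[ \Psi_{x\theta}(\ubar{x},\hat\theta)\bigl(\lambda+kF(\hat\theta)\bigr)=f(\hat\theta)\bigl(1-k\,\Psi_x(\ubar{x},\hat\theta)\bigr). \]
Substituting $\lambda$ from the first display (which makes $\lambda+kF(\hat\theta)=F(\hat\theta)/\Psi_x(\ubar{x},\hat\theta)$) into the second and rearranging gives exactly \eqref{eqn:threshold_linear}.

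\textbf{Main obstacle.} The one delicate point is the same as in the baseline Proposition~\ref{prop:cutoff_type}: rigorously controlling the costate at the right edge of the pooling region, i.e.\ establishing that the monotonicity constraint is slack on a set accumulating at $\hat\theta$ from the right so that $\dot\rho=0$ there and \eqref{cond:costate_linear} may be taken to the limit as $\theta\downarrow\hat\theta$. I expect the measure-theoretic observation in the previous paragraph (leaning on absolute continuity of $x^*$ and $\rho$, Corollary~\ref{corr:subsidy_schedule}, and maximality of $\hat\theta$) to settle this; everything else is bookkeeping — the $w$-term drops out wherever $x^*>0$, and the new $k$-terms integrate exactly via the product rule for $F\Psi_x$.
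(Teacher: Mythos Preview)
Your proof is correct and follows essentially the same route as the paper's: you invoke the boundary condition \eqref{cond:lower_bound} with $\lambda>0$ to get $\rho(\ubar{\theta})>0$ and a pooling region, integrate the costate evolution \eqref{cond:costate_linear} over that region (with $w\equiv 0$ there) to obtain $\lambda\Psi_x(\ubar{x},\hat\theta)=F(\hat\theta)(1-k\Psi_x(\ubar{x},\hat\theta))$, and then combine this with $\dot\rho=0$ at the right edge to reach \eqref{eqn:threshold_linear}. Your treatment is in fact slightly more careful than the paper's in justifying $\dot\rho(\hat\theta)=0$ via the positive-measure argument and in explicitly disposing of $w$, but the architecture of the argument is the same.
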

\begin{proof}
    If $\lambda > 0$, since the budget must bind, $x(\ubar{\theta}) > 0$, and so it follows that $\rho(\ubar{\theta}) > 0$, so complementary slackness implies that $x$ must be constant and positive. Integrating out the new costate evolution, we get that since $x(\ubar{\theta}) > 0$, $w(\theta) = 0$, and so 
    \[ \rho(\theta) = \lambda \Psi_x(\ubar{x},\theta) - F(\theta) (1 - k \Psi_x(\ubar{x},\theta))  \]
    At the threshold $\hat{\theta}$, $\rho(\hat{\theta}) = 0$, and by a similar argument as Proposition \ref{prop:cutoff_type}, $\dot{\rho}(\hat{\theta}) = 0$, so 
    \[ \lambda \Psi_{x\theta}(\ubar{x}, \hat{\theta}) = f(\hat{\theta})\left( 1 - k\left( \Psi_{x}(\ubar{x},\hat{\theta}) + \Psi_{x\theta}(\ubar{x},\hat{\theta})\frac{F(\hat{\theta})}{f(\hat{\theta})} \right) \right) \]
    \[ \lambda + F(\hat{\theta})k =  \frac{f(\hat{\theta})}{\Psi_{x\theta}(\ubar{x}, \hat{\theta})}\left( 1 - k \Psi_{x}(\ubar{x},\hat{\theta}) \right) \]
    Combining, we get \eqref{eqn:threshold_linear}.
\end{proof}
With a linear value, the pooling region depends on whether the budget constraint binds. For sufficiently high values of $k$ or high values of $T$, the optimal schedule may be such that the principal's optimal schedule in the absence of the budget constraint is actually feasible. Note that this could not happen in the baseline model, where the budget always binds; since the principal did not obtain any value from the resource in the baseline case, the baseline optimization problem without a budget constraint is unbounded (giving infinite resource to the agent). 

Also, with the principal having a value for the resource, it is possible that a positive measure of types are excluded from the mechanism. We provide a sufficient condition for such exclusion to happen.

\begin{proposition}
Suppose ${\Psi_{x\theta}(x, \theta)}/{f(\theta)}$ is increasing in $\theta$ and $k > 1/\psi(0, \bar{\theta})$. Then the optimal mechanism excludes a positive measure of types.
\end{proposition}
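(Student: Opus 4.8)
The plan is to argue by contradiction, using the costate characterization of Theorem~\ref{thm:linear_cost}. Suppose the optimal mechanism excludes only a null set of types. Since $x^*$ is nonincreasing by Lemma~\ref{lem:general_feasible}, this forces $x^*(\theta)>0$ for every $\theta\in[\ubar{\theta},\bar{\theta})$, and the complementary-slackness condition \eqref{cond:comp_slackness_x} then gives $w(\theta)=0$ for a.e.\ $\theta$. On this range the costate equation \eqref{cond:costate_linear} reduces, a.e., to $\dot{\rho}(\theta)=\lambda\Psi_{x\theta}(x^*(\theta),\theta)-f(\theta)\bigl(1-k\,\psi(x^*(\theta),\theta)\bigr)$.

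The next step is to integrate this backward from the boundary condition $\rho(\bar{\theta})=0$ (condition \eqref{cond:upper_bound}), using absolute continuity of $\rho$, to obtain $\rho(\theta)=\int_{\theta}^{\bar{\theta}}\bigl[f(s)\bigl(1-k\,\psi(x^*(s),s)\bigr)-\lambda\Psi_{x\theta}(x^*(s),s)\bigr]\,ds$. The crux is that this integrand is strictly negative for $s$ near $\bar{\theta}$. Indeed, $s\mapsto\psi(0,s)=\Psi_x(0,s)+\Psi_{x\theta}(0,s)F(s)/f(s)$ is continuous (as $\Psi\in C^2$, $F$ is continuous and $f>0$), so the hypothesis $k>1/\psi(0,\bar{\theta})$ extends to $\psi(0,s)>1/k$ on some left-neighborhood $(\bar{\theta}-\varepsilon,\bar{\theta}]$; moreover $\psi(\cdot,s)$ is nondecreasing in its first argument (its $x$-derivative equals $\Psi_{xx}+\Psi_{xx\theta}F/f\ge 0$ by Assumption~1), so $\psi(x^*(s),s)\ge\psi(0,s)>1/k$ and hence $1-k\,\psi(x^*(s),s)<0$; since also $\lambda\Psi_{x\theta}\ge 0$, the integrand is negative on $(\bar{\theta}-\varepsilon,\bar{\theta})$. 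Therefore $\rho(\theta)<0$ for $\theta\in(\bar{\theta}-\varepsilon,\bar{\theta})$, contradicting $\rho\ge 0$. Hence $x^*$ must vanish on a set of types, which, being a non-null upper interval of a nonincreasing function, has positive measure.

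I would also include a remark on the role of the hypothesis that $\Psi_{x\theta}(x,\theta)/f(\theta)$ is increasing in $\theta$: together with supermodularity and monotonicity of $F$, it makes the virtual marginal cost $\psi(x,\theta)$ itself increasing in $\theta$, which validates the pointwise-maximization description of $x^*$ on the non-pooled region (the analogue of Proposition~\ref{prop:virtual_budget_cost}). That description yields an alternative proof in which the excluded region is identified explicitly as $\{\theta:\psi(0,\theta)\ge 1/k\}$; but the backward-integration argument above is preferable because it sidesteps separating the $\lambda=0$ and $\lambda>0$ cases and locating the top pooling threshold.

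The only delicate point I anticipate is regularity bookkeeping: $\rho$ and $w$ are merely absolutely continuous and \eqref{cond:costate_linear} holds only a.e., so the ``backward integration'' and the sign claim on $\dot\rho$ must be phrased through the integral identity $\rho(\theta)=\int_\theta^{\bar{\theta}}(-\dot\rho(s))\,ds$, with a check that the null set on which \eqref{cond:costate_linear} or $w=0$ might fail does not affect the integral. This is routine, and I do not expect a genuine obstacle beyond it.
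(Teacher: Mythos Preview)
Your proof is correct and takes a genuinely different route from the paper's. The paper argues constructively: it writes the equation that the exclusion threshold $\tilde\theta$ must satisfy (obtained by setting $\rho=\dot\rho=0$ and $x=0$ at $\tilde\theta$), observes that under the monotonicity hypothesis on $\Psi_{x\theta}/f$ the right side is increasing while the left side $1-k\psi(0,\cdot)$ is decreasing and changes sign on $[\ubar\theta,\bar\theta]$, and invokes the intermediate value theorem to locate $\tilde\theta<\bar\theta$. Your argument instead integrates the costate backward from $\rho(\bar\theta)=0$ and shows directly that $\rho$ would dip below zero near $\bar\theta$ if no types were excluded, contradicting $\rho\ge 0$. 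Your route is cleaner in two respects: it does not rely on the monotonicity of $\Psi_{x\theta}/f$ (as you correctly note, that hypothesis is used only for the alternative pointwise-maximization description), and it avoids the step---left implicit in the paper---of passing from ``the threshold equation has an interior solution'' to ``the optimal mechanism actually excludes.'' What the paper's approach buys in exchange is an explicit identification of the cutoff $\tilde\theta$, whereas yours only certifies its existence.
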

\begin{proof}Firstly, if $k > 1/\Psi_x(0, \ubar{\theta}) = 1/\psi(0, \ubar{\theta})$, it is not hard to check that the mechanism is degenerate: $x(\cdot) = t(\cdot) = 0$ (since intuitively, even the lowest cost type $\ubar{\theta}$ has a marginal cost of more than $1/k$). So suppose $k < 1/\psi(0, \ubar{\theta})$. By the costate evolution \eqref{cond:costate_linear}, on a sequence of $\theta \to \tilde{\theta}$ from the left, continuity of $x$ implies that $x$ is positive and decreasing. So $w$ must be 0 and $\rho = 0$, so  
     \[ 0 = \lambda \Psi_{x\theta}(0, \tilde{\theta}) - f(\tilde{\theta}) \left( 1 - k \left(\Psi_{x}(0, \tilde{\theta}) + \Psi_{x\theta}(0, \tilde{\theta})\frac{F(\tilde{\theta})}{f(\tilde{\theta})} \right) \right)\]
     \[    1 - k\psi(0, \tilde{\theta})   =  \lambda \frac{\Psi_{x\theta}(0, \tilde{\theta})}{f(\tilde{\theta})} \]
Note by our assumptions, the RHS is increasing, and the LHS is decreasing in $\tilde{\theta}$. Further, the LHS goes from positive to negative from $\ubar{\theta}$ to $\bar{\theta}$ and the RHS is positive, so the intermediate value theorem implies there exists a $\tilde{\theta} < \bar{\theta}$ such that the equality holds. Then the optimal mechanism excludes all types $\theta > \tilde{\theta}$. \end{proof}

Intuitively, this arises because the principal has a positive value for the resource; thus, intuitively, if some type of agent has an information rent-adjusted marginal cost of $x$ as more than $1/k$, the principal finds it best to exclude this agent.

\subsection{Comparison to Naive Benchmark}
Recall in the baseline model, the problem was unbounded without the budget constraint, since the principal did not value the resource. With a linear value for the resource $k$, the problem is no longer unbounded; thus, we can compare the optimal mechanism to a naive solution, where the principal designs the mechanism (i.e. solves \eqref{prblm:linear}) without factoring the budget constraint, and instead ``runs out'' of money if the mechanism is supposed to offer $t > T$. 

The solution to the naive problem corresponds to the problem where the Lagrange multiplier $\lambda$ is set to zero. As a result, from Theorem \ref{thm:linear_cost}, the condition pinning down $x$ when not in a flat region is missing a $\lambda \Psi_{x\theta}$ term, which implies that $x$ is larger when the budget constraint does not bind. In consequence, the budget binds at a higher type than in the optimal mechanism (i.e. the naive solution spends more of the resource in expectation), and the naive solution obtains a lower $x$ from the most efficient types of agents. Figure \ref{fig:naive_comparison} illustrates this with an example, where values are distributed uniformly on $[1,2]$, $k=0.3$, the cost function is $\Psi(x,\theta) = \theta x^2$, and the budget is $T=1$. 

\begin{figure}
     \centering
     \begin{subfigure}[b]{0.48\textwidth}
         \centering
         \includegraphics[width=\textwidth]{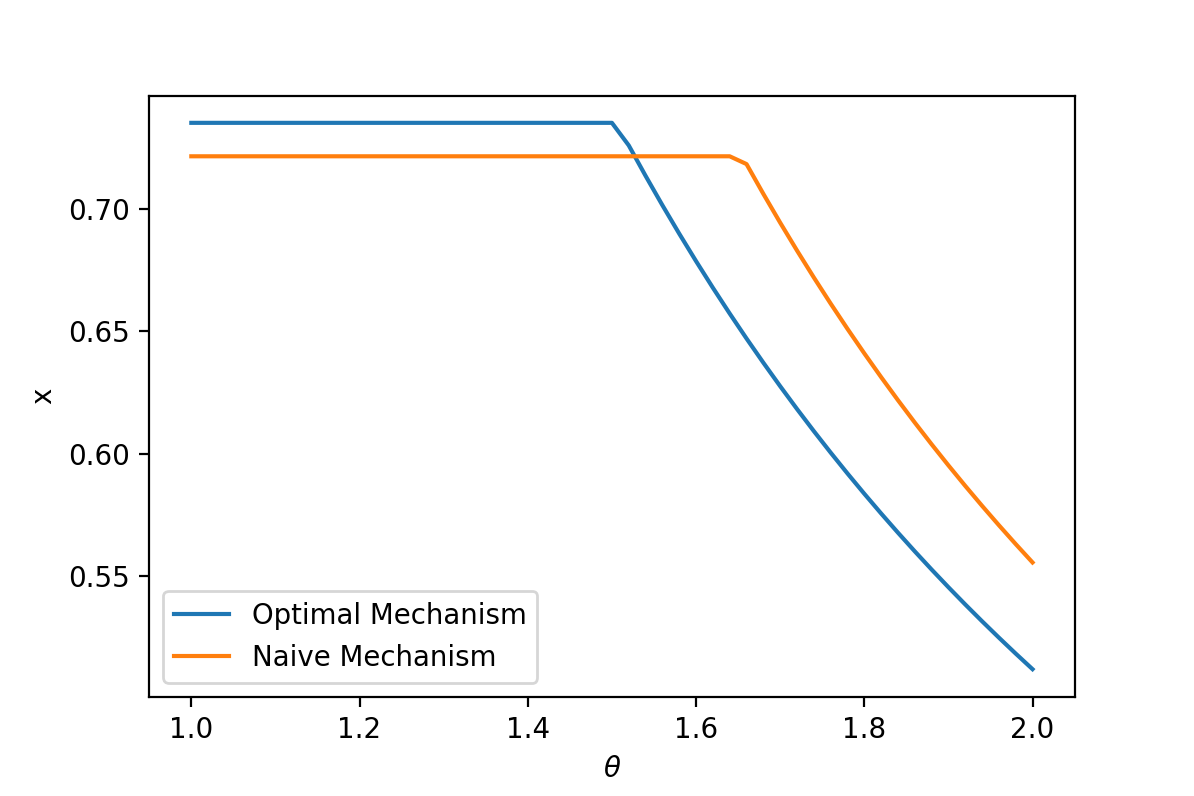}
         \caption{Action schedule $x$}
         \label{fig:naive_x}
     \end{subfigure}
     \hfill
     \begin{subfigure}[b]{0.48\textwidth}
         \centering
         \includegraphics[width=\textwidth]{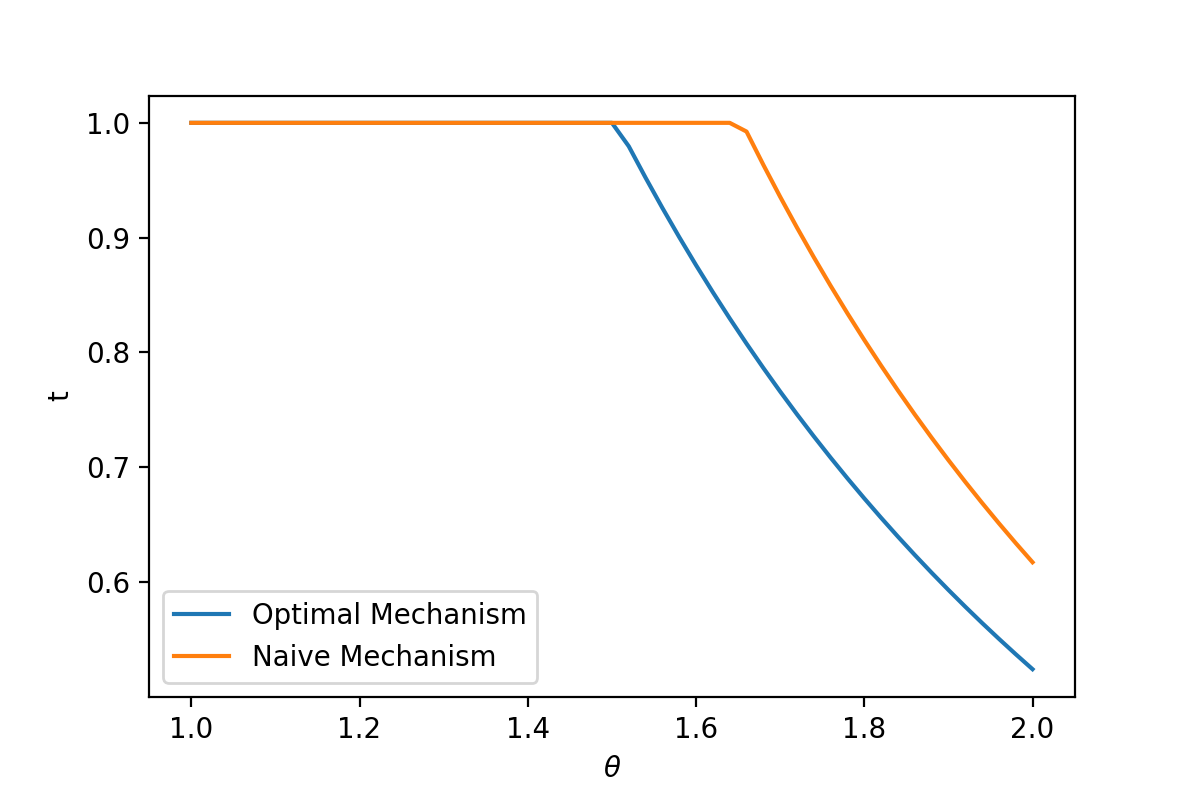}
         \caption{Transfer schedule $t$}
         \label{fig:naive_t}
     \end{subfigure}
        \caption{Comparison of naive and optimal mechanisms. Values are distributed uniformly on $[1,2]$, $k=0.3$, the cost function is $\Psi(x,\theta) = \theta x^2$, and the budget is $T=1$. The naive mechanism uses more of the resource in expectation, and obtains less $x$ from more efficient types than the optimal mechanism. }
        \label{fig:naive_comparison}
\end{figure}

\subsection{Comparative Statics}
We characterize how the optimal mechanism changes in terms of $k$ in the following result. Define $\bar{x} = x(\ubar{\theta})$; equivalently, $\bar{x}$ is the $x$ needed for the agent to get $T$ in the mechanism. 
\begin{proposition}\label{prop:linear_comparative}
As $k$ increases, $\bar{x}$ increases and $\lambda$ decreases: $d\bar{x}/dk > 0$ and $d\lambda / dk < 0$. 
\end{proposition}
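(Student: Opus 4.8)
The plan is to treat this as two separate comparative statics, working in the regime where the budget binds ($\lambda>0$) and pooling is incomplete ($\ubar{\theta}<\hat{\theta}<\bar{\theta}$); in the complementary cases the statement is degenerate (if pooling is complete the budget-binding identity forces $\Psi(\bar x,\bar\theta)=T$, so $\bar x$ depends on $T$ alone, and if $\lambda=0$ the budget is slack). I would prove $d\lambda/dk<0$ by monotone comparative statics on the Lagrangian-relaxed problem, and $d\bar x/dk>0$ by reducing Theorem~\ref{thm:linear_cost} to a two-equation system in $(\bar x,\hat\theta)$ and differentiating in $k$.

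For $d\lambda/dk<0$: fix a multiplier $\lambda$ and a value $k$, and let $x^{\lambda,k}$ be the unique maximizer of the relaxed problem (the conditions of Theorem~\ref{thm:linear_cost} other than budget-binding, for that $\lambda$). Substituting the supporting transfer from Lemma~\ref{lem:general_feasible} into the objective of \eqref{prblm:linear} and integrating the information-rent term by parts, the relaxed objective is $\int_{\ubar{\theta}}^{\bar{\theta}}\bigl(x f-k(f\Psi+F\Psi_\theta)-\lambda\Psi_\theta\bigr)\,d\theta-\lambda\Psi(x(\ubar{\theta}),\ubar{\theta})+\lambda T$, maximized over nonincreasing $x\ge0$. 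Its integrand has decreasing differences in $(x,\lambda)$ (cross term $-\Psi_{x\theta}<0$, and the boundary term contributes $-\Psi_x<0$) and in $(x,k)$ (cross term $-f\Psi_x-F\Psi_{x\theta}<0$, with no boundary dependence on $k$), so by a standard monotone-comparative-statics argument the maximizer $x^{\lambda,k}$ is pointwise nonincreasing in both $\lambda$ and $k$; since $\Psi$ and $\Psi_\theta$ increase in $x$, the budget used $B[x]=\Psi(x(\ubar{\theta}),\ubar{\theta})+\int_{\ubar{\theta}}^{\bar{\theta}}\Psi_\theta(x(s),s)\,ds$ is then nonincreasing in both. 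The optimal multiplier $\lambda(k)$ is the one with $B[x^{\lambda,k}]=T$; raising $k$ shifts $B$ down, so $\lambda$ must fall to restore equality, with strictness coming from strict monotonicity of the pooling level $\bar x=x^{\lambda,k}(\ubar{\theta})$ in $\lambda$.

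For $d\bar x/dk>0$: integrating the costate evolution \eqref{cond:costate_linear} on the pooling region with the border condition \eqref{cond:lower_bound} gives $\rho(\theta)=(\lambda+kF(\theta))\Psi_x(\bar x,\theta)-F(\theta)$ there, so $\rho(\hat\theta)=0$ and $\dot\rho(\hat\theta)=0$ yield $(\lambda+kF(\hat\theta))\Psi_x(\bar x,\hat\theta)=F(\hat\theta)$ together with the cutoff equation \eqref{eqn:threshold_linear} of Proposition~\ref{prop:linear} (whose $\ubar{x}$ is our $\bar x$); using the former to eliminate $\lambda$, the latter is a single relation $G(\bar x,\hat\theta;k)=0$. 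The second equation is budget-binding \eqref{cond:budget_bind}, which after substituting the flat region reads $H(\bar x,\hat\theta;k):=\Psi(\bar x,\hat\theta)+\int_{\hat\theta}^{\bar\theta}\Psi_\theta(x^*(s),s)\,ds-T=0$, where on $(\hat\theta,\bar\theta]$ the tail $x^*$ is the pointwise solution of $\lambda\,\Psi_{x\theta}(x,\theta)/f(\theta)+k\,\psi(x,\theta)=1$ and $\lambda=F(\hat\theta)/\Psi_x(\bar x,\hat\theta)-kF(\hat\theta)$. Differentiating $H=0$ in $k$ along the solution, the two boundary contributions at $\hat\theta$ cancel---moving the lower limit produces $-\Psi_\theta(\bar x,\hat\theta)\hat\theta'(k)$, exactly offsetting the term from differentiating $\Psi(\bar x,\hat\theta)$ in its second argument---leaving $\Psi_x(\bar x,\hat\theta)\,\bar x'(k)=-\int_{\hat\theta}^{\bar\theta}\Psi_{x\theta}(x^*(s),s)\,\tfrac{d}{dk}x^*(s)\,ds$. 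Hence $d\bar x/dk>0$ is equivalent to the tail schedule falling in $k$ on average; differentiating its defining equation shows $\tfrac{d}{dk}x^*(\theta)$ has the same sign as $-\lambda'(k)\,\Psi_{x\theta}/f-\psi$, which is negative at every $\theta$ in the tail as soon as $|\lambda'(k)|\le F(\hat\theta)$ (using $F(\theta)\ge F(\hat\theta)$ and $\psi>\Psi_{x\theta}F/f$). The remaining task would be to establish this bound on $|\lambda'(k)|$ from the identity $\lambda+kF(\hat\theta)=F(\hat\theta)/\Psi_x(\bar x,\hat\theta)$ together with $G=0$, i.e. by solving the $2\times2$ system for $\hat\theta'(k)$ and $\bar x'(k)$ and checking the sign.

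The hard part is exactly that final sign analysis: a larger $k$ pushes the whole schedule (hence $\bar x$) downward through its direct effect, while the induced drop in $\lambda$ pushes them upward, so one genuinely must show the net movement of the tail is down. What keeps this manageable is that the unsigned third derivative $\Psi_{x\theta\theta}$ cancels out of the differentiated budget equation (the $\hat\theta'(k)$ terms cancel), so the Jacobian sign reduces to the scalar bound on $|\lambda'(k)|$; the alternative I would run in parallel is a direct ODE comparison showing the optimal tail at a larger $k$ lies pointwise weakly below that at a smaller $k$. Given that the tail falls, $d\bar x/dk>0$ follows from the displayed identity and $d\lambda/dk<0$ is already in hand; together these are precisely the paper's qualitative reading---more demanded of types more efficient than $\hat\theta$, and less of those less efficient.
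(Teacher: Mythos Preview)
Your monotone-comparative-statics argument for $d\lambda/dk<0$ is correct and genuinely different from the paper's route. The paper instead sets up a three-equation system in $(\bar x,\hat\theta,\lambda)$---budget binding, $\rho(\hat\theta)=0$, and $\dot\rho(\hat\theta)=0$---computes the $3\times 3$ Jacobian, and extracts both derivatives by Cramer's rule. Your approach is cleaner for $\lambda$: decreasing differences of the relaxed objective in $(x,\lambda)$ and $(x,k)$ over the lattice of nonincreasing schedules give the sign with no Jacobian algebra at all.

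The argument for $d\bar x/dk>0$, however, has a real gap. Your budget identity $\Psi_x(\bar x,\hat\theta)\,\bar x'(k)=-\int_{\hat\theta}^{\bar\theta}\Psi_{x\theta}(x^*(s),s)\,\tfrac{d}{dk}x^*(s)\,ds$ is correct, and the $\hat\theta'(k)$ cancellation is clean. But to conclude you need the tail to fall on average, and your sufficient pointwise condition $|\lambda'(k)|\le F(\hat\theta)$ is neither established nor obviously true: from the paper's own formula one sees $-\lambda'(k)$ equals $F(\hat\theta)$ times a ratio whose position relative to $1$ depends on whether $\int\Psi_{x\theta}\,|\partial_k x^*|\,ds$ exceeds $\int\Psi_{x\theta}\,|\partial_\lambda x^*|\,ds$, and nothing in the primitives signs that comparison. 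Your fallback (``solve the $2\times 2$ system'') is circular---that system already contains $\bar x'(k)$ as an unknown, so you would be reading off the answer directly rather than via the tail---and the ODE-comparison alternative faces the same competing-effects obstruction.

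What makes the paper's route go through is precisely that it does \emph{not} first isolate $\lambda'(k)$. Because the $\hat\theta$-column of the Jacobian has zeros in the $G_1$ and $G_2$ rows, Cramer's rule collapses $d\bar x/dk$ to a $2\times 2$ minor, and the coefficient that emerges in front of $\partial_\lambda x(s)$ is $F(\hat\theta)$ itself (coming from $\partial G_2/\partial k=F(\hat\theta)\,\Psi_x$), not the endogenous $-\lambda'(k)$. That combination $F(\hat\theta)\,\partial_\lambda x(s)-\partial_k x(s)$ reduces to $\bigl(f(s)\Psi_x+(F(s)-F(\hat\theta))\Psi_{x\theta}\bigr)/\partial_x H'$, which is manifestly positive on $[\hat\theta,\bar\theta]$. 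Your ``first $\lambda$, then $\bar x$'' decomposition discards this structural coefficient and cannot be closed without effectively redoing the paper's implicit-function computation.
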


We leave the proof to the appendix; the key idea is that we identify three equations from Theorem \ref{thm:linear_cost} that pin down $(\bar{x},\hat{\theta},\lambda)$, and apply the implicit function theorem to derive the comparative statics. To illustrate the implications of the proposition, Figure \ref{fig:linear_cost} plots the optimal mechanism for $\Psi(x,\theta) = \theta x^2$, $T=2$, and $\theta$ uniformly distributed on $[0,1]$. As $k$ increases, the $x$ required to receive the full transfer $T$ in the optimal mechanism decreases, and the mechanism excludes more types. However, since $\lambda$ increases in $k$, it also is true that $x(\bar{\theta})$ decreases; that is, the principal becomes more lenient on the less efficient types, reducing $x$ for higher types. 
 
Revisiting the emissions reduction example, the comparative statics in $k$ can be interpreted as changes in the optimal subsidy schedule as the emissions offset market becomes more or less effective. The comparative statics highlight the role of the threshold $\hat{\theta}$: as the outside option increases, the policymaker becomes more demanding (asks for more aggressive emissions reductions $x$) from firms that are more efficient than the threshold, but becomes more lenient (asks for less emissions reductions $x$) from types that are less efficient than the threshold. 

\begin{figure}
    \centering
    \includegraphics[width=0.5\textwidth]{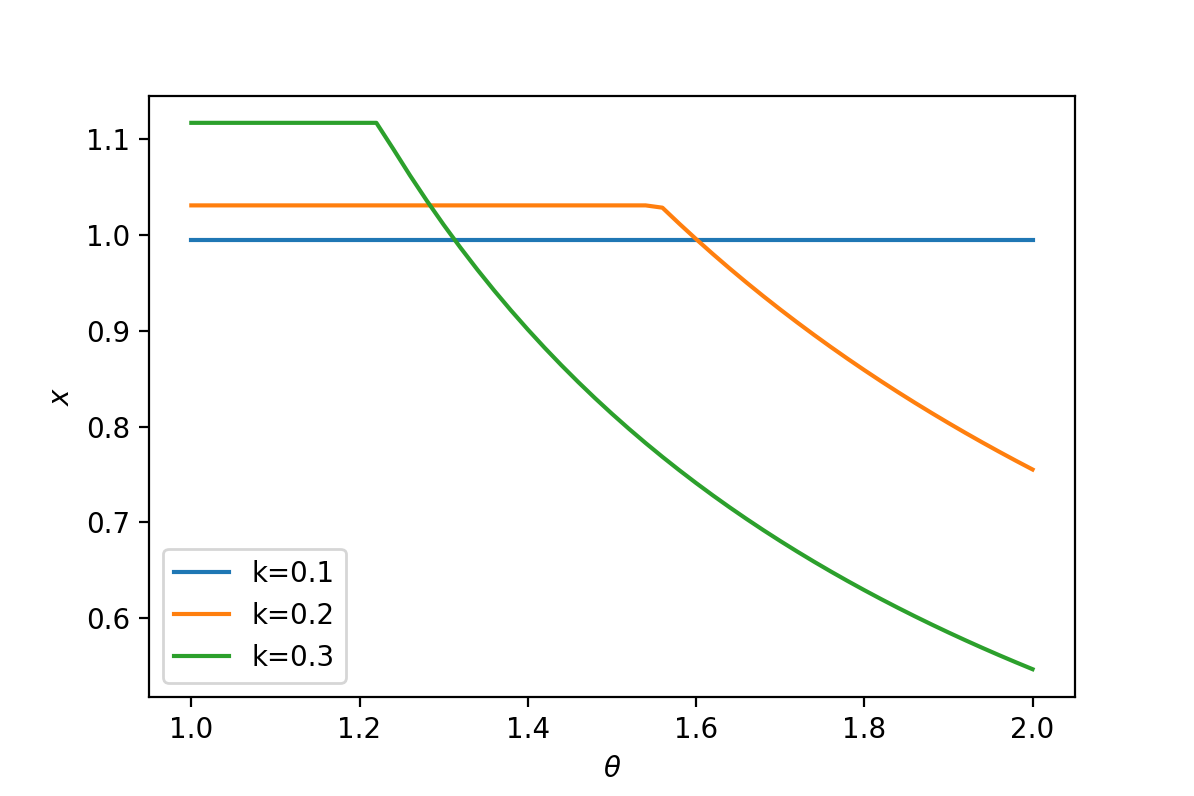}
    \caption{Optimal mechanisms for $\Psi(x,\theta) = \theta x^2$, $T=2$, $\theta$ uniformly distributed on $[1,2]$, as $k$ varies.}
    \label{fig:linear_cost}
\end{figure}

\section{Variants}
\label{sec:extensions}

In this section, we discuss some other extensions of our model. We first consider a generalization with multiple agents, and derive the control problem characterizing the optimal solution. 
We also characterize the problem under an ex-ante budget constraint, and contrast the optimal mechanism to the solution characterized by Theorem \ref{thm:general_mech}. 

\subsection{Multiple Agents}
Suppose the principal could contract with one of $N$ agents, each of whose private type is drawn independently from $F$. That is, consider the problem:
\begin{align}
& \max_{x, t} && \int_{\theta \in \Theta} 
 x(\theta) (1 - F(\theta))^{N-1} f(\theta) \ d\theta \label{prblm:multi} \\
& \textnormal{subject to} && t(\theta) - \Psi(x(\theta),\theta) \ge  t(\theta') - \Psi(x(\theta'),\theta) && \forall \theta, \theta'\in \Theta && \textnormal{(IC)} \notag \\
&&& t(\theta) - \Psi(x(\theta),\theta) \ge 0 && \forall \theta \in \Theta  && \textnormal{(IR)} \notag  \\
&&& t(\theta) \le T && \forall \theta \in \Theta && \textnormal{(B)} \notag 
\end{align}
This problem is a restricted version of a fully-general multi-agent problem; specifically, we restrict that the principal can only contract with the best (least cost type) of the agents, and must design the contract independently of the other agents' reports. Further, the incentive constraint (IC) requires the agent reporting the best-type to not want to deviate, conditional on being the best type. Note that under this specification, the problem is equivalent to the original problem \eqref{prblm:general}, under a modified type distribution $\tilde{F} = 1 - (1-F)^N$, where $\tilde{F}$ is the distribution of the lowest-cost type of $N$ draws from $F$; thus, our analysis from the baseline applies to this problem.

The design problem focuses on scenarios where the principal can only interact with a single agent; for example, consider a procurement environment where the government is interested in completing a project with a non-monetary objective (such as quality, reduced emissions, or time to completion), the firms have private operational ability, and the government can only contract with a single firm for the project.

The problem in \eqref{prblm:multi} imposes strong incentive constraints; effectivelly, the constraints require that even if all the agents colluded on reporting to maximize agent welfare, the mechanism is still incentive compatible. As such, the value of the problem from \eqref{prblm:multi} provides a lower bound on the value of the multi-agent problem with weaker incentive constraints on the agents. An alternative approach might be to consider Bayesian incentive constraints; that is, agents find it incentive compatible to report truthfully, given a prior over the types of the other agents. Mechanisms with these weakened incentive constraints can look quite strange; we construct a simple example where the optimal mechanism under these constraints is even nonmonotone.

Consider a three-type scenario with two agents; each agent's cost function is $\theta x^2$, where $\theta \in \Theta = \{1, 2, 3\}$. Fix the budget $T = 3$, and consider a scenario where the type is drawn $\theta = 3$ with probability $1-\epsilon$, $\theta = 2$ with probability $\delta \epsilon$, and $\theta = 1$ with probability $\epsilon(1-\delta)$, for $\epsilon, \delta > 0$. Intuitively, when $\epsilon$ is small, the agents are most likely $\theta = 3$, so IR binds to extract as much as possible from the most common type: hence $x(3) = 1$, and $t(3) = 3$. However, conditional on an agent not being type 3, the agent is most likely to be type 1 when $\delta$ is small. In a monotone schedule, the agent of type 2 receives information rent, because type 2 could report being type 3. However, the presence of a second agent means that the designer could dramatically decrease the information rents type $\theta=2$ receives by using the fact that type $\theta=2$ is more likely to win (i.e. be contracted with) than type 1; as a result, the designer can pay type $\theta=2$ less than the full budget (and potentially ask less than $x = 1$) and still satisfy Bayesian incentive compatibility. This has a benefit to the designer when suppressing type $\theta=2$'s information rents means a larger $x$ from type $\theta=1$. Table \ref{tab:nonmonotone} plots numerical values of the optimal BIC mechanism for this example at varying values of $\delta$; indeed, when $\delta$ is small, the optimal mechanism is non-monotone precisely to extract a larger $x(1)$ from the best type. 

The example highlights a key distinction in how an ex-post treatment of budget constraints fundamentally differs from a per-unit cost treatment. In the exercise of a multi-agent problem with BIC constraints, where we replace the budget constraint by inducing a resource cost in the objective of the principal, we find that the optimal mechanisms must be monotone (this can be shown via standard arguments; an envelope theorem can be applied to replace the transfers in the objective, and integration by parts implies that the principal allocates to the best virtual type). However, in the example provided, the reason the principal is willing to sacrifice production of the intermediate type $\theta = 2$ is precisely because the resource is inherently valueless, and so the optimal mechanism actually pays the full budget to both the $\theta = 1$ and $\theta = 3$ types. The optimal mechanism does the former because the type is more efficient, and the latter because the $\theta = 3$ type is more prevalent. Since the $\theta = 2$ type is neither, the optimal mechanism trades off production from that type in order to pay less information rent to the $\theta = 1$ type.

\begin{table}
    \centering
    \begin{tabular}{c|c|c|c}
        $\delta$ & $x(1)$ & $x(2)$ & $x(3)$ \\ \hline
        0.3 & 1.463 & 0.694 & 1 \\
        0.4 & 1.289 & 1.010 & 1 \\
        0.5 & 1.228 & 1.095 & 1 
    \end{tabular}
    \caption{Computed optimal BIC mechanisms with two agents, types in $\{1, 2, 3\}$, budget $T = 3$ and cost function $\theta x^2$. The probability of type 3 is fixed to $0.8$. The probability of type 2 conditional on not being type 3 is $\delta$. Note that for $\delta = 0.3$, the optimal mechanism is non-monotone. }
    \label{tab:nonmonotone}
\end{table}

\subsection{Ex Ante Budget Constraints}
An alternative formulation might require that the budget only needs to bind ex-ante (i.e., the expected transfer cannot exceed $T$). That is, consider the problem replacing the budget constraint (B) with
\begin{align}
&&& \int_{\ubar{\theta}}^{\bar{\theta}} t(\theta) f(\theta) \ d\theta \le T && \forall \theta \in \Theta &&\textnormal{(B')} \notag 
\end{align}
The optimal solution is characterized by the following proposition.
\begin{proposition} \label{prop:exante}
    Replacing (B) with (B'), an optimal mechanism $(x, t)$ exists, and is unique. Further, there exists a nonnegative Lagrange multiplier $\lambda$ and a nonnegative, absolutely continuous costate functions $\rho$ such that $(x,\lambda,\rho)$ is unique solution that satisfies Lemma \ref{lem:general_feasible} and complementary slackness \eqref{cond:comp_slackness} holds, with the modified costate evolution equation
    \begin{equation}\label{cond:costate_exante}
        \dot{\rho}(\theta) = f(\theta)\left(1 - \lambda \left( \Psi_x(x(\theta), \theta) + \Psi_{x\theta}(x(\theta), \theta)\frac{F(\theta)}{f(\theta)} \right) \right) 
    \end{equation}
    upper boundary condition \eqref{cond:upper_bound}, modified lower bound condition $\rho(\ubar{\theta}) = 0$, and modified budget binding condition
    \begin{gather} \label{cond:exante_budget}
         \int f(\theta) \left( \Psi(x(\theta),\theta) + \int_{\theta}^{\bar{\theta}} \Psi_\theta(x(s), s) \ ds\right) \ d\theta  = T 
    \end{gather}
\end{proposition}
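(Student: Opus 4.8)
The plan is to run the three-step argument used for Theorem~\ref{thm:general_mech}, adjusted for the weaker constraint (B'). \emph{Step 1: feasibility and reduction to a control problem.} First I would establish the ex-ante analogue of Lemma~\ref{lem:general_feasible}. The supermodularity and envelope arguments are unchanged: $x$ must be nonincreasing, interim utility satisfies $U(\theta)=U(\bar{\theta})+\int_{\theta}^{\bar{\theta}}\Psi_\theta(x(s),s)\,ds$, and (IR) holds iff $U(\bar{\theta})\ge 0$. Since the budget now only needs to hold in expectation and $U$ is nonincreasing, taking $U(\bar{\theta})=0$ minimizes $\int t f$, so without loss the transfer is \eqref{eqn:gen_transfers} and $x$ is feasible iff it is nonincreasing and $\int_{\ubar{\theta}}^{\bar{\theta}}t(\theta)f(\theta)\,d\theta\le T$. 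Dropping (B') leaves an unbounded program (any nonincreasing $x$, scaled up, is admissible since the principal does not value the resource), so at an optimum (B') binds with multiplier $\lambda>0$; this also pins down the transfer level, hence the whole mechanism, once $x$ is fixed. A Fubini step rewrites $\int_{\ubar{\theta}}^{\bar{\theta}}f(\theta)\int_{\theta}^{\bar{\theta}}\Psi_\theta(x(s),s)\,ds\,d\theta$ as $\int_{\ubar{\theta}}^{\bar{\theta}}\Psi_\theta(x(\theta),\theta)F(\theta)\,d\theta$, so (B') becomes $\int_{\ubar{\theta}}^{\bar{\theta}}\bigl[\Psi(x(\theta),\theta)f(\theta)+\Psi_\theta(x(\theta),\theta)F(\theta)\bigr]\,d\theta\le T$, a strictly convex constraint in $x$; maximizing the linear functional $\int x f$ over this strictly convex set yields uniqueness of $x$, exactly as in Corollary~\ref{corr:unique_soln}.

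\emph{Step 2: necessity and sufficiency of the stated conditions.} Fixing $\lambda$, I would form the Lagrangian $\int_{\ubar{\theta}}^{\bar{\theta}}\bigl[x(\theta)f(\theta)-\lambda\Psi(x(\theta),\theta)f(\theta)-\lambda\Psi_\theta(x(\theta),\theta)F(\theta)\bigr]\,d\theta+\lambda T$ and treat it as an optimal control problem with state $x$, control $u=\dot{x}\le 0$, and Hamiltonian $H=xf-\lambda\Psi f-\lambda\Psi_\theta F+\rho u$. Maximizing $H$ over $u\le 0$ forces $\rho\ge 0$ and complementary slackness \eqref{cond:comp_slackness}; $\dot{\rho}=-H_x$ delivers the costate law \eqref{cond:costate_exante}, in which the virtual marginal cost $\psi$ surfaces precisely because, after the Fubini step, $\Psi_\theta$ is now weighted by $F$ rather than by $f$; and since both endpoints of the state are free with no endpoint cost, transversality gives $\rho(\bar{\theta})=0$ (this is \eqref{cond:upper_bound}) and $\rho(\ubar{\theta})=0$. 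This last point is the one genuine novelty relative to Theorem~\ref{thm:general_mech}: there the term $\Psi(x(\ubar{\theta}),\ubar{\theta})$ entered the Lagrangian as an \emph{endpoint} cost (the ex-post bound being evaluated at $\ubar{\theta}$), producing $\rho(\ubar{\theta})=\lambda\Psi_x(x(\ubar{\theta}),\ubar{\theta})>0$ and hence pooling at the top; here that term sits inside the running cost with weight $f(\ubar{\theta})$, so $\rho(\ubar{\theta})=0$ and the pooling region vanishes. Since the principal derives no value from the resource, the lower bound $x\ge 0$ never binds, so no extra multiplier is needed (unlike Theorem~\ref{thm:linear_cost}). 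Necessity is Pontryagin's maximum principle; sufficiency for fixed $\lambda$ is Arrow's condition as in \cite{ks71}, which applies because $xf(\theta)-\lambda\bigl(\Psi(x,\theta)f(\theta)+\Psi_\theta(x,\theta)F(\theta)\bigr)$ is concave in $x$ for $\lambda\ge 0$; sufficiency of the outer Lagrangian layer together with the binding condition \eqref{cond:exante_budget} is Lagrangian duality, and $\lambda>0$ was established in Step~1.

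\emph{Step 3: existence, and the main obstacle.} As in Theorem~\ref{thm:general_mech}, for each $\lambda>0$ I would solve \eqref{cond:comp_slackness}, \eqref{cond:costate_exante}, \eqref{cond:upper_bound}, $\rho(\ubar{\theta})=0$ using ODE theory plus an ironing step for the monotonicity constraint, then show the induced expected transfer is continuous and monotone in $\lambda$ with range covering $T$ and apply the intermediate value theorem to obtain $\lambda^{*}$ satisfying \eqref{cond:exante_budget}; uniqueness then follows from Corollary~\ref{corr:unique_soln}. (Existence could also be obtained directly by compactness: $\int x f$ is bounded on the feasible set because convexity gives $\Psi(x,\theta)\ge\Psi_x(x_1,\ubar{\theta})(x-x_1)$ for any fixed $x_1$, and a maximizing sequence of nonincreasing schedules with $\int\Psi(x_n,\theta)f\,d\theta\le T$ has, by Helly's selection theorem, a pointwise-a.e.\ limit that is feasible by Fatou and optimal by the uniform integrability the superlinear lower bound on $\Psi$ supplies.) I expect the main obstacle to be exactly this existence step — carrying out the ironing/shooting construction for $(x_\lambda,\rho_\lambda)$ and verifying monotonicity of the expected transfer in $\lambda$ — since we do not assume $\Psi_{x\theta\theta}\ge 0$ or a monotone inverse hazard rate, so the unironed candidate defined by $1=\lambda\psi(x(\theta),\theta)$ can fail to be monotone in $\theta$ and the costate $\rho$ must genuinely perform the ironing. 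The feasibility reduction and the Pontryagin/Arrow bookkeeping, by contrast, transcribe almost verbatim from Theorem~\ref{thm:general_mech}.
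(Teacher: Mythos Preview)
Your proposal is correct and follows essentially the same route as the paper: reduce via Lemma~\ref{lem:general_feasible}, attach a Lagrange multiplier to the expected-transfer constraint, integrate by parts (your Fubini step) to obtain the running cost $\Psi f+\Psi_\theta F$, and apply Pontryagin/Arrow exactly as in Theorem~\ref{thm:general_mech}. Your write-up is in fact more complete than the paper's terse proof---you spell out why $\rho(\ubar{\theta})=0$ (the endpoint cost disappears into the running cost), why $\lambda>0$, and you sketch the existence argument---none of which the paper's proof makes explicit.
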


Notably, the lower boundary condition is modified to $\rho(\ubar{\theta}) = 0$, and hence the pooling region that appears for the ex-post budget constraint does \textit{not} appear here. Thus, the key feature of our model, the pooling region where the principal trades off between utilizing the budget less or providing information rents, does not emerge in the case where the budget is only required ex-ante.

\section{Conclusion}
\label{sec:conclusion}
 We analyze a mechanism design model where a principal interacts with an agent, where the principal has a budget of a resource it can provide as an incentive. The design problem features an ex-post budget constraint and a first-order objective maximizing the agent's action. We derive the optimal mechanism, and show that the mechanism pools sufficiently efficient types and withholds the budget from less efficient types. The pooling region threshold is crucial to understanding how the mechanism changes when the principal has a resource value; as the resource value increases, the principal demands more or less from agents depending on their type relative to the threshold. 
 
 There are several promising avenues for future research. First, one could extend our framework to dynamic incentives; adding temporal dimensions could help illustrate how optimal mechanisms evolve over time, with changing resource availability and agent behavior. Additionally, our model's applicability to various domains suggests insights for real-world implementations and empirical validations. Lastly, extending the model to allow for more general uncertainty, such as an uncertain resource value or a richer space of cost functions, could enhance the practical relevance of this paper's insights. 

\pagebreak

\bibliographystyle{agsm}
\bibliography{biblio}

\pagebreak

\appendix
\section{Appendix: Supplemental Material}

\subsection{Omitted Proofs}
\NoHyper
\begin{proof}[Proof of Lemma \ref*{lem:general_feasible}]
Consider any $x$ that satisfies the two conditions. Take $t$ given by \eqref{eqn:gen_transfers}. Since $\Psi_\theta$ is nonnegative, $t(\theta) \ge \Psi(x(\theta),\theta)$, so IR is satisfied. Now, we show that $t$ must be nonincreasing. Suppose $\theta \le \theta'$. Then 
{\small \begin{align*}
t(\theta) &= \Psi(x(\theta),\theta) + \int_\theta^{\overline{\theta}} \Psi_\theta(x(s),s) \ ds \\
&= \Psi(x(\theta),\theta) + \int_\theta^{\theta'} \Psi_\theta(x(s),s)\ ds + \int_{\theta'}^{\overline{\theta}} \Psi_\theta(x(s),s)\ ds \\
&\ge \Psi(x(\theta),\theta) + \int_\theta^{\theta'} \Psi_\theta(x(\theta'),s) \ ds + \int_{\theta'}^{\overline{\theta}} \Psi_\theta(x(s),s)\ ds \\
&= \Psi(x(\theta),\theta) + \Psi(x(\theta'),\theta') - \Psi(x(\theta'),\theta)  + \int_{\theta'}^{\overline{\theta}} \Psi_\theta(x(s),s)\ ds \\
&= \left[ \Psi(x(\theta),\theta)- \Psi(x(\theta'),\theta)\right] + t(\theta') 
\end{align*}}
where we used the fact that $x$ is nonincreasing and supermodularity of $\Psi$ in the 3rd step. Also, since $x$ is nonincreasing, and $\Psi$ is increasing in its first argument, the bracketed part is nonnegative, and so we have that $t$ is nonincreasing. Hence, to check the budget constraint, we just have to check that the budget constraint holds at $t(\underline{\theta})$:
\[ t(\underline{\theta}) = \Psi(x(\underline{\theta}),\underline{\theta}) + \int_\Theta \Psi_\theta(x(s),s) \ ds = \int_\Theta \Psi_\theta(x(s),s) \le T \]
Finally, to check that IC is satisfied, consider IC for type $\theta$. We consider the deviation of $\theta$ to $\theta'$. We first show that $\theta$ does not want to deviate up, or $\theta \le \theta'$:
{\small \begin{align*}
t(\theta) - \Psi(x(\theta), \theta) &=  \int_\theta^{\overline{\theta}} \Psi_\theta(x(s), s) \ ds \\ 
&=  \int_\theta^{\theta'} \Psi_\theta(x(s), s) \ ds + \int_{\theta'}^{\overline{\theta}} \Psi_\theta(x(s), s) \ ds \\
&= \int_\theta^{\theta'} \Psi_\theta(x(s), s) \ ds + t(\theta') - \Psi(x(\theta'), \theta') \\
&\ge \int_\theta^{\theta'} \Psi_\theta(x(\theta'), x) + t(\theta') - \Psi(x(\theta'), \theta') \\
&=  \Psi(x(\theta'),\theta') - \Psi(x(\theta'),\theta)+ t(\theta') - \Psi(x(\theta'), \theta') \\
&= t(\theta') - \Psi(x(\theta'),\theta)
\end{align*}}
Now we check that $\theta$ does not want to deviate down to type $\theta' \le \theta$:
\begin{align*}
t(\theta) - \Psi(x(\theta), \theta) &= \int_\theta^{\overline{\theta}} \Psi_\theta(x(s), s) \ ds\\ 
&= -\int_{\theta'}^\theta \Psi_\theta(x(s), s) \ ds + \int_{\theta'}^{\overline{\theta}} \Psi_\theta(x(s), s) \ ds \\
&= -\int_{\theta'}^\theta \Psi_\theta(x(s), s) \ ds + t(\theta') - \Psi(x(\theta'), \theta') \\
&\ge -\int_{\theta'}^\theta  \Psi_\theta(x(\theta'), s) \ ds + t(\theta') - \Psi(x(\theta'), \theta') \\
&= - \Psi(x(\theta'),\theta) + \Psi(x(\theta'),\theta')+ t(\theta') - \Psi(x(\theta'), \theta') \\
&= t(\theta') - \Psi(x(\theta'),\theta)
\end{align*}
So IC is satisfied. Thus, if $x$ satisfies the conditions, it is feasible.

Now, we show the two conditions are necessary. Consider any feasible $x: \Theta \to \mathbb{R}_+$. Suppose $\theta\le \theta'$, and $\theta, \theta' \in \Theta$. From rewriting the IC constraints, we get that 
    \[ \Pi(x(\theta); \theta) - \Pi(x(\theta'); \theta) \ge t(\theta') - t(\theta) \]
    \[ \Pi(x(\theta'); \theta') - \Pi(x(\theta); \theta') \ge t(\theta) - t(\theta') \]\
Note that the $\Pi_{LF}(\theta)$ and $\Pi_{LF}(\theta')$ terms will cancel on the left hand sides. Adding these together, we get 
\[ -\Psi(x(\theta), \theta) + \Psi(x(\theta'), \theta) - \Psi(x(\theta'), \theta') + \Psi(x(\theta),\theta') \ge 0  \]
Rearranging
\[ \Psi(x(\theta'), \theta) +  \Psi(x(\theta),\theta') \ge \Psi(x(\theta), \theta) + \Psi(x(\theta'), \theta')   \]
And so supermodularity of $\Psi$ implies that $x(\theta') \le x(\theta)$, and so the first condition must hold. To show that the transfers must take the form given, define the interim utility
\[ U(\theta) = t(\theta) - \Psi(x(\theta),\theta) \]
IC implies that 
\[ U(\theta) = \max_{\theta'} t(\theta') - \Psi(x(\theta'),\theta) \]
By the envelope theorem,
\[ U'(\theta) = - \Psi_\theta(x(\theta),\theta) \]
\[ U(\theta) = U(\bar{\theta}) + \int_\theta^{\bar{\theta}} \Psi_\theta(x(s),s) \ ds \]
So transfers must be of the form 
\begin{align*}
    t(\theta) &= U(\theta)  \\
    &= U(\bar{\theta})+ \Psi(x(\theta), \theta) + \int_\theta^{\bar{\theta}} \Psi_\theta(x(s),s) \ ds
\end{align*} 
In particular, the transfer for type $\ubar{\theta}$ satisfies 
\begin{align*}
    T &\ge t(\ubar{\theta}) \\
    &=  U(\bar{\theta})+ \Psi(x(\ubar{\theta}),\ubar{\theta}) + \int_{\ubar{\theta}}^{\bar{\theta}} \Psi_\theta(x(s), s) \ ds \\
    &\ge  \Psi(x(\ubar{\theta}),\ubar{\theta}) + \int_{\ubar{\theta}}^{\bar{\theta}} \Psi_\theta(x(s), s) \ ds 
\end{align*} 
since $U(\bar{\theta}) \ge 0$ by IR. Hence, the two conditions are necessary and sufficient for a feasible schedule.
\end{proof}

\begin{proof}[Proof of Theorem \ref*{thm:general_mech}]
By Lemma \ref*{lem:general_feasible}, we can rewrite the design problem as 
\begin{align}
& \max_{x} && \int_{ \Theta} x(\theta) f(\theta) \ d\theta \label{eqn:general_prob_reduced} \\
& \textnormal{subject to} && \Psi(x(\ubar{\theta}),\ubar{\theta}) + \int_{\ubar{\theta}}^{\bar{\theta}} \Psi_\theta(x(s), s) \ ds  \le T &&&& \textnormal{(Normalization)} \notag  \\
&&& x \textnormal{ nonincreasing} &&&&\textnormal{(Monotonicity)} \notag
\end{align}
The Lagrangian relaxation is
\begin{align*}
& \max_{x} && \int_{ \Theta} x(\theta) f(\theta) \ d\theta + \lambda\left(T - \Psi(x(\ubar{\theta}),\ubar{\theta}) - \int_{\ubar{\theta}}^{\bar{\theta}} \Psi_\theta(x(s), s) \ ds  \right)  \\
& \textnormal{subject to} && x \textnormal{ nonincreasing} &&&&\textnormal{(Monotonicity)} 
\end{align*}
By monotonicity, $x$ must be differentiable almost everywhere. The monotonicity constraint then requires that wherever $x$ is differentiable, $\dot{x} \le 0$. Letting $u = \dot{x}$ and rewriting the objective function, the problem becomes 
\begin{align}
& \max_{x} && \int_{ \Theta} \left( x(\theta) f(\theta) - \lambda \Psi_\theta(x(\theta), \theta) \right)  \ d\theta + \lambda T - \lambda \Psi(x(\ubar{\theta}),\ubar{\theta})  
 \label{ctrl:lagrangian} \\
& \textnormal{subject to} && \dot{x} = u \le 0 &&&&\textnormal{(Monotonicity)}  \notag
\end{align}
The problem now is an optimal control problem with a scrap value constraint; $u$ is the control variable, and $x$ is the state variable. To argue that this rewriting is without loss, we adopt the \cite{lt86} argument. The original problem optimizes over the space of nonincreasing $x$. Given an optimal solution $x^*$ to \eqref{ctrl:lagrangian} that is absolutely continuous and nonincreasing, $x^*$ must also be optimal over the original space, since the space of absolutely continuous nonincreasing functions is dense in the space of nonincreasing functions under the weak norm topology, and the maximization objective is continuous.

We first show that an optimal mechanism $(x,t)$ satisfies the conditions \eqref{cond:comp_slackness}-\eqref{cond:budget_bind}. For a fixed $\lambda > 0$, Pontryagin's maximum principle implies that any optimal solution admits a nonnegative, absolutely continuous control variable such that \eqref{cond:comp_slackness} - \eqref{cond:lower_bound} are true. Further, note that if $\lambda = 0$, \eqref{cond:lower_bound} implies that $\rho(\ubar{\theta}) = 0$, but \eqref{cond:costate_evol} implies that $\dot{\rho}(\ubar{\theta}) = - f(\ubar{\theta}) < 0$, which is impossible since $\rho$ must be nonnegative. Hence, $\lambda > 0$, so by Lagrangian duality we must have that the budget binds, so \eqref{cond:budget_bind} holds. 

Second, we show that any solution that satisfies \eqref{cond:comp_slackness}-\eqref{cond:budget_bind} must be optimal. Note that fixing $\lambda$, the maximized Hamiltonian corresponding to the control problem is concave in the state variable $x$, so the Arrow sufficiency condition holds.\footnote{See \cite{ks71} for details.} Therefore, the solution maximizes the Lagrangian and and the budget constraint \eqref{cond:budget_bind} binds, the solution must be optimal in the original design problem. 

Finally, we argue that an optimal mechanism exists (uniqueness holds from Corollary \ref{corr:unique_soln}). To show an optimal mechanism exists, we show that a solution to conditions \eqref{cond:comp_slackness}-\eqref{cond:budget_bind} exists. 
Fixing $\lambda$, the conditions \eqref{cond:comp_slackness}-\eqref{cond:lower_bound} admits a solution\footnote{More precisely, a solution in the extended sense, i.e. absolutely continuous and satisfying the differential equations almost everywhere.} $(x_\lambda, \rho_\lambda)$ by the Caratheodory existence theorem. Define $t_\lambda$ as the transfer schedule for $x_\lambda$ by equation \eqref{eqn:gen_transfers}. We now argue that a $\lambda$ exists such that \eqref{cond:budget_bind} holds. To show such a $\lambda$ exists, we argue that $t_\lambda(\ubar{\theta})$ is continuous in $\lambda$, $t_\lambda(\ubar{\theta}) \to 0$ for $\lambda \to \infty$, and $t_\lambda(\ubar{\theta}) \to \infty$ for $\lambda \to 0$; the intermediate value theorem then implies the result. 

For continuity of $t_\lambda(\ubar{\theta})$, note that taking any sequence $\{ \lambda_k \} \to \lambda$, the family of control problems \eqref{ctrl:lagrangian} admits an optimal $x_{\lambda_k}$, which must converge $x_{\lambda_k} \to x_\lambda$ weakly in $H^1$ by Theorem 3.1 from \cite{w01}. Hence, by construction of $t_\lambda$, $t_{\lambda_k}(\ubar{\theta}) \to t_\lambda(\ubar{\theta})$. 

To characterize the boundary, note that as $\lambda \to \infty$, the limit control problem is simply 
\begin{align*}
& \max_{x} && \lambda \left( \int_{ \Theta} -\Psi_\theta(x(\theta), \theta)  \ d\theta + T - \Psi(x(\ubar{\theta}),\ubar{\theta})  \right) + o(\lambda)
\\
& \textnormal{subject to} && \dot{x} = u \le 0 &&&&\textnormal{(Monotonicity)} 
\end{align*}
which has an optimal solution of $x = 0$ everywhere in the limit $\lambda \to \infty$, and hence $t_\lambda(\ubar{\theta}) \to 0$ as $\lambda \to \infty$. Finally, note that as $\lambda \to 0$, the problem becomes 
\begin{align*}
& \max_{x} &&  \int_{ \Theta} x(\theta)f(\theta) \ d\theta
\\
& \textnormal{subject to} && \dot{x} = u \le 0 &&&&\textnormal{(Monotonicity)} 
\end{align*}
which is unbounded, and hence $t_\lambda(\ubar{\theta}) \to \infty$ as $\lambda \to 0$. Thus, by the intermediate value theorem, there exists a $\lambda$ such that $t_\lambda(\ubar{\theta}) = T$. 
\end{proof}

\begin{proof}[Proof of Proposition \ref{prop:separable}]
As in the proof of Theorem \ref{thm:general_mech}, we write out the Lagrangian:
\begin{align*}
& \max_{x} && \int_{ \Theta} x(\theta) f(\theta) \ d\theta + \lambda\left(T - \ubar{\theta}\Gamma(x(\ubar{\theta})) - \int_{\ubar{\theta}}^{\bar{\theta}} \Gamma(x(s)) \ ds  \right)  \\
& \textnormal{subject to} && x \textnormal{ nonincreasing} &&&&\textnormal{(Monotonicity)} 
\end{align*}
Define $f(\theta) = 0$ for $\theta < \ubar{\theta}$. Then note that the optimization can be rewritten equivalently as 
\begin{align*}
& \max_{x} && \int_{0}^{\bar{\theta}} x(\theta) f(\theta) \ d\theta + \lambda\left(T - \int_{0}^{\bar{\theta}} \Gamma(x(s)) \ ds  \right)  \\
& \textnormal{subject to} && x \textnormal{ nonincreasing} &&&&\textnormal{(Monotonicity)} 
\end{align*}

To handle the monotonicity constraint, we use a generalized ironing technique. By Theorem 4.4 in \cite{t11}, the optimal $x(\theta)$ then maximizes $\overline{J}(\cdot, \theta; \lambda)$ at the optimal $\lambda$, where 
\[ \overline{J}(x, \theta; \lambda) = \int_{0}^s \frac{\partial}{\partial \theta^-}\textnormal{cav}_\theta\left[ F(\theta) - \lambda\theta \Gamma'(x) \right] d\theta \]
and $\textnormal{cav}_\theta$ denote the concavification operation in $\theta$, where the differentiation is taken from the left (as the concavification need not be smooth in $\theta$). Note that since $(\theta - \ubar{\theta})$ is linear in $\theta$, we can rewrite $\overline{J}$ as 
\[ \overline{J}(x, \theta; \lambda) = \int_{0}^x \frac{\partial}{\partial \theta^-}\textnormal{cav}_\theta\left[ F(\theta) \right]d\theta - \lambda \Gamma'(x)  \]
\[ = \tilde{f}(\theta) -  \lambda \Gamma'(x)  \]
Hence, the problem is reduced to a point-wise optimization at each $\theta$: inverting $\Gamma'$ implies that 
\[ x^*(\theta)= (\Gamma')^{-1}\left(\frac{\tilde{f}(\theta)}{\lambda}\right) \]
It remains to argue that a $\lambda$ exists such that the budget constraint binds. Note that as $\lambda \to 0$, $x^* \to \infty$, and as $\lambda \to \infty$, $x^* \to 0$. Hence, the intermediate value theorem implies some finite value of $\lambda$ makes the budget constraint bind.
\end{proof}
\begin{proof}[Proof of Proposition \ref{prop:linear_schedule_bad}]
We construct a transfer schedule $t$ that implements $x_r$ and then show that that transfer schedule does not exhaust the budget, implying that we could increase production on some measure of types. Let $\hat{\theta}_r$ be the largest $\theta$ such that $\Psi_x(T/r, \theta) \le r$; that is, the largest $\theta$ whose marginal cost of producing at $T/r$ is less than $r$.

First, suppose that $\hat{\theta}_r = \bar{\theta}$. Then under $r$, any type of firm produces $T/r$. So 
    \begin{align*}
    \Psi(x_r(\ubar{\theta})\ubar{\theta}) + \int_{\ubar{\theta}}^{\bar{\theta}} \Psi_\theta(x_r(\theta), \theta) \ d\theta = \Psi(T/r, \bar{\theta}) < \frac{T}{r}\Psi_x(T/r, \hat{\theta}_r) \le \frac{T}{r}r = T
    \end{align*}
    where the first inequality follows from convexity of $\Psi$ in $x$, and the third line follows from the definition of $\hat{\theta}_r$. Thus, the transfer schedule given by \eqref{eqn:gen_transfers} implements $x_r$ without exhausting the budget.

    Now, suppose $\hat{\theta}_r < \bar{\theta}$. Then 
    \begin{align}
    \Psi(x_r(\ubar{\theta})\ubar{\theta}) + \int_{\ubar{\theta}}^{\bar{\theta}} \Psi_\theta(x_r(\theta), \theta) \ d\theta =& \int_{\underline{\theta}}^{\hat{\theta}_r} \Psi_\theta (T/r, \hat{\theta}_r) d\theta + \int_{\hat{\theta}_r} \Psi_\theta(x_r(\theta), \theta) d \theta \notag \\
    =& \Psi(T/r, \hat{\theta}_r) +\int_{\hat{\theta}_r} \left[ \Psi_\theta(x_r(\theta), \theta) + \Psi_x(x_r(\theta), \theta) x_r'(\theta)\right] d \theta \notag \\& - \int_{\hat{\theta}_r} \Psi_x(x_r(\theta), \theta) x_r'(\theta) d\theta \notag \\
    =& \Psi(T/r, \hat{\theta}_r) +\left[ \Psi(x_r(\overline{\theta}), \overline{\theta})-\Psi(x_r(\hat{\theta}_r), \hat{\theta}_r)\right] - \int_{\hat{\theta}_r} \Psi_x(x_r(\theta), \theta) x_r'(\theta) d\theta \notag\\
    =& \Psi(x_r(\overline{\theta}), \overline{\theta})- r \int_{\hat{\theta}_r} x_r'(\theta) d\theta \label{eqn:linear_step_1}\\
    =& \Psi(x_r(\overline{\theta}), \overline{\theta}) - r \left(x_r(\overline{\theta}) - x_r(\hat{\theta}_r)\right) \notag \\
    =& \Psi(x_r(\overline{\theta}), \overline{\theta}) - r \left(x_r(\overline{\theta}) - T/r\right) \notag\\
    =& T + \Psi(x_r(\overline{\theta}), \overline{\theta}) - \Psi_x(x_r(\overline{\theta}),\overline{\theta})x_r(\overline{\theta}) < T \notag
    \end{align}
    where \eqref{eqn:linear_step_1} uses the fact that the agent equalizes marginal cost with the subsidy rate, $\Psi_x(x, \theta) = r$.
\end{proof}

\begin{proof}[Proof of Theorem \ref{thm:linear_cost}]
The analysis proceeds exactly as in Theorem \ref{thm:general_mech}; in this case, we plug in the transfers from Lemma \ref{lem:general_feasible} into the objective and integrate by parts. The Hamiltonian becomes 
\[ \mathcal{H} = x f(\theta)  - \lambda \Psi_\theta(x, \theta) - k \Psi(x,\theta) f(\theta) - k \Psi_\theta(x, \theta) F(\theta) + \rho u \]
Applying Pontryagin, the costate evolution \eqref{cond:costate_linear} follows.
\end{proof}

\begin{proof}[Proof of Proposition \ref{prop:linear_comparative}]
    Since ${\Psi_{x\theta}(x, \theta)}/{f(\theta)}$ is weakly increasing in $\theta$, it is easy to check that the $x$ that solves $\dot{\rho} = 0$ is nonincreasing. 
    First, define the subproblem on $[\hat{\theta}, \bar{\theta}]$ with no initial conditions:
    \begin{align}
& \max_{x, t} && \int_{\hat\theta}^{\bar\theta} \left[  x(\theta) - k \left(\Psi(x(\theta), \theta) +  \int_{\theta}^{\bar{\theta}} \Psi_\theta(x(s), s) \ ds \right) - \lambda \Psi_\theta(x(\theta),\theta)\right]f(\theta) \ d\theta \label{prblm:linear_subproblem} \\
& \textnormal{subject to} && x \textnormal{ nonincreasing} &&&&\tag{Monotonicity} 
\end{align}
    Note that any solution to the conditions in Theorem \ref{thm:linear_cost} restricted to $[\hat{\theta}, \bar{\theta}]$ also solve the Pontryagin necessary conditions for this subproblem, and hence by Arrow's sufficiency theorem, the subsolution must also optimize this subproblem. 
    
    Then by Proposition \ref{prop:linear}, the objects $\bar{x}$, the threshold type $\hat{\theta}$, and $\lambda$ jointly solve the following three equations: 
    \begin{align*}
        G_1(\bar{x},\hat{\theta},\lambda; k) &= \Psi(\bar{x},\hat{\theta}) + \int_{\hat{\theta}}^{\bar{\theta}}\Psi_\theta(x(s),s) \ ds - T &= 0 \\
        G_2(\bar{x},\hat{\theta},\lambda; k) &=\lambda \Psi_x(\bar{x},\hat{\theta}) - F(\hat{\theta})(1 - k \Psi_x(\bar{x}, \hat{\theta}))  &= 0 \\
        G_3(\bar{x},\hat{\theta},\lambda; k) &= \lambda \Psi_x(\bar{x},\hat{\theta}) +  k( \Psi_x(\bar{x}, \hat{\theta}) f(\hat{\theta}) + \Psi_{x\theta}(\bar{x}, \hat{\theta})F(\hat{\theta})) - f(\hat{\theta}) &= 0 
    \end{align*}
    where $x(s)$ solves the point-wise first-order condition from the maximized Hamiltonian of the subproblem \eqref{prblm:linear_subproblem} at the optimal $u$:
    \[ H'(x(s),s,\lambda;k) = \lambda \Psi_x(x(s),s) +  k( \Psi_x(x(s), s) f(s) + \Psi_{x\theta}(x(s), s)F(s)) - f(s) = 0 \]
    We apply the implicit function theorem to find the derivatives. The Jacobean of the system evaluated at $(\bar{x},\hat{\theta},\lambda)$ is 
    \[ J = \begin{bmatrix}
        \Psi_x & 0 & \int_{\hat{\theta}}^{\bar{\theta}} \Psi_{x\theta}(x(s),s) \frac{dx(s)}{d\lambda} \ ds \\
        (\lambda + k F)\Psi_{xx} & 0 & \Psi_x \\
        k(\Psi_{xx} f + \Psi_{xx\theta}F) + \lambda \Psi_{xx\theta} & k(2\Psi_{x\theta}f + \Psi_x f' + \Psi_{x\theta\theta}F) + \lambda \Psi_{x\theta\theta} - f' &  \Psi_{x\theta}   
        \end{bmatrix} \]
        where the functions are understood to be evaluated at $(\bar{x},\hat{\theta},\lambda)$ where appropriate. The Jacobean is invertible since the partial $\partial G_1/\partial \lambda$ is negative because $dx(s)/d\lambda$ is negative, and the other nonzero entries are positive. Note that 
        \[ \frac{\partial G_1}{\partial k} = \int_{\hat{\theta}}^{\bar{\theta}} \Psi_{x\theta}(x(s),s) \frac{dx(s)}{dk} \ ds < 0 \]
        \[ \frac{\partial G_2}{\partial k} = \Psi_x F \]
        Then from the implicit function theorem, we have 
        \[ \frac{d\bar{x}}{dk} = \frac{\Psi_x \int_{\hat{\theta}}^{\bar{\theta}} \Psi_{x\theta}(x(s),s) \left(  \frac{dx(s)}{d\lambda}F(\hat{\theta}) -\frac{dx(s)}{dk}  \right) \ ds}{(\Psi_x)^2 - (\lambda + kF)\Psi_{xx} \int_{\hat{\theta}}^{\bar{\theta}} \Psi_{x\theta}(x(s),s) \frac{dx(s)}{d\lambda} \ ds} \]
        \[ \frac{d\lambda }{dk} = \frac{ (\lambda + kF)\Psi_{xx} F\int_{\hat{\theta}}^{\bar{\theta}} \Psi_{x\theta}(x(s),s) \frac{dx(s)}{dk}\ ds - (\Psi_x)^2 F}{(\Psi_x)^2 - (\lambda + kF)\Psi_{xx} \int_{\hat{\theta}}^{\bar{\theta}} \Psi_{x\theta}(x(s),s) \frac{dx(s)}{d\lambda} \ ds} \]
        From the observation that $dx(s)/d\lambda < 0$, it follows that the denominators are strictly positive, so the sign of each of these derivatives is the same as the sign of the numerator. It follows then that $d\lambda / dk < 0$, since $dx(s)/dk$ is negative. To determine the sign of $d\bar{x}/dk$, we apply the implicit function theorem for $dx(s)/d\lambda$ and $dx(s)/dk$, and obtain that:
        \begin{align*} &\int_{\hat{\theta}}^{\bar{\theta}} \Psi_{x\theta}(x(s),s) \left(  \frac{dx(s)}{d\lambda}F(\hat{\theta}) -\frac{dx(s)}{dk}  \right) \ ds \\
        &= \int_{\hat{\theta}}^{\bar{\theta}} \Psi_{x\theta}(x(s),s) \left(  \frac{- \frac{\partial H}{\partial \lambda }}{\frac{\partial H}{\partial x}}F(\hat{\theta}) - \frac{-\frac{\partial H}{\partial k}}{\frac{\partial H}{\partial x}}  \right) \ ds \\
        &= \int_{\hat{\theta}}^{\bar{\theta}} \Psi_{x\theta}(x(s),s) \left(  \frac{- \Psi_x(x(s),s)}{\frac{\partial H}{\partial x}}F(\hat{\theta}) + \frac{\Psi_x(x(s), s) f(s) + \Psi_{x\theta}(x(s), s)F(s)}{\frac{\partial H}{\partial x}}  \right) \ ds\\
        &= \int_{\hat{\theta}}^{\bar{\theta}} \Psi_{x\theta}(x(s),s) \left(  \frac{\Psi_x(x(s), s) f(s) + \Psi_{x\theta}(x(s), s)(F(s) - F(\hat{\theta}))}{\frac{\partial H}{\partial x}}  \right) \ ds
        \end{align*}
        Since $\partial H /\partial x$ is positive, and $s \ge \hat{\theta}$ it follows that this quantity is positive; hence $d\bar{x}/dk$ is positive.
\end{proof}

\begin{proof}[Proof of Proposition \ref{prop:exante}]
    By Lemma \ref{lem:general_feasible}, we can reduce (IC) and (IR) into a monotonicity constraint and a transfer schedule. Adding a Lagrange multiplier $\lambda$ to the budget constraint, the control problem becomes
    \begin{align*}
& \max_{x} && \int_{ \Theta} x(\theta) f(\theta)  \ d\theta + \lambda \left( T -  \int_{\ubar{\theta}}^{\bar{\theta}} \left(\Psi(x(\theta), \theta) +  \int_{\theta}^{\bar{\theta}} \Psi_\theta(x(s), s) \ ds  \right) f(\theta) \ d\theta \right)    \\
&&& \dot{x} = u \le 0 
\end{align*}
Integrating by parts, and writing the integral in one term, we get
\begin{align*}
& \max_{x} && \int_{ \Theta} \left( x(\theta)- \lambda \left( \Psi(x(\theta), \theta) + \Psi_\theta(x(\theta), \theta)\frac{F(\theta)}{f(\theta)} \right) \right)  f(\theta)  \ d\theta + \lambda T   \\
&&& \dot{x} = u \le 0 
\end{align*}
Applying Pontryagin and the Arrow sufficiency result, we get the conditions in the proposition. 
\end{proof}

\end{document}